\newcommand{\tw}{{\mathtt{tw}}}
\newcommand{\pw}{{\mathtt{pw}}}
\newcommand{\vc}{{\mathtt{vc}}}
\newcommand{\vi}{{\mathtt{vi}}}
\newcommand{\td}{{\mathtt{td}}}
\newcommand{\below}{\operatorname{below}}
\newcommand{\lca}{\operatorname{lca}}
\newcommand{\ECNFADNF}{\exists 3 \operatorname{CNF}\forall\operatorname{DNF}}
\newcommand{\ut}{{\tt ut}}   %utility
\newcommand{\PP}{{\mathcal P}}  %partition
\title{Core Stability in Additively Separable Hedonic Games of Low Treewidth} 
\author{Tesshu Hanaka}{Kyushu University, Fukuoka, Japan }{hanaka@inf.kyushu-u.ac.jp}{https://orcid.org/0000-0001-6943-856X}{Partially supported by JSPS KAKENHI Grant Numbers JP21H05852, JP21K17707, JP22H00513, JP23H04388.}%TODO mandatory, please use full name; only 1 author per \author macro; first two parameters are mandatory, other parameters can be empty. Please provide at least the name of the affiliation and the country. The full address is optional. Use additional curly braces to indicate the correct name splitting when the last name consists of multiple name parts.
\author{Noleen K\"{o}hler}{University of Leeds, Leeds, UK}{scsnk@leeds.ac.uk}{https://orcid.org/0000-0002-1023-6530}{Partially supported by ANR project ANR-18-CE40-0025-01 (ASSK).}
\author{Michael Lampis}{Universit\'{e} Paris-Dauphine, PSL University, CNRS UMR7243, LAMSADE, Paris, France}{michail.lampis@lamsade.dauphine.fr}{https://orcid.org/0000-0002-5791-0887}{Supported by ANR project ANR-21-CE48-0022 (S-EX-AP-PE-AL).}
\authorrunning{T. Hanaka, N. K\"{o}hler and M. Lampis} %TODO mandatory. First: Use abbreviated first/middle names. Second (only in severe cases): Use first author plus 'et al.'
\keywords{Hedonic games, Treewidth, Core stability} %TODO mandatory; please add comma-separated list of keywords
\begin{document}

\maketitle

\begin{abstract}
Additively Separable Hedonic Game (ASHG) are coalition-formation games where we are given a graph whose vertices represent $n$ selfish agents and the weight of each edge $uv$ denotes how much agent $u$ gains (or loses) when she is placed in the same coalition as agent $v$. We revisit the computational complexity of the well-known notion of core stability of ASHGs, where the goal is to construct a partition of the agents into coalitions such that no group of agents would prefer to diverge from the given partition and form a new (blocking) coalition. Since both finding a core stable partition and verifying that a given partition is core stable are intractable problems ($\Sigma_2^p$-complete and coNP-complete respectively) we study their complexity from the point of view of structural parameterized complexity, using standard graph-theoretic parameters, such as treewidth.

We begin by presenting several results on \textsc{Core Stability Verification} (CSV) indicating that this is an unusually intractable problem, even in very restricted cases: \textsc{CSV} remains coNP-complete on graphs of vertex cover 2; \textsc{CSV} is coW[1]-hard parameterized by vertex integrity when edge weights are polynomially bounded; and \textsc{CSV} is coW[1]-hard parameterized by tree-depth even if all weights are from $\{-1,1\}$. We complement these results with essentially matching algorithms and one of the rare tractability results we present is that \textsc{CSV} is FPT parameterized by the treewidth $\tw$ plus the maximum degree $\Delta$ (improving a previous algorithm's dependence from $2^{O(\tw\Delta^2)}$ to $2^{O(\tw\Delta)}$).

We then move on to study \textsc{Core Stability} (CS), which one would naturally expect to be even harder than \textsc{CSV}. We confirm this intuition by showing that \textsc{CS} is $\Sigma_2^p$-complete even on graphs of bounded vertex cover. On the positive side, we consider the parameterization by $\tw+\Delta$ and improve the known algorithm (which was based on Courcelle's theorem) via an explicit algorithm utilizing a reduction to $\exists\forall$-\textsc{SAT}. The running time of our algorithm is, unfortunately, double-exponential in $\tw+\Delta$. However, we show that this is likely to be optimal, as the existence of an algorithm with less than double-exponential dependence on $\tw$ (even for bounded-degree instances) would contradict the Exponential Time Hypothesis (ETH).

Finally, we consider another natural parameter for these problems: the size $k$ of the considered blocking (diverging) coalitions. Fixing $k$ to be constant does lower the complexities of $\textsc{CSV}$ and $\textsc{CS}$ to P and NP-complete respectively, since one can then consider all coalitions of size $k$ in polynomial time. Unfortunately, we show that a more efficient algorithm is unlikely to exist, as \textsc{CSV} is coW[1]-hard parameterized by $k$ (even on unweighted graphs), while \textsc{CS} is NP-complete for all $k\ge 3$ (even on graphs of bounded degree with bounded edge weights).
\end{abstract}

%\newpage

\section{Introduction}

\emph{Coalition formation games} model situations where a group of selfish agents
need to be partitioned into teams (coalitions) in such a way that takes into
account their preferences. Because such games capture a vast array of
interesting scenarios, they have been a subject of intense study in
computational social choice and the social sciences at large.  One particularly
interesting and natural special case of such games is when the preferences of
each agent only depend on the other agents that she is placed together with in
the same coalition (and not on the placement of agents on other coalitions).
Such games are referred to in the literature as \emph{hedonic games} and have
also attracted much interest from the computer science perspective
(\cite{AloisioFV20,AzizBBHOP19,BarrotOSY19,BarrotY19,BoehmerE20,0001BW21,BullingerK21,FanelliMM21,IgarashiOSY19,OhtaBISY17,SliwinskiZ17}),
thanks in part to their numerous applications in, for example, social network
analysis \cite{Olsen09}, scheduling group activities \cite{DarmannEKLSW18}, and
allocating tasks to wireless agents \cite{SaadHBDH11}. For more information we
refer the reader to \cite{Cechlarova16} and the relevant chapters of standard
computational social choice texbooks \cite{AzizS16}.

Hedonic games are extremely general. Unforunately, this generality renders them hard to study from the computer science perspective -- indeed,
even listing the preferences of all $n$ agents takes space exponential in $n$
as the naïve approach would give the ordering of each agent over all coalitions. This motivates the study of natural restrictions
of hedonic games. In this paper we focus on one of the most natural such
restrictions: \emph{Additively Separable Hedonic Games} (ASHGs), where
the input is an edge-weighted graph,  vertices represent the agents, and
the weight of the edge $uv$ denotes the utility that agent $u$ derives from
being in the same coalition as $v$. The utility of an agent $u$ in a coalition
$C$ can then be succinctly encoded as the sum of the weights of edges incident
on $u$ with their other endpoint in $C$.

In any situation where agents behave selfishly, it becomes
critical to look for \emph{stable} outcomes, that is, outcomes which the agents
are likely to accept, based on their preferences. In the context of ASHGs, the
question then becomes: given an edge-weighted graph $G$ representing the
agents' preferences, can we find a stable partition of the agents into
coalitions (possibly also optimizing some other social welfare goal)? The
computational complexity of such questions has been amply studied
(\cite{AzizBS13,Ballester04,ElkindFF20,FlamminiKMZ21,HanakaKMO19,Olsen09,OlsenBT12,SungD10})
and several natural notions of stability have been proposed. In this paper we
revisit the computational complexity of one of the most well-studied such
notions, which is called \emph{core stability}. Intuitively, a partition of $n$
agents is called \emph{core stable}, if it is stable enough to dissuade not
only individual diverging behavior but even divergence by groups of agents.
More formally, given a partition $\PP$ of the agents, a \emph{blocking
	coalition} is a set of agents $X$ such that all $v\in X$ have strictly higher
utility in $X$ than in the initial partition $\PP$. Hence, if a blocking
coalition $X$ exists, the initial partition is \emph{unstable}, because the
agents of $X$ would prefer to form a new coalition. A partition
is then called core stable if no blocking coalition (of any size) exists.
Notice that core stability is a very strong (and hence very desirable) notion
of stability, compared with simpler notions, such as Nash stability (which only
precludes divergence by a single agent).

Attractive though it may be from the game theory point of view, the notion of
core stability presents some serious drawbacks from the point of view of
computational complexity. In particular, deciding if an ASHG admits a core
stable outcome is not just NP-hard, but in fact $\Sigma_2^p$-complete, that is,
complete for the second level of the polynomial hierarchy \cite{Woeginger13},
even if the input graph is undirected, has bounded degree, and edge weights are
bounded by a constant \cite{Peters17}. Compared to simpler notions of
stability, such as Nash stability (which is ``only'' NP-complete
\cite{GairingS19}), core stability is therefore highly intractable, and this
strongly motivates the search for a better understanding of what the source of
this intractability is and for ways to deal with it. The focus of this paper is
on using notions of graph structure from parameterized complexity to achieve a
more fine-grained understanding of the complexity of this problem. Throughout
the paper we will concentrate on the case where agent preferences are
\emph{symmetric}, that is, the given graphs are undirected. Since most of our
results are negative, this (natural) restriction only renders them stronger.

\subparagraph*{Our results} In this paper we present several results that improve
and clarify the state of the art on the complexity of finding core stable
outcomes in ASHGs. We study two closely related problems: \textsc{Core
	Stability (CS)} and \textsc{Core Stability Verification (CSV)}, which
correspond to deciding if a core stable partition exists and deciding if a
given partition is indeed core stable respectively. Intuitively, the reason
\textsc{CS} is complete for the second level of the polynomial hierarchy (and
not just NP-complete) is that \textsc{CSV} is also known to be intractable
(coNP-complete \cite{chen2023hedonic,SungD07}). Our high-level aim is to
understand which parts of the combinatorial structure of the input are
responsible for the complexity of these two problems. In order to quantify the
input structure we will use standard structural tools from the toolbox of
parameterized complexity, such as the notions of treewidth and related
parameters\footnote{Throughout the paper we assume the reader is familiar with
	the basics of parameterized complexity, as given for example in
	\cite{CyganFKLMPPS15}}. 

We begin our investigation with \textsc{CSV} and ask the question which
restrictions on the input are likely to render the problem tractable (or
conversely, what are the sources of the problem's intractability). We identify
two possible culprits: the problem could become easy if we either impose
restrictions on the graph structure, for example by requiring that the input be
of low treewidth or degree, or if we impose restrictions on the allowed edge
weights. Our results indicate that these two sources of intractability interact
in non-trivial ways:

\begin{itemize}
	
	\item If we place absolutely no restrictions on the allowed weights,
	\textsc{CSV} remains hard even on severely restricted instances, that is,
	graphs of vertex cover $2$ (Theorem \ref{thm:csv:vc:weak}). We find this rather
	surprising, as this class of graphs (which are essentially stars with one
	additional vertex) is rarely general enough to render problems intractable.
	
	\item One may be tempted to interpret the previous result as an artifact of the
	exponentially large weights we allow in the input. However, we show that even
	if we place the restriction that weights are polynomially bounded in the input
	size, \textsc{CSV} still remains quite hard from the parameterized perspective,
	and more precisely coW[1]-hard parameterized by vertex integrity (Theorem
	\ref{thm:csv:vi}).  Recall that graphs with small vertex integrity are graphs
	where there exists a small separator whose removal breaks down the graph into
	components of bounded size, so this parameterization is again rather
	restrictive and usually easily renders most problems almost as tractable as
	parameterizing by vertex cover \cite{GimaHKKO22,LampisM21}.
	
	\item Finally, we show that even if we insist on weights only being selected
	from the set $\{-1,1\}$, \textsc{CSV} is coW[1]-hard parameterized by tree-depth
	(Theorem \ref{thm:csv:td}).
	
\end{itemize}

Taken together these results show that \textsc{CSV} is an unusually intractable
problem where hardness comes from a combination of two factors: the complexity
of dealing with the edge weights and the complexity of dealing with the
graph-theoretic structure of the input. We complement the above with several
algorithms that paint a clearer picture of the complexity of \textsc{CSV}
showing that: (i) \textsc{CSV} is polynomial-time solvable on trees (Theorem
\ref{thm:csv:trees}), hence Theorem \ref{thm:csv:vc:weak} cannot be extended to
graphs of vertex cover $1$ (ii) \textsc{CSV} is FPT parameterized by vertex
integrity plus the maximum edge weight (Theorem \ref{thm:csv:FPT:vi}), so the
hardness result of Theorem \ref{thm:csv:td} cannot be extended to vertex
integrity (iii) Theorem \ref{thm:csv:td} is matched by an XP algorithm
parameterized by treewidth with parameter dependence $(\Delta
w_{\max})^{O(\tw)}$, that is, an XP algorithm when weights are polynomially bounded
(Theorem \ref{thm:tw:XP}) (iv) the former algorithm can be improved to an FPT
running time (even for unbounded weights) if we parameterize by $\tw+\Delta$
(this was already observed by Peters \cite{Peters16a}, who gave an algorithm
with dependence $2^{O(\Delta^2\tw)}$, but we improve this complexity to
$2^{O(\Delta\tw)}$ in  Theorem \ref{thm:tw+d:FPT}).

The results above paint a comprehensive and rather negative picture on the
complexity of \textsc{CSV}, which seems to imply that our main problem, that
is, \emph{finding} core-stable partitions, is likely to be even more
intractable. We confirm this intuition by showing that \textsc{CS} remains
$\Sigma_2^p$-complete even on graphs of bounded vertex cover (Theorem
\ref{thm:csfSigma2}).  One encouraging piece of news, however, is that we did
manage to obtain an FPT algorithm when \textsc{CSV} is parameterized by
$\tw+\Delta$, so this seems like a case worth considering for \textsc{CS}.
Indeed, Peters \cite{Peters16a} already showed that \textsc{CS} is FPT for this
parameterization, without, however, giving an explicit algorithm (his argument
was based on Courcelle's theorem). We improve upon this by giving an explicit
algorithm whose dependence is \emph{double-exponential} on $\tw+\Delta$, using
the technique of reducing to $\exists\forall$-\textsc{SAT} advocated in
\cite{LampisMM18} (Theorem \ref{thm:doubleExpAlgoCS}).  Despite fixed-parameter
tractability, it is fair to say that the running time of our algorithm is quite
disappointing.  Our main contribution in this part is to show that this is,
unfortunately, likely to be optimal: even for instances of bounded degree, the
existence of an algorithm with better than double-exponential dependence on
treewidth would violate the ETH (Theorem \ref{thm:ETHlowerBoundCS}). This shows
another aspect where core-stability is significantly harder than Nash
stability, which has ``just'' slightly super-exponential in $\tw+\Delta$
\cite{HanakaL22}. Note that the phenomenon that problems complete for the
second level of the polynomial hierarchy tend to have double-exponential
complexity in treewidth has been observed before
\cite{abs-2307-08149,LampisM17,MarxM16}

Finally, we conclude our paper by considering one last relevant parameter: the
size of the allowed blocking coalition. We say that a partition is $k$-core
stable if no blocking coalition of size at most $k$ exists. The concept of $k$-core stability was first proposed in \cite{FanelliMM21}. For small values of
$k$ this is a natural variation of the problem, which could potentially render
it more tractable -- indeed, for $k$ fixed, \textsc{CSV} is trivially in P and
\textsc{CS} is trivially in NP. Unfortunately, we show that not much more is
gained from these parameterizations: \textsc{CSV} is coW[1]-hard parameterized
by $k$ (even on unweighted graphs); while $k$-\textsc{CS} is NP-complete for
all fixed $k\ge 3$, even on graphs of bounded maximum degree and with bounded
weights.

\section{Preliminaries}

Throughout the paper we use standard graph-theoretic notation and focus on
undirected graphs. An Additively Separable Hedonic Game (ASHG) is represented
by a graph $G=(V,E)$, where vertices of $V$ represent the agents, and a weight
function $w:E\to \mathbb{Z}$.  A partition $\PP$ of $V$ is a collection of
disjoint subsets of $V$ whose union includes all of $V$. We will call the sets
of such a partition \emph{coalitions}. Slightly abusing notation, we will
write, for $u\in V$,  $\PP(u)$ to denote the set of $\PP$ that contains $u$.
The utility of an agent $u\in X$ in a set $X\subseteq V$ is defined as
$\ut(X,u) = \sum_{v\in X} w(uv)$, while the utility of $u$ in a partition $\PP$
is defined as $\ut_{\PP}(u) = \ut(\PP(u),u) = \sum_{v\in \PP(u)} w(uv)$. Even
though we defined $w$ as a function to the integers, we will sometimes allow
rational edge weights, but with denominators sufficiently small that it will
always be easy to obtain an equivalent integer instance by multiplying all
weights by an appropriate integer. We use $w_{\max}$ to denote the maximum
\emph{absolute} weight of a given ASHG instance. Unless otherwise stated, we
assume that $w$ is given to us encoded in binary (and hence $w_{\max}$ may have
value exponential in the input size).

We are chiefly interested in the following notion of stability.

\begin{definition}[Core stability]\label{def:core} A partition $\PP$ of an ASHG
	$(G,w)$ is \emph{core stable}, if there exists no $X\subseteq V(G)$ such that
	for all $u\in X$ we have $\ut(X,u)>\ut_{\PP}(u)$. \end{definition} 

If the set $X$ mentioned \cref{def:core} does exist, then we say that $\PP$ is
unstable and that $X$ is a \emph{blocking coalition}. For fixed integer values
of $k$, we will also study the notion of $k$-Core Stability: a partition is
$k$-core stable if no blocking coalition of size at most $k$ exists.

The two computational problems we are interested in are \textsc{Core Stability}
(\textsc{CS}) and \textsc{Core Stability Verification} (\textsc{CSV}). In the
former problem we are given as input an ASHG and are asked if there exists a
core stable partition; in the latter we also given a specific partition $\PP$
and are asked if $\PP$ is core stable.

We say that a partition $\mathcal{P}$ of $V(G)$ is \emph{connected} if
$G[P]$ is connected for every $P\in \mathcal{P}$. Notice that for both
\textsc{CSV} and \textsc{CS} we may assume that the partition $\PP$ we seek or
we are given is connected, as replacing a disconnected coalition $P\in\PP$ with
a coalition for each of its components does not change $\ut_{\PP}(u)$ for any
$u\in V$ and hence does not affect stability.

\subsection{Graph parameters and Parameterized Complexity}

We assume the reader is familiar with the basics of parameterized complexity,
such as the classes FPT and W[1], as given for example in
\cite{CyganFKLMPPS15}. One particularity is that \textsc{CSV} is a
coNP-complete problem, that is, a problem for which one can easily verify No
certificates (blocking coalitions). As a consequence, some of our results will
give coW[1]-hardness (rather than W[1]-hardness) for \textsc{CSV}, by which we
mean that the complement of any problem in W[1] can be fpt-reduced to the
problem at hand. We will say that a problem is \emph{weakly} hard for a class,
if the reduction we present uses exponentially large weights (which therefore
need to be encoded in binary). Conversely, if the reduction is valid even when
weights are encoded in unary (and are therefore polynomially bounded) we say
that the problem is \emph{strongly} hard.

We assume that the reader is also familiar with standard structural graph
parameters. The parameters we will focus on are treewidth ($\tw$),
pathwidth ($\pw$), tree-depth ($\td$), vertex integrity ($\vi$), and vertex
cover ($\vc$). For the definitions of treewidth and pathwidth, as well as the
corresponding (nice) decompositions we refer the reader to
\cite{CyganFKLMPPS15}. The vertex integrity $\vi(G)$ of a graph $G$ is defined
as the minimum $k$ such that there exists a set $S\subseteq V(G)$ (called a
$\vi(k)$-set) such that the largest component of $G-S$ has order at most
$k-|S|$.  The tree-depth of a graph $G$ is defined inductively as follows: an
isolared vertex has tree-depth $1$; the tree-depth of a disconnected graph is
the maximum of the tree-depth of its components; the tree-depth of a connected
graph $G$ is defined as $\min_{v\in V(G)} \td(G-v)+1$. The vertex cover of $G$
is the size of the smallest set of vertices of $G$ that intersects all edges.

It is well known that for all graphs $G$ we have $\tw(G)\le \pw(G)\le \td(G)\le
\vi(G) \le \vc(G)+1$. In terms of parameterized complexity these parameters
therefore form a hierarchy: if a problem is FPT for a smaller parameter, then
it is FPT for the larger ones and conversely if a problem is intractable for a
large parameter, then it is intractable for a smaller one. We therefore
say that larger parameters are more restrictive, with vertex cover being the
most restrictive parameter we consider. We use $\Delta(G)$ to denote the
maximum degree of a graph $G$. We omit $G$ from notation, if it is clear from context.

\section{Core Stability Verification}

In this section we study the complexity of \textsc{Core Stability Verification}
(\textsc{CSV}). What we discover is that this is an unusually intractable
problem, even for quite restricted parameterizations. In particular, we present
the three following hardness results:

\begin{itemize}
	
	\item \textsc{CSV} is weakly coNP-complete on graphs of vertex cover number 2.
	(\cref{thm:csv:vc:weak})
	
	\item \textsc{CSV} is strongly coW[1]-hard parameterized by vertex integrity.
	(\cref{thm:csv:vi})
	
	\item \textsc{CSV} is coW[1]-hard parameterized by tree-depth, even if all
	weights are in $\{-1,1\}$. (\cref{thm:csv:td}) 
	
\end{itemize}

Our results indicate that, even though some of the problem's intractability can
be attributed to the edge weights, a large part of its complexity is due to the
graph-theoretic structure of the input, and the problem is intractable for
tree-depth (and hence for pathwidth and treewidth) even for very small weights.
These hardness results are complemented by several algorithms, which show that
the hardness results are essentially tight. In particular:

\begin{itemize}
	
	\item In \cref{thm:csv:trees} we show that \textsc{CSV} is in P for trees,
	therefore, \cref{thm:csv:vc:weak} cannot apply to graphs of vertex cover 1. 
	
	\item In \cref{thm:csv:FPT:vc} we show that \textsc{CSV} is in XP parameterized
	by vertex cover when weights are polynomially bounded. This implies that
	\cref{thm:csv:vc:weak} cannot be improved to give strong coNP-completeness.
	
	\item In \cref{thm:csv:FPT:vi} we show that \textsc{CSV} is FPT parameterized
	by $\vi+w_{\max}$, so \cref{thm:csv:td} cannot be extended to \textsc{CSV}
	parameterized by vertex integrity.

	\item \cref{thm:csv:td} is matched by an XP algorithm parameterized by
	treewidth with parameter dependence $(\Delta w_{\max})^{O(\tw)}$, i.e. an XP
	algorithm when weights are polynomially bounded (\cref{thm:tw:XP}).
	
	\item The former algorithm can be improved to an FPT running time (even for
	unbounded weights) if we parameterize by $\tw+\Delta$. This was already
	observed by Peters \cite{Peters16a}, who gave an algorithm with dependence
	$2^{O(\Delta^2\tw)}$, but we improve this complexity to $2^{O(\Delta\tw)}$
	(\cref{thm:tw+d:FPT}).
	
\end{itemize}

\subsection{Hardness Results}
We first prove the following three hardness results.
\begin{theorem}\label{thm:csv:vc:weak}
	\textsc{Core Stability Verification} is weakly coNP-complete on graphs of vertex cover number 2.
\end{theorem}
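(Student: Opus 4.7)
The plan is to show containment in coNP and then reduce from a weakly NP-complete numerical problem such as \textsc{Subset Sum} or \textsc{Partition}. Containment in coNP is immediate: a blocking coalition is a polynomial-size No-certificate that is verifiable in polynomial time.

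For the hardness direction I would exploit the very restricted structure of graphs of vertex cover $2$. Writing the vertex cover as $\{u,v\}$, the remaining vertices form an independent set of leaves, each adjacent to $u$, $v$, or both. Consequently every candidate connected blocking coalition has one of just three shapes: $\{u\}\cup S$, $\{v\}\cup S$, or $\{u,v\}\cup S$ for some subset $S$ of leaves. Singleton leaves and leaf-only subsets all have per-member utility $0$, so they can be precluded by choosing $\PP$ to give every agent non-negative utility.

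Given a \textsc{Subset Sum} instance $(a_1,\dots,a_n;T)$, I would create one leaf $x_i$ per item and set the weights $w(ux_i)$, $w(vx_i)$, $w(uv)$ together with $\PP$ according to two design principles: (i) on the ``$v$-side'', the weights are made sufficiently negative that $v$ or the leaves strictly lose utility inside any $\{v\}\cup S$ or $\{u,v\}\cup S$, killing both of those shapes outright; (ii) on the ``$u$-side'', $w(ux_i)$ encodes $a_i$ and the utility thresholds induced by $\PP$ are tuned so that the two conditions $\ut(\{u\}\cup S,u)>\ut_\PP(u)$ and $\ut(\{u\}\cup S,x_i)>\ut_\PP(x_i)$ (for $i\in S$) collapse to the numerical condition $\sum_{i\in S}a_i=T$. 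Turning the one-sided inequality coming from $u$'s improvement into a two-sided equality is the reason one typically needs either an auxiliary ``threshold'' agent adjacent to $u$, or an asymmetric choice of $\PP$ that produces a second, independent leaf-utility constraint pinning down the other direction.

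The main obstacle is the arithmetic bookkeeping: one must simultaneously ensure that every $S$ with $\sum_{i\in S}a_i=T$ induces a blocking coalition, that no other $S$ does, and that all shapes involving $v$ are ruled out. The freedom to use weights of value exponential in $n$ (so that $T$ can be encoded in binary) is essential for this balancing act, which is precisely why the obtained hardness is only \emph{weak}. Once the weights are chosen, the correctness proof reduces to a routine case analysis over the three possible shapes of a candidate blocking coalition identified above.
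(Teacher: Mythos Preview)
Your plan has a genuine gap in step~(i): ruling out the shape $\{u,v\}\cup S$ is exactly the wrong design choice, because the remaining shape $\{u\}\cup S$ cannot encode a two-sided arithmetic constraint. In a coalition of the form $\{u\}\cup S$, each leaf $x_i\in S$ has utility exactly $w(ux_i)$, a constant that does not depend on which other leaves lie in $S$. Hence the leaf-improvement conditions $w(ux_i)>\ut_\PP(x_i)$ are purely local: they only determine \emph{which} leaves are eligible to join and impose no constraint whatsoever on the sum $\sum_{i\in S}a_i$. The single constraint that does involve the sum is $u$'s, namely $\sum_{i\in S}w(ux_i)>\ut_\PP(u)$, and that is one-sided. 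Your proposed remedies do not help: an auxiliary ``threshold agent'' adjacent only to $u$ is itself a leaf whose utility in $\{u\}\cup S$ is a fixed constant, so its improvement condition is again local; and no choice of $\PP$ can make a leaf's utility inside $\{u\}\cup S$ depend on $S$. With only a lower bound on the sum available, deciding whether some blocking $\{u\}\cup S$ exists is trivially in~P (take all eligible leaves), so no NP-hard problem can be encoded along these lines.

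The paper's reduction does the opposite: it \emph{forces} both cover vertices into every blocking coalition. Writing the cover as $\{x,y\}$, it equips each with a private partner $x',y'$ so that the initial partition $\PP=\{\{x,x'\},\{y,y'\},\ldots\}$ gives $x$ and $y$ high utilities that can only be beaten when $x$ and $y$ are together via a heavy edge $xy$. Each item-leaf $v_i$ receives a positive edge of weight roughly $a_i$ to $x$ and a \emph{negative} edge of weight $-a_i$ to $y$. Then $x$'s improvement condition forces $\sum_{v_i\in X}a_i$ to be at least (up to a small $\epsilon$) $s/2$, while $y$'s forces it to be at most $s/2$; together they pin the sum to exactly $s/2$, encoding \textsc{Partition}. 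The essential mechanism you are missing is that the second cover vertex, carrying negative edges to the leaves, is precisely what supplies the upper-bound side of the equality; excluding it from the blocking coalition makes the encoding collapse.
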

\begin{figure}
	\centering
	\includegraphics{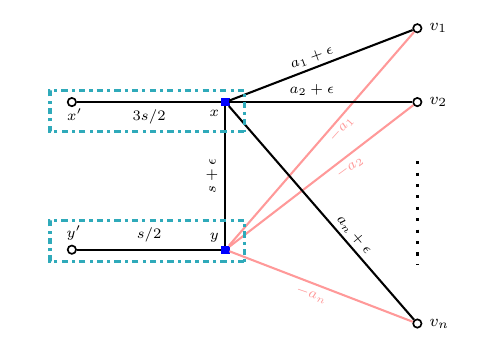}
	\caption{The graph $(G,w)$ constructed in the proof of \cref{thm:csv:FPT:vc}. The vertex cover is marked by blue squares. The initial partition is given by dot-dashed boxes, sngletons are omitted.}
	\label{fig:csv:vc:weak}
\end{figure}

\begin{proof}
	We give a reduction from \textsc{Partition}. Given a set of positive integers
	$A=\{a_1, \ldots, a_n\}$, the \textsc{Partition} problem asks whether there
	exists a subset $A'$ of $A$ such that $\sum_{a\in A'}a = s/2$ where $s=\sum_{a\in
		A} a$. This problem is well-known to be weakly NP-complete \cite{GareyJ79}.
	
	We construct an instance of \textsc{CSV}. First, we
	create $n$ vertices $v_{1},\ldots,v_{n}$ corresponding to $a_1,\ldots,
	a_n\in A$ and three vertices $x,y,x',y'$. Then we add edges $v_{i}x$ of
	weight $a_i+\epsilon$, $v_{i}y$ of weight $-a_i$, $xx'$ of
	weight $3s/2$, $yy'$ of
	weight $s/2$, and $xy$ of weight $s+\epsilon$. Here, without loss of
	generality, let $\epsilon$ be an integer sufficiently smaller than $\min_i
	a_i$; this can be achieved for example by multiplying all elements of $A$ (and
	$s$) by $n$, and setting $\epsilon=1$.  Let $(G,w)$ be the constructed graph
	(see Figure \ref{fig:csv:vc:weak}).  A coalition structure $\PP$ to verify
	consists of $\{x,x'\}$, $\{y,y'\}$ and singletons of other vertices. Also note that $\{x,y\}$ is a vertex cover of $G$.
	
	If there exists  $A'\subseteq A$ such that $\sum_{a\in A'}a = s/2$, then the coalition $X=\{v_i: a_i\in A'\}\cup \{x,y\}$ blocks $\PP$. To see this, observe that the utility of each vertex in $X$ increases by $\epsilon$ and thus $X$ is a blocking coalition of $\PP$.
	
	Conversely, suppose that there exists a blocking coalition $X$ of $\PP$.
	Clearly, $X$ contains neither $x'$ nor $y'$. If $X$ does not contain $x$, no vertex can have positive utility in $X$. Thus, they also do not join $X$.
	Consequently, $X=\{y\}$  holds, but this contradicts that $X$ is  a blocking coalition. 
	Thus, $X$ must contain $x$. To increase the utility
	of $x$, $y$ must be contained in $X$. In particular, if $y$ was not contained
	in $X$, then the utility of $x$ would be at most
	$s+n\epsilon < 3s/2$.
	%!!!\todo{PROBLEM: This does not appear to give a contradiction. We need to increase $xy$ to $3B$ and $yz$ to $2B+\epsilon$?} 
	Since $y$ must
	have utility more than $s/2$ in $X$, it holds that $\sum_{v_i\in X} a_i \le s/2$.
	Finally, as the utility of $x$ must increase by more than $s/2$ and $\epsilon$ is
	sufficiently smaller than $\min_i a_i$, $X$ satisfies that $\sum_{v_i\in X} a_i
	=s/2$, which implies that there exists a subset $A'$ of $A$ such that $\sum_{a\in
		A'}a = s/2$.  \end{proof}

We use a similar but more involved construction  to reduce \textsc{Bin Packing} to \textsc{CSV}.

\begin{theorem}\label{thm:csv:vi} \ifthenelse{\boolean{short}}{\textup{($\star$)}}{}
	\textsc{Core Stability Verification} is coW[1]-hard parameterized by vertex
	integrity %(plus cluster deletion number) 
	even if all weights are bounded by a polynomial in the input size.
\end{theorem}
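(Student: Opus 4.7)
The plan is to fpt-reduce from \textsc{Unary Bin Packing}, which is strongly W[1]-hard parameterized by the number of bins $k$. Given an instance with items $a_1,\ldots,a_n$, $k$ bins of capacity $B$, and (w.l.o.g.)\ $\sum_j a_j=kB$, I will build an ASHG $(G,w)$ together with a partition $\PP$ such that $\PP$ admits a blocking coalition if and only if the bin packing instance is solvable, with $\vi(G)=O(k)$ and all weights polynomial in $n+k+B$.

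The construction extends the Partition gadget of \cref{thm:csv:vc:weak} into $k$ parallel copies. For each bin $i\in[k]$ I introduce vertices $x_i,x'_i,y_i,y'_i$ and place $\{x_i,x'_i\}$ and $\{y_i,y'_i\}$ in $\PP$; the weights of $x_ix'_i$, $y_iy'_i$ and $x_iy_i$ are tuned exactly as in \cref{thm:csv:vc:weak} (with target sum $B$ in place of $s/2$), so that whenever both $x_i$ and $y_i$ lie in a blocking coalition $X$, the sum $S_i$ of item-weights contributed to $x_i$ inside $X$ is forced to be exactly $B$. For every item $j$ and bin $i$ I add a singleton copy $v_j^i$ with $v_j^ix_i$ of weight $a_j+\epsilon$ and $v_j^iy_i$ of weight $-a_j$. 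To forbid reusing an item across bins, I add negative edges $v_j^iv_j^{i'}$ of weight $-1$ between every pair of copies of the same item: any second copy then pushes the utility of $v_j^i$ below zero, while a single copy keeps the positive slack $\epsilon$.

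The main new obstacle, absent in \cref{thm:csv:vc:weak}, is a \emph{single-bin} blocking coalition consisting of one pair $\{x_i,y_i\}$ together with a subset of items summing to $B$: such a coalition would certify instability without certifying a full packing. To rule this out I will add a clique on $\{x_1,\ldots,x_k\}$ with a uniform edge weight $M$ (polynomial but sufficiently large, e.g.\ $M=2kB$) and rescale the weight of $x_ix'_i$ so that $x_i$ can gain utility inside $X$ only when \emph{all} other $x_{i'}$ are in $X$, contributing the full $(k-1)M$ via the clique. A coalition containing only $j<k$ of the $x_i$'s then falls short of the $x_i$-threshold by at least $M$, which dominates the $O(n\epsilon)$ contribution from items and slacks. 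Consequently every non-empty blocking coalition must contain every $x_i$ (and then, by the per-bin gadget, every $y_i$), forcing $S_i=B$ in every bin. Combined with item-copy uniqueness and $\sum_j a_j=kB$, each item is placed in exactly one bin, yielding a valid packing; conversely, any valid packing gives a blocking coalition by taking all $x_i,y_i$ plus exactly one copy per item.

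Finally, the set $\{x_i,x'_i,y_i,y'_i:i\in[k]\}$ is a separator of size $4k$ whose removal leaves, for each item $j$, the clique $\{v_j^1,\ldots,v_j^k\}$ of size $k$ as a single component, so $\vi(G)\le 5k$. All edge weights are of magnitude $O(k^2B)$ up to a rational denominator of size $O(n)$; after clearing denominators they remain polynomial in $n+k+B$. The main technical hurdle I expect is the simultaneous tuning of the weights of $x_ix'_i$, $y_iy'_i$, $x_iy_i$, the clique weight $M$, and the slack $\epsilon$, so that they jointly enforce (i) $S_i=B$ per bin, (ii) the all-or-nothing inclusion of the $x_i$'s via the clique, and (iii) that no auxiliary vertex $x'_i,y'_i$ and no purely item-based subset can itself form a blocking coalition.
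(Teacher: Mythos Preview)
Your plan is correct and essentially matches the paper's proof: both reduce from \textsc{Unary Bin Packing}, replicate each item $k$ times with a negative-weight clique among the copies to enforce that at most one copy enters a blocking coalition, and instantiate the \cref{thm:csv:vc:weak} gadget once per bin so that the item mass assigned to bin $i$ is pinned to exactly $B$. The one substantive difference is the device that forces \emph{every} bin-representative into any non-trivial blocking coalition. You do this with a heavy positive clique on $\{x_1,\dots,x_k\}$ balanced against inflated anchor weights $w(x_ix'_i)$, so that an $x_i$ missing even one clique neighbour falls short by $M$. The paper instead places all bin-representatives $y_1,\dots,y_k$ in the \emph{same} part of $\PP$ together with a single anchor $x$ (which then has maximum utility and can never join $X$), and threads a positive-weight cycle through the $y_i$: each $y_i$, having lost $x$, must recruit both cycle neighbours and its partner $z_i$ to recover, which propagates around the cycle. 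Both mechanisms yield $\vi(G)=O(k)$—note that in your construction the deletion set $\{x_i,y_i:i\in[k]\}$ already suffices, since $x'_i,y'_i$ become isolated, giving $\vi\le 3k$ rather than $5k$—and the simultaneous weight tuning you flag as the main hurdle is indeed routine once the thresholds are written down.
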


\begin{toappendix}
	\begin{proof}[Proof\ifthenelse{\boolean{short}}{ of \cref{thm:csv:vi}}{}]
		\begin{figure}
			\centering
			\centerline{\includegraphics{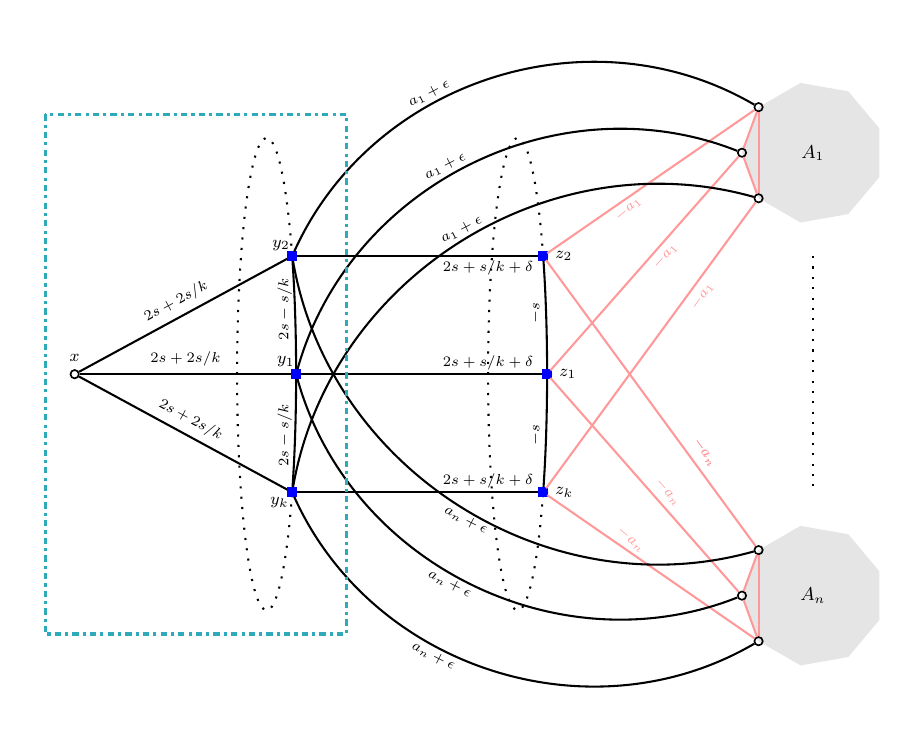}}
			\caption{The graph $(G,w)$ constructed in the proof of Theorem~\ref{thm:csv:vi}. The coalition structure $\mathcal{P}$ is indicated by the dash-dotted box (singletons are omitted). Furthermore, the vertices of the deletion set are represented by blue squares and  edges without label have weight $\rho$.}
			\label{fig:csv:vi}
		\end{figure}
		We give a reduction from \textsc{Bin Packing}. Given a set of positive integers
		$A=\{a_1, \ldots, a_n\}$ and an integer $k$, the \textsc{Bin Packing} problem
		asks whether there exists a partition $\mathcal{Q}=(Q_1,\dots, Q_k)$ of $A$
		such that $\sum_{a\in Q_i}a = s/k$ for every $i\in[k]$ where $s:=\sum_{a\in A}
		a$. This problem is known to be W[1]-hard parameterized by $k$, even if all
		integers are given in unary (that is, if weights are polynomially bounded in
		$n$) \cite{JansenKMS13}.
		
		We construct an instance of \textsc{CSV} as follows. First we construct a weighted graph $(G,w)$. We create a vertex $x$ and two cycles $(y_1,\dots,y_k)$ and $(z_1,\dots,z_k)$. Every edge of the cycle $(y_1,\dots, y_k)$ has weight $2s-s/k$ while every edge of the cycle $(z_1,\dots,z_k)$ has weight $-s$. Now, we create an edges $xy_i$ of weight $2s+2s/k$ and an edge $y_iz_i$ of weight $2s+s/k+\delta$ for every $i\in [k]$ where $\delta:=\frac{1}{4}$. For every $a_i\in A$ we create a clique $A_i$ consisting of $k$ vertices $v_i^1,\dots,v_i^k$ and edges of weight $\rho$ where $\rho$ is an integer smaller than $-(a_i+\epsilon)$. Finally, we create edges $v_i^jy_j$ of weight $a_i+\epsilon$ and edges $v_i^jz_j$ of weight $-a_i$ for every $i\in [n]$, $j\in [k]$ where $\epsilon:=\frac{1}{2n}$. Note that we can easily make all weights integer by multiplying all weights by $2n$.
		The coalition structure $\mathcal{P}$ we want to verify consists of the set $\{x,y_1,\dots,y_k\}$ and singletons for every vertex $v\in V(G)\setminus \{x,y_1,\dots,y_k\}$. We now argue that $(A,k)$ is a YES-instance of \textsc{Bin Packing} if and only if $((G,w), \mathcal{P})$ is a NO-instance of \textsc{CSV}.\\
		
		First assume that $(A,k)$ is a YES-instance of \textsc{Bin Packing} and $\mathcal{Q}=(Q_1,\dots, Q_k)$ is a partition such that $\sum_{a\in Q_i}a = s/k$ for every $i\in[k]$. We now claim that the set $X:=\{y_1,\dots,y_k,z_1,\dots,z_k\}\cup \bigcup_{j\in [k]}\{v_i^j: a_i\in Q_j\}$ is a blocking coalition for $\mathcal{P}$. To verify observe that the utility of every vertex $v_i^j\in X$ increases by $\epsilon$. Furthermore, the utility of $y_i$ improved by $\epsilon\cdot |Q_i|$ and the utility of $z_i$ by $\delta$ and hence $X$ is a blocking coalition of $\mathcal{P}$. \\
		
		Now assume that $(G,\mathcal{P})$ is a NO-instance of \textsc{CSV} and $X$ is a blocking coalition of $\mathcal{P}$. First note that $x$ cannot be in $X$ as its utility in $\mathcal{P}$ is maximum. Furthermore, $X$ must contain some element from $\{y_1,\dots,y_k\}$ as every edge weight in the graph $G[\{z_1,\dots,z_k\}\cup \{v_i^j:i\in [n],j\in [k]\}]$ is negative. Now observe that the utility of $y_i$ in the coalition $V(G)\setminus \{x\}$ is $7s-s/k+\delta+\epsilon n< 7s-s/k+1$ for every $i\in [k]$. Since the utility of $y_i$ is $6s$ in $\mathcal{P}$ this implies that if $y_i\in X$ then every neighbor $u$ of $y_i$ for which $w(uy_i)\geq 2s-s/k$ must also be in $X$. Hence, $y_i\in X$ implies $y_{i-1},y_{i+1},z_i\in X$ (or $y_n,y_{2},z_2\in X$ in case $i=1$ or $y_{n-1},y_1,z_n\in X$ in case $i=n$). Combining this with our earlier argument that $X\cap \{y_1,\dots,y_k\}$ cannot be empty we obtain that $\{y_1,\dots,y_k,z_1,\dots,z_k\}$ must be a subset of $X$.
		
		We now define $Q_j:=\{a_i: v_i^j\in X\}$ for every $j\in [k]$. First observe that $Q_j\cap Q_{j'}=\emptyset$ for every pair $j,j'\in [k]$ as the edge $v_i^jv_i^{j'}$ having  weight $\rho<-(a_i+\epsilon)$ prevents the simultaneous containment of $v_i^j$ and $v_i^{j'}$ in $X$. We now argue that $|Q_i|=s/k$. For this observe that the utility of $y_j$ in the coalition $\{y_1,\dots,y_k,z_1,\dots,z_k\}$ is $6s-s/k+\delta$. Hence $\sum_{i\in [n],\atop{v_i^j\in X}}a_i+|\{i:v_i^j\in X\}|\epsilon> s/k-\frac{1}{4}$. Since $|\{i:v_i^j\in X\}|\epsilon\leq \frac{1}{2}$ and the $a_i$'s are integer we get that $\sum_{a_i\in Q_j}a_i=\sum_{i\in n,\atop{v_i^j\in X}}a_i\geq s/k$. On the other hand, the utility of $z_j$ in the coalition $\{y_1,\dots,y_k,z_1,\dots,z_k\}$ is $s/k+\delta$.   Since all edges of the form $z_j v_i^j$ have weight $-a_i$ this implies that $\sum_{i\in [n],\atop{v_i^j\in X}}a_i< s/k+\frac{1}{4}$. Since all $a_i$'s are integer this condition is equivalent to $\sum_{a_i\in Q_j}a_i=\sum_{i\in [n],\atop{v_i^j\in X}}a_i\geq s/k$. Hence we have argued that $\mathcal{Q}$ is a partition of $A$ into $k$ parts such that $\sum_{a_i\in Q_j}a_i=s/k$ for every $j\in k$.\\
		
		Finally, observe that the graph $G$ has vertex integrity $3k$ and cluster deletion number $2k$ which is witnessed by deletion set $\{y_1,\dots,y_k,z_1,\dots,z_k\}$.
	\end{proof}
\end{toappendix}

We prove the third hardness result by a reduction from \textsc{Bounded Degree Deletion}.

\begin{theorem}\label{thm:csv:td}\ifthenelse{\boolean{short}}{\textup{($\star$)}}{}
	\textsc{Core Stability Verification} is coW[1]-hard parameterized by
	tree-depth, even if all weights are in $\{-1,1\}$.
	
\end{theorem}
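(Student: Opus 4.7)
The plan is to give an fpt-reduction from \textsc{Bounded Degree Deletion} (BDD) parameterized by tree-depth. Given $(H,d,k)$ with $|V(H)|=n$, I will construct $(G,w,\PP)$ with $w:E(G)\to\{-1,+1\}$ and $\td(G)\le \td(H)+O(1)$ so that $\PP$ fails to be core stable iff some $S\subseteq V(H)$ with $|S|\le k$ makes $H-S$ have maximum degree at most $d$. A blocking coalition $X$ will correspond to $T:=V(H)\setminus S$, together with a mandatory ``super-anchor'' $A$ and a collection of small ``boost'' gadgets.

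\textbf{Construction.} Put $V(H)$ into $G$ and include every edge of $H$ with weight $-1$; in $\PP$ each $v\in V(H)$ is a singleton. Add one vertex $A$ joined to each $v\in V(H)$ by a $+1$ edge, together with $N:=(d{+}1)(n{-}k)-1$ partner vertices $A_1,\dots,A_N$, each joined to $A$ by a $+1$ edge; these form the coalition $\{A,A_1,\dots,A_N\}$ in $\PP$. For every $v\in V(H)$ and $i\in[d]$, introduce a boost vertex $a_v^i$ with $+1$ edges to $v$, to $A$, and to a private partner $p_v^i$; in $\PP$ each pair $\{a_v^i,p_v^i\}$ is its own coalition. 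Deleting $A$ leaves the $A_j$'s isolated and leaves $H$ with length-$2$ pendant tails hanging off each $v$, so $\td(G)\le\td(H)+3$, which is the required fpt-reduction bound on tree-depth.

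\textbf{Correctness.} Each $A_j$ and each $p_v^i$ has $\PP$-utility exactly $1$ and only one positive neighbour, so it can never strictly improve and is excluded from any blocking coalition $X$. With those out, a boost $a_v^i$ can beat its $\PP$-utility of $1$ only when both $v$ and $A$ are in $X$. This forces $A\in X$ whenever $X$ meets $V(H)$, since otherwise any $v\in X\cap V(H)$ would have utility $-|N_H(v)\cap X|\le 0$, contradicting strict improvement. Setting $T:=X\cap V(H)$, in a non-empty blocking coalition we must have $A\in X$ and $\{a_v^i:v\in T,i\in[d]\}\subseteq X$ (these boosts strictly gain and simultaneously feed $v$ and $A$). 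The strict-improvement inequalities for $v\in T$ and for $A$ then read $1+d-|N_H(v)\cap T|>0$ and $(d{+}1)|T|>(d{+}1)(n{-}k)-1$, which are exactly $|N_H(v)\cap T|\le d$ and $|T|\ge n-k$; hence $S:=V(H)\setminus T$ is a BDD solution. Conversely, any such $S$ makes $X=\{A\}\cup T\cup\{a_v^i:v\in T\}$ a blocking coalition by the same calculation.

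\textbf{Main obstacle.} The delicate part is forcing $A$ into every blocking coalition using only $\pm 1$ weights, where one cannot set thresholds by scaling. The boost gadget does this structurally: pairing $a_v^i$ with its private partner in $\PP$ pins its baseline utility to exactly $1$, so $a_v^i$ needs a gain of $+2$ in $X$ and must therefore bring along both $v$ and $A$. Tuning $A$'s partner count to $(d{+}1)(n{-}k)-1$ converts $A$'s strict-gain inequality directly into the size bound $|T|\ge n-k$, while allocating exactly $d$ boosts per vertex matches the BDD degree threshold. Once these counts are calibrated, ruling out spurious blocking coalitions (in particular showing that $A\notin X$ collapses $X$ to $\emptyset$) is a short case analysis based on the observation that without $A$ no boost can strictly gain and hence no $V(H)$-vertex can either.
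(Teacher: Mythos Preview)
Your reduction is essentially the paper's own: both reduce from \textsc{Bounded Degree Deletion} parameterized by tree-depth, keep the original graph with $-1$ edges, attach an anchor vertex whose partner count encodes the size threshold, and hang $\Delta^*{+}1$ (equivalently, your $d$ boosters plus the direct $A$--$v$ edge) weight-$+1$ boosters off each original vertex, each paired with a private leaf to pin its baseline utility at~$1$. One small wrinkle in the backward direction: you assert that every booster $a_v^i$ with $v\in T$ \emph{must} lie in $X$, which is not actually forced by the definition of a blocking coalition, but the argument survives unchanged since $A$'s utility in $X$ is in any case at most $(d{+}1)|T|$ and each $v$'s is at most $1+d-|N_H(v)\cap T|$, so both inequalities you need still follow.
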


\begin{toappendix}
	\begin{proof}[Proof \ifthenelse{\boolean{short}}{of \cref{thm:csv:td}}{}]
		We present a reduction from \textsc{Bounded Degree Deletion}, which is known to
		be W[1]-hard parameterized by tree-depth \cite{GanianKO21,LampisV23}. In this
		problem we are given a graph $G$ and an integer $\Delta^*$ and are asked to
		find the largest induced subgraph of $G$ that has maximum degree at most
		$\Delta^*$ (equivalently, we seek the smallest set of vertices whose deletion
		makes the maximum degree of $G$ at most $\Delta^*$).
		
		Let $(G,\Delta^*,s)$ be an instance of \textsc{Bounded Degree Deletion} and
		suppose we want to decide if $G$ contains an induced subgraph of maximum degree
		at most $\Delta^*$ with at least $s$ vertices.  We construct an instance of
		\textsc{CSV} as follows: we keep $G$ and we assign every edge of $G$ weight
		$-1$; we construct two new vertices $x,x'$; we construct $s(\Delta^*+1)-1$ new
		vertices and connect each to both $x$ and $x'$ with edges of weight $1$; for
		each $u\in V(G)$ we construct $\Delta^*+1$ new vertices
		$u_1,\ldots,u_{\Delta^*+1}$ and connect them to $x$ and $u$ with edges of
		weight $1$; finally we attach to each $u_i$, for $u\in V(G), i\in[\Delta^*+1]$
		a leaf $u_i'$ connected to it via an edge of weight $1$. The initial partition
		we wish to check for stability places $x, x'$ and their common neighbors
		together; for each $u\in V(G), i\in[\Delta^*+1]$ $u_i, u_i'$ are together;
		while all other vertices are singletons.
		
		It is not hard to see that the new graph has essentially the same tree-depth as
		$G$, as deleting $x,x'$ gives us $G$ with some paths of length $2$ attached to
		each vertex and some isolated vertices.
		
		For the forward direction, suppose that $G$ does contain a set of vertices $S$
		such that $|S|\ge s$ and $G[S]$ has maximum degree at most $\Delta^*$. We claim
		that if we take $S\cup\{x\}$ and add to it all $u_i$ such that $u\in S$, then
		we obtain a blocking coalition. Indeed, the utility of $x$ in the initial
		partition is $s(\Delta^*+1)-1$, while in the new coalition it is
		$|S|(\Delta^*+1)\ge s(\Delta^*+1)$. Furthermore, all vertices $u_i$ of the new
		coalition previously had utility $1$, whereas in the new coalition they have
		utility $2$. Finally, each $u\in S$ now has utility $\Delta^*+1-|N(u)\cap S|\ge
		1$ while previously $u$ had utility $0$.
		
		For the converse direction, suppose that a blocking coalition $X$ exists. Then,
		$x'$ cannot belong in such a coalition, as its utility is already maximum, and
		neither can the common neighbors of $x$ and $x'$. Similarly, $u_i'$ cannot be
		in $X$. If $x$ does not belong in $X$, then no $u_i$ can be in $X$, as the
		maximum utility she could obtain is $1$, which is the same as her initial
		utility.  As all remaining vertices only have negative utilities to each other,
		this would lead to a contradiction. Therefore, $x\in X$.  However, for $x$ to
		increase her utility, it must be the case that $X$ contains at least $s$
		vertices of $G$. To see this, observe that if $u\not\in X$, then $u_i\not\in X$
		for all $i\in[\Delta^*]$. However, we need the utility of $x$ in $X$ to be at
		least $s(\Delta^*+1)$, and since from each $u\in S$ we have at most
		$\Delta^*+1$ vertices $u_i$ which are in $X$, we conclude that $|X\cap V(G)|\ge
		s$. We now argue that $G[X\cap V(G)]$ has maximum degree at most $\Delta^*$.
		Indeed, if $u\in X\cap V(G)$ has at least $\Delta^*+1$ neighbors in $X\cap
		V(G)$, since each such neighbor contributes $-1$ to the utility of $u$, $u$
		cannot have strictly positive utility. We therefore have that $X\cap V(G)$ is a
		valid solution to the original instance.  \end{proof}
\end{toappendix}

\subsection{Algorithms}
In this section, we prove the algorithmic results complementing the hardness results of the previous section. %The first result is proved by a standard dynamic programming approach.
\begin{theorem}\label{thm:csv:trees} \ifthenelse{\boolean{short}}{\textup{($\star$)}}{}
	\textsc{Core Stability Verification} is polynomial time solvable on trees.
\end{theorem}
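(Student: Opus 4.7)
The plan is to exploit the fact that a blocking coalition may without loss of generality be taken to induce a connected subgraph of $G$, which on a tree means it is a subtree, and then solve the problem by a bottom-up dynamic programming. The connectedness reduction is standard: if $X$ is a blocking coalition and $G[X]$ is disconnected, each connected component of $G[X]$ is itself blocking, because $\ut(X,u)$ depends only on the neighbours of $u$ that lie in $u$'s own component of $G[X]$. So I root $T$ at an arbitrary vertex $r^{\ast}$, precompute $d(u):=\ut_{\PP}(u)$ for every $u\in V(T)$ in linear time, write $T_v$ for the subtree of $T$ rooted at $v$, and look for a blocking subtree.

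For each $v$ I would define
\[
M(v) := \max \left\{ \sum_{c\in\operatorname{child}(v)\cap Y} w(vc) \,:\, v\in Y\subseteq V(T_v),\ G[Y]\text{ connected},\ \ut(Y,u) > d(u)\ \forall u\in Y\setminus\{v\}\right\},
\]
the maximum utility $v$ can draw from its descendants while every proper descendant in $Y$ strictly improves. The key recurrence (with $M(v)=0$ at leaves) is
\[
M(v) \;=\; \sum_{\substack{c\in\operatorname{child}(v) \\ w(vc) > 0,\ M(c)+w(vc) > d(c)}} w(vc).
\]
Its justification uses two observations. First, since $T$ is a tree, the decisions for distinct children of $v$ decouple: the contribution of $c$'s subtree to $\ut(Y,v)$ is exactly $w(vc)$, while $c$'s own utility in $Y$ is $w(vc)$ plus what $c$ draws from its children, which is maximised at $M(c)$ by induction. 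Second, a simple exchange argument prunes any child $c$ with $w(vc)\le 0$: removing the subtree $T_c$ from a witness $Y$ weakly increases $\ut(Y,v)$ and leaves the utility of every remaining vertex unchanged, because in a tree the neighbourhood of any remaining vertex is disjoint from $T_c$.

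Finally I would conclude that $\PP$ is \emph{not} core stable iff $M(v) > d(v)$ for some $v\in V(T)$: the topmost vertex of any blocking subtree (in the rooted tree) witnesses this inequality, and conversely the optimiser of $M(v)$ together with $v$ itself forms a blocking subtree whenever $M(v)>d(v)$. Singleton blocking coalitions $\{v\}$ are automatically captured by the base case $M(v)=0$ with $d(v)<0$. A single post-order sweep computes all $M(v)$ values in $O(n)$ total time, yielding a polynomial (in fact linear) time algorithm. The only mildly subtle step is the exchange argument supporting the pruning of non-positive child-edges, but on a tree it is immediate since pendant subtrees influence only the utility of their attachment point.
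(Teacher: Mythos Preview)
Your proposal is correct and essentially identical to the paper's proof: both root the tree arbitrarily, define the same quantity (the paper calls it $\ut_{\below}(u)$, you call it $M(v)$) as the best utility the root of a subtree can obtain from a ``downwards blocking'' set, derive the same recurrence $M(v)=\sum_{c} w(vc)$ summed over children $c$ with $w(vc)$ positive and $M(c)+w(vc)>d(c)$, and conclude by checking whether $M(v)>d(v)$ for some $v$. The only cosmetic difference is that the paper prunes children with $w(vc)<0$ while you prune $w(vc)\le 0$, which of course makes no difference to the sum.
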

\begin{toappendix}
	\begin{proof}[Proof\ifthenelse{\boolean{short}}{ of \cref{thm:csv:trees}}{}] 
		This can be shown using a simple bottom-up dynamic programming approach. Let $(T,w)$ be a weighted tree and $\mathcal{P}$ a coalition structure of $(T,w)$. We arbitrarily pick a root vertex $r\in V(T)$ to define an ancestor/descendant relationship on $T$.
		For any $u\in V(T)$ we let $T_u$ be the sub-tree of $T$ rooted at $u$. We say that a set $X\subseteq V(T_u)$ with $u\in X$ is a downwards blocking coalition  if $\ut(X,v)> \ut_\mathcal{P}(v)$ for every vertex $v\in X$, $v\not=u$.
		For any $u\in V(T)$ we define $\ut_{\below}(u)$ to be the maximum utility $u$ can achieve in any downwards blocking coalition $X\subseteq V(T_u)$ which contains $u$.
		
		First note that $\ut_{\below}(u)$ of any leaf of $T$ is $0$. Furthermore, if $u$ is a vertex, $X\subseteq V(T_u)$ a downwards blocking coalition with $u\in X$ such that $\ut(X,u)=\ut_{\below}(u)$  and  $v$ a child of $u$ with $w(uv)<0$ then $v\notin X$. That is because otherwise $X\setminus V(T_v)$ will yield a downwards blocking coalition with larger utility for $u$.
		Let $u$ be a non-leaf vertex and $v_1,\dots ,v_\ell$ its children.
		Note that $v_i$ can be contained in a downward blocking coalition $X\subseteq V(T_u)$ with $u\in X$ if and only if $\ut_{\below}(v_i)+w(uv_i)>\ut_\mathcal{P}(v_i)$. Hence,
		$$\ut_{\below}(u)=\sum_{i\in [\ell]\atop{ w(uv_i)\geq 0\atop{\ut_{\below}(v_i)+w(uv_i)>\ut_\mathcal{P}(v_i)}}} w(uv_i).$$
		Now note that a downwards blocking coalition $X\subseteq V(T_u)$ with $u\in X$ is a blocking coalition of $\mathcal{P}$ if and only if $\ut_{\below}(u)>\ut_\mathcal{P}(u)$. Furthermore, there is a blocking coalition $X$ of $\mathcal{P}$ if and only if there is $u\in V(T)$ such that $X\subseteq V(T_u)$ with $u\in X$ is a downwards blocking coalition and $\ut_{\below}(u)>\ut_\mathcal{P}(u)$. That is as we can assume that any blocking coalition $X$ is connected and hence $X$ must contain a vertex $u$ which is an ancestor of every vertex $v\in X$.
		
		Hence, a bottom-up dynamic programming algorithm computing $\ut_{\below}(u)$ for every $u\in V(T)$ and then verifying whether there is $u\in V(T)$ with $\ut_{\below}(u)>\ut_\mathcal{P}(u)$ yields a linear time algorithm for \textsc{CSV}.
	\end{proof}
\end{toappendix}

\begin{theorem}\label{thm:csv:FPT:vc}
	\textsc{Core Stability Verification} can be solved in time $(\vc w_{\max})^{O(\vc)}\Delta^2+O(\vc n)$.
\end{theorem}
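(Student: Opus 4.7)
Fix a vertex cover $S\subseteq V(G)$ of size $\vc$. Since $V(G)\setminus S$ is an independent set and blocking coalitions may be assumed connected, every blocking coalition $X$ either (a) is a single vertex $v\in V\setminus S$ with $\ut_\PP(v)<0$, or (b) has $X\cap S\neq\emptyset$ and every vertex of $I_X := X\setminus S$ is a neighbor of some vertex of $S_X := X\cap S$ (in particular $|I_X|\le \vc\Delta$). The plan is to handle (a) trivially and, for (b), to enumerate $S_X$ and reduce the search for $I_X$ to a small integer feasibility problem.

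\emph{Preprocessing.} For every $v\in V\setminus S$, compute its \emph{profile} $p_v\in\{-w_{\max},\dots,w_{\max}\}^S$ (with $p_v(s)=w(vs)$ for $vs\in E$, else $0$) and the value $\ut_\PP(v)$; both take $O(\vc)$ per vertex. Compute $\ut_\PP(s)$ for every $s\in S$ in $O(\deg(s))$ time, i.e.\ $O(\vc\Delta)$ in total. Case (a) is handled by scanning all $v\in V\setminus S$ and checking $\ut_\PP(v)<0$. The total preprocessing cost is $O(\vc n)$.

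\emph{Main loop.} For each nonempty $S_X\subseteq S$ (of which there are $2^\vc-1$), set $T_s := \ut_\PP(s)-\sum_{s'\in S_X\setminus\{s\}}w(ss')$ for $s\in S_X$, and call $v\in V\setminus S$ \emph{admissible} if $\sum_{s\in S_X}p_v(s)>\ut_\PP(v)$. Group admissible vertices by profile and let $m_p$ denote the number of admissible vertices of profile $p$. Then a blocking coalition $X$ with $X\cap S=S_X$ exists iff there are integers $0\le x_p\le m_p$ satisfying
\[
\sum_{p} x_p\cdot p(s)>T_s\qquad\text{for every }s\in S_X.
\]
This is an integer linear feasibility problem with at most $\vc$ constraints and coefficients in $\{-w_{\max},\dots,w_{\max}\}$, which we solve either by invoking an off-the-shelf algorithm for ILPs with few constraints and small coefficients (e.g.\ of Eisenbrand--Weismantel type) or, more directly, by a dynamic program whose states are partial contribution vectors. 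Using the bound $|I_X|\le\vc\Delta$ together with a proximity/Steinitz-style argument keeps the set of relevant states within $(\vc w_{\max})^{O(\vc)}$ and allows processing the at most $\vc\Delta$ admissible vertices in time $(\vc w_{\max})^{O(\vc)}\cdot \mathrm{poly}(\Delta)$ per $S_X$. Summing over the $2^\vc$ choices of $S_X$ and adding the preprocessing yields the claimed bound.

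The main obstacle is the inner feasibility problem: naively, partial contribution vectors live in $[-\Delta w_{\max},\Delta w_{\max}]^\vc$, which would place $\Delta$ inside the exponent. Obtaining the FPT-in-$\vc+w_{\max}$ running time with only a polynomial overhead in $\Delta$ requires either a non-trivial ILP result or a careful structural argument that bounds the set of partial sums one actually needs to track.
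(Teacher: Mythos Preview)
Your approach is essentially the same as the paper's: guess $S_X = X\cap S$, restrict attention to independent vertices that strictly improve against $S_X$, and cast the remaining question as an integer feasibility system with at most $\vc$ inequalities and coefficients bounded by $w_{\max}$. The paper then resolves exactly the ``obstacle'' you flag in your last paragraph by invoking the Eisenbrand--Weismantel bound in the form $(m\Lambda)^{O(m)}\cdot\|b\|_\infty^2$; since $m\le\vc$, $\Lambda\le w_{\max}$, and $\|b\|_\infty\le O(\Delta w_{\max})$, this gives $(\vc\,w_{\max})^{O(\vc)}\Delta^2$ per guess, and the $\Delta$ stays outside the exponent precisely because of the $\|b\|_\infty^2$ (rather than exponential) dependence on the right-hand side. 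So your hedging between ``an off-the-shelf ILP result'' and ``a Steinitz-style DP'' is unnecessary: the former already suffices and is what the paper uses, and your grouping-by-profile step, while harmless, is not needed since the cited bound is independent of the number of variables.
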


\begin{proof}
	Given a graph $(G,w)$ and a partition $\PP$ of $V(G)$, we check whether there is a blocking coalition $X\subseteq V(G)$ of $\PP$.
	In the algorithm, we first compute a minimum vertex cover $S$ of size $\vc$ in time $O(1.2738^{\vc}+\vc n)$~\cite{ChenKX10}.
	Then we guess an intersection of $X$ and $S$. The number of possible candidates of the intersection is at most $2^{|S|}$.
	Let $I'\subseteq V(G)\setminus S$ be the set of vertices in $V(G)\setminus S$ such that each vertex $u\in I'$  satisfies $\sum_{v\in N_G(u)\cap X\cap S} w(uv)>{\ut}_{\PP}(u)$. These vertices in $I'$ could form a blocking coalition of $\PP$ by cooperating with the vertices in $X\cap S$.
	In order for $X$ to become a blocking coalition, all the vertices in $X\cap S$ must have larger utility in $X$ than their utility in $\PP$ after some vertices in $I'$ joined $X$. This condition can be represented in Integer Linear Programming (ILP) as follows:
	\begin{align}\label{eq:ILP:core}
		\sum_{v\in I'} w(uv)x_v +\sum_{v\in N_G(u)\cap X\cap S}w(uv)  \ge \ut_{\PP}(u)+ 1 \ \ \ & \forall u\in X\cap S\\
		x_v\in \{0,1\} \ \ \ & \forall v\in I' 
	\end{align}
	where the variable $x_v$ represents whether vertex $v\in I'$ joins $X$. In the left hand side of (\ref{eq:ILP:core}), $\sum_{v\in N_G(u)\cap X\cap S}w(uv)$ represents the contribution of edge weights in $X\cap S$ to the utility of $u$. Clearly, the ILP is feasible if and only if there is a blocking coalition $X$ because the utility of each agent in $X$ strictly increases. Note that we suppose that each edge weight is an integer.
	
	Here, the feasibility check of an ILP $\{Ax=b, x\ge 0, x\in \mathbb{Z}^n\}$ can be solved in time $(m\cdot \Lambda)^{O(m)}\cdot ||b||^{2}_{\infty}$ where $A\in \mathbb{Z}^{n\times m}$, $b\in \mathbb{Z}^m$, and $\Lambda$ is an upper bound on each absolute value of an entry in $A$~\cite{EisenbrandW20}.
	Since the maximum absolute value of coefficients of variables is $w_{\max}$, the range of $\ut_{\PP}(u)$ and $\sum_{v\in N_G(u)\cap X\cap S}w(uv)$  is $[-w_{\max}\Delta,w_{\max}\Delta]$, and the number of constraints is at most $|S|$, the ILP can be solved in time $(\vc w_{\max})^{O(\vc)}(w_{\max}\Delta)^2 = (\vc w_{\max})^{O(\vc)}\Delta^2$.
	Thus, the total running time is $O(1.2738^{\vc}+\vc n)+2^{\vc}(\vc w_{\max})^{O(\vc)}\Delta^2=(\vc w_{\max})^{O(\vc)}\Delta^2+O(\vc n)$.
\end{proof}

\begin{theorem}\label{thm:csv:FPT:vi}\ifthenelse{\boolean{short}}{\textup{($\star$)}}{}
	\textsc{Core Stability Verification} is FPT parameterized by
	$\vi+w_{\max}$.
	
\end{theorem}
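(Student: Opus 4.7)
The plan is to exploit the separator structure that vertex integrity provides. We first compute a set $S\subseteq V(G)$ with $|S|\le \vi$ such that every connected component of $G-S$ has at most $\vi$ vertices; this can be obtained in FPT time by standard branching. Let $C_1,\dots,C_q$ be the components of $G-S$. We then enumerate all $2^{|S|}\le 2^{\vi}$ possible intersections $Y=X\cap S$ of a hypothetical blocking coalition $X$ with $S$. For each guess, the remaining task is to decide whether we can pick a (possibly empty) subset $X_i\subseteq C_i$ for every $i$ so that $X:=Y\cup\bigcup_i X_i$ is nonempty and every $v\in X$ satisfies $\ut(X,v)>\ut_{\PP}(v)$.

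The crucial observation is that $S$ separates the components, so for each $v\in X_i$ every neighbor of $v$ inside $X$ lies in $Y\cup X_i$; hence the improvement condition for vertices of $X_i$ depends only on $Y$ and $X_i$. The different components interact only through the utilities of vertices in $Y$, namely through the inequality $\ut(Y,y)+\sum_i\sum_{u\in X_i}w(uy)>\ut_{\PP}(y)$ for each $y\in Y$. We capitalize on this by classifying components into \emph{types}: the type of $C_i$ records the labeled weighted graph $G[C_i]$, the weights of the edges from each $v\in C_i$ to each vertex of $S$, and the value $\ut_{\PP}(v)$ for every $v\in C_i$. Because each vertex in a component has degree at most $2\vi$ and all incident weights lie in $\{-w_{\max},\dots,w_{\max}\}$, every piece of information recorded in the type takes only finitely many values, so the total number of types is bounded by some function of $\vi+w_{\max}$.

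For a fixed guess $Y$, every type $t$ admits at most $2^{\vi}$ candidate patterns (subsets of the component), each either invalid or inducing a contribution vector $c_{t,p}\in\mathbb{Z}^{Y}$ with $c_{t,p}(y)=\sum_{u\in X_C}w(uy)$; all of these can be precomputed in time bounded purely by the parameters. We then set up an ILP with non-negative integer variables $x_{t,p}$ counting the number of components of type $t$ assigned the valid pattern $p$. The constraints are $\sum_p x_{t,p}=n_t$ for each type $t$ (where $n_t$ is the number of components of type $t$); $\sum_{t,p}x_{t,p}\cdot c_{t,p}(y)\ge \ut_{\PP}(y)-\ut(Y,y)+1$ for each $y\in Y$; together with a non-emptiness constraint $\sum_{t,p\neq\emptyset}x_{t,p}\ge 1$ when $Y=\emptyset$. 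Since the number of variables is bounded by a function of $\vi+w_{\max}$, feasibility can be decided in FPT time via Lenstra's algorithm (or the algorithm of Eisenbrand--Weismantel already invoked in \cref{thm:csv:FPT:vc}). The partition $\PP$ fails to be core stable iff this ILP is feasible for some guess of $Y$.

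The main obstacle is the type classification: we must encode enough information per component so that validity and contributions are determined by the type (and the guess $Y$) alone, yet keep the number of types bounded by a function of $\vi+w_{\max}$. This is precisely where the two parts of the parameter are used: the bound $\vi$ limits both the size of each component and the number of $S$-incidences it can carry, while $w_{\max}$ bounds the edge weights and, in combination with the degree bound, restricts the possible values of $\ut_{\PP}(v)$ within each component to the interval $[-\vi\, w_{\max},\vi\, w_{\max}]$. Once this bookkeeping is in place, the reduction to bounded-variable ILP is routine.
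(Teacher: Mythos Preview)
Your proposal is correct and follows essentially the same approach as the paper: guess $X\cap S$, classify the components of $G-S$ into boundedly many types (using that both component size and edge weights are parameter-bounded), and reduce the remaining question to an ILP with a parameter-bounded number of variables solved via Lenstra. The only minor differences are that you record $\ut_{\PP}(v)$ directly in the type rather than the partition's trace on the component (which is slightly cleaner and suffices), and you explicitly add the non-emptiness constraint for the case $Y=\emptyset$, which the paper's write-up glosses over.
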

%\todo[inline]{T.H wrote the proof. I think we consider not partitions of a component, but subsets. Is that correct? N: No we need partitions and then also subsets. I will point it out at the appropriate place why subsets is not enough.}
\begin{toappendix}
	\begin{proof}[Proof\ifthenelse{\boolean{short}}{ of \cref{thm:csv:FPT:vi}}{}]
		As with Theorem~\ref{thm:csv:FPT:vc}, given a graph $(G,w)$ and a partition $\PP$, we check whether there is a blocking coalition $X\subseteq V(G)$ of $\PP$.
		Let $S$ be a $\vi(k)$-set $S$ where $k=\vi$. One can find a $\vi(k)$-set $S$ in time $O(k^{k+1}n)$~\cite{DrangeDH16}.
		We define \emph{types} of components of $G[V(G)\setminus S]$ with respect to isomorphisms, weights and the partition $\mathcal{P}$. %More precisely, we say that two components $C_1$ and $C_2$ have the same type, if $G[C_1\cup S]$ and $G[C_2\cup S]$ are isomorphic and the corresponding edges have the same weight. 
		%Note that $S$ is common in $G[C_1\cup S]$ and $G[C_2\cup S]$, which means that $G[E(C_1\cup S)\setminus E(S)]$ and $G[E(C_1\cup S)\setminus E(S)]$ are isomorphic with respect to weights, where $G[E']$ is a graph induced by an edge subset $E'$ and $E(S)$ is a set of edges whose endpoints are in $S$.\todo[inline]{N: I suggest the following definition of types: 
			More precisely, we say that two components $C_1$ and $C_2$ of $G[V(G)\setminus S]$ have the same type, if there is an isomorphism $\iota$ from $G[C_1\cup S]$ to $G[C_2\cup S]$ such that $\iota(s)=s$ for every $s\in S$, corresponding edges have the same weight and for every $P\in \mathcal{P}$ there is $Q\in \mathcal{P}$ such that $\iota(P\cap C_1)=Q\cap C_2$.
			Let $\mathcal{T}$ be a set of representatives of all component types appearing in $G[V(G)\setminus S]$ and $c_t$ the number of components of $G[V(G)\setminus S]$ of type $t$ for every $t\in \mathcal{T}$.
			In addition, we define $\mathcal{X}_t$ to be the set of subsets of $V(t)$.
			
			We are ready to design an algorithm for \textsc{CSV}. 
			First, we guess an intersection of a blocking coalition $X$ and the $\vi(k)$-set $S$. The number of subsets of $S$ is at most $2^{k}$. In the next step, we formulate \textsc{CSV} as an ILP for each intersection $X\cap S$. First, we define variables $x_{(t,X_t)}$ for any type $t$ and any $X_t\in \mathcal{X}_t$. The variable $x_{(t,X_t)}$ represents the number of components of type $t$ in $G[V(G)\setminus S]$ such that the blocking coalition $X$ intersects the component in the set $Q_t$ corresponding to $X_t$ under the isomophism witnessing the type.
			Since the number of non-isomorphic components is at most  $(w_{\max})^{O(k^2)}$, the number of partitions of the vertices of a component is at most $2^{k^2}$ (any partition corresponds to the subsets of edges incident to vertices in the same part of the partition)  and the size of $\mathcal{X}_t$ for a type $t$ is at most $2^k$, the number of variables is at most $(w_{\max})^{O(k^2)}$. 
			For a vertex $u\in X\cap S$, a type $t\in \mathcal{T}$, and a set $X_t\in \mathcal{X}_t$, we denote by $w_{(t,X_t,u)}$ the sum of weights of edges incident to $u$ and a vertex in $X_t$. It represents the increase of the utility of $u$ if $X_t$ joins the blocking coalition $X$.
			
			Let $\mathcal{X}'_t\subseteq \mathcal{X}_t$ be the set of subsets of $V(t)$  such that for every $X_t\in \mathcal{X}'_t$, each vertex in $X_t$ has higher utility in $X_t\cup (X\cap S)$ than in the initial coalition $\PP$. %\todo{N: As your type does not take into account $\mathcal{P}$, a set $X_t$ can be good for some components of type $t$ but bad for others. }. 
			Moreover, let $\mathcal{T}'\subseteq \mathcal{T}$ be the set of types $t$ satisfying that  $\mathcal{X}'_t\neq \emptyset$. %These guarantee that $X$ is a blocking coalition. 
			Then deciding whether there exists a blocking coalition can be formulated as the following ILP.
			
			\begin{align*}
				\sum_{X_t\in \mathcal{X}'_t} x_{(t,X_t)} &= c_t   & \forall t\in \mathcal{T}', \\
				\sum_{t\in \mathcal{T}'} \sum_{X_t\in \mathcal{X}'_t} w_{(t,X_t,u)} x_{(t,X_t)} +\sum_{v\in N_G(u)\cap X\cap S}w(uv)  &> \ut_{\PP}(u)    & \forall u\in X\cap S. \\
			\end{align*}
			
			The first equality encodes the constraint that the numbers $X_(t,X_t)$, $X_t\in \mathcal{X}_t$ should correspond to the sizes of the parts of a partition of the $c_t$ components of type $t$. The second inequality represents the constraint that each vertex $u\in X\cap S$ has larger utility than the utility of $u$ in $\PP$.
			It is clear that the ILP is feasible if and only if there exists a blocking coalition.
			
			For the running time of the algorithm, we first compute a $\vi(k)$-set $S$ in time $O(k^{k+1}n)$. For $S$, we guess $2^k$ intersections $X\cap S$, determine the type count $c_t$ for every type $t$ in time $O(k^{k}n)$ and solve the ILP for each $X\cap S$. Given an intersection $X\cap S$, we can compute $w_{(t,X_t,u)}$  and $\sum_{v\in N_G(u)\cap X\cap S}w(vu)$ can be computed in polynomial time. 
			Finally, it is known that the feasibility of ILP with $p$ variables can be computed in FPT time parameterized by $p$~\cite{Lenstra83,FrankT87,Kannan87}.
			Since the number of variables is at most $(w_{\max})^{O(k^2)}$, one can solve \textsc{CSV} in time $f(k+w_{\max})n^{O(1)}$ where $f$ is some computable function.
		\end{proof}
	\end{toappendix}
	
	\begin{theorem}\label{thm:tw:XP}\ifthenelse{\boolean{short}}{\textup{($\star$)}}{}
		\textsc{Core Stability Verification} can be computed in time $(\Delta w_{\max})^{O(\tw)}n^{O(1)}$.
	\end{theorem}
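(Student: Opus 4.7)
The plan is to decide the existence of a blocking coalition $X \subseteq V(G)$ for $\PP$ by bottom-up dynamic programming on a nice tree decomposition of $G$ of width $\tw$, and then report that $\PP$ is core stable iff no such $X$ is found. I will use the variant of nice tree decomposition with introduce-vertex, introduce-edge, forget, and join nodes, so that every edge of $G$ is processed at exactly one node.

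At each node $t$, with bag $B_t$ and subtree vertex set $V_t$, the DP table will be indexed by triples $(S, f, b)$, where $S \subseteq B_t$ records which bag vertices are tentatively in $X$; the map $f \colon S \to \{-\Delta w_{\max}, \ldots, \Delta w_{\max}\}$ stores for each $v \in S$ the partial utility $\sum_{u \in X \cap V_t,\, uv \in E_t} w(uv)$ accumulated from the edges $E_t$ introduced so far; and $b \in \{0,1\}$ flags whether some vertex of $X$ has already been forgotten after passing its utility check. A state is realizable if some $X \cap V_t$ consistent with $(S, f)$ exists in which every already-forgotten $v \in X$ has final utility strictly greater than $\ut_\PP(v)$.

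The transitions I will use are standard. At an introduce-vertex node we guess whether the new vertex $v$ joins $X$, setting $f(v) = 0$ if so. At an introduce-edge node for $uv$ we add $w(uv)$ to both $f(u)$ and $f(v)$ when both endpoints lie in $S$. At a forget node for $v \in S$ we reject unless $f(v) > \ut_\PP(v)$, and otherwise remove $v$ from $S$ and set $b \leftarrow 1$; for $v \notin S$ the transition is trivial. At a join node we combine two child states with matching $S$-components by pointwise summing the $f$-values and OR-ing the flags. The algorithm accepts iff the root admits a realizable state with $b = 1$.

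Correctness will rest on the standard tree-decomposition property that when $v$ is forgotten every neighbor of $v$ already lies in $V_t$, so the stored $f(v)$ equals $\ut(X, v)$ and the check at forget time faithfully enforces the blocking condition vertex-by-vertex; no double-counting can occur at join nodes since the introduce-edge mechanism assigns each edge to a unique node in the decomposition. The only point that needs some care is the bookkeeping at join nodes, but this reduces to routine pointwise summation once one verifies that the two child subtrees intersect only in the bag vertices. For the running time, $S$ ranges over $2^{\tw+1}$ subsets, each coordinate of $f$ takes $O(\Delta w_{\max})$ values with at most $\tw+1$ coordinates, and $b$ is binary; the DP table therefore has size $(\Delta w_{\max})^{O(\tw)}$ per bag, and each transition is polynomial in the child table sizes, giving the claimed $(\Delta w_{\max})^{O(\tw)}\, n^{O(1)}$ total running time.
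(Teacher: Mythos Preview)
Your proposal is correct and follows essentially the same approach as the paper: both do standard bottom-up dynamic programming over a nice tree decomposition, storing for each bag the intersection $S=X\cap B$ together with a bounded-range utility value for each vertex of $S$, checking the blocking condition when a vertex is forgotten, and summing the partial utilities at join nodes. The only cosmetic differences are that you use the introduce-edge variant (whereas the paper adjusts neighbors' utilities at forget time) and that you explicitly carry the non-emptiness flag $b$; neither changes the argument or the running time analysis.
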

	
	\begin{toappendix}
		\begin{proof}[Proof\ifthenelse{\boolean{short}}{ of \cref{thm:tw:XP}}{}]
			Our algorithm relies on standard techniques (dynamic programming
			over tree decompositions) so we sketch some of the details. We are given as
			input an edge-weighted graph $(G,w)$, a nice tree decomposition of $G$, and an
			initial partition $\PP$ of $V(G)$. We are asked if there exists a blocking
			coalition $X\subseteq V(G)$ such that all vertices $u\in X$ have a strictly
			higher utility in $X$ than in their coalition in $\PP$. We begin by calculating
			for each $u\in V(G)$ the utility $u$ gains in the initial partition $\PP$. This
			can clearly be done in polynomial time in the size of the input, that is,
			polynomial in $n+\log w_{\max}$.
			
			Suppose now that the given nice tree decomposition is rooted at some arbitrary bag and consider a bag $B$. In order to describe the dynamic programming table that our algorithm needs to maintain for $B$, we define the signature of a possible blocking coalition $X$ as a pair consisting of (i) the set $X\cap B$ (ii) a function $\ut : X\cap B \to \mathbb{Z}$ which describes for each vertex $u$ that belongs in the bag $B$ and the blocking coalition $X$ how much utility $u$ gains from vertices that appear in bags below $B$ in the tree decomposition (but not in $B$). The dynamic programming algorithm maintains an entry for each possible signature that tells us if there exists a coalition $X$ that matches this signature and is feasible below $B$, that is, satisfying the criterion that all vertices which appear only below $B$ in the decomposition and belong in $X$ have strictly higher utility in $X$ than in their coalition of $\PP$.
			
			Given the above it is now straightforward to implement the DP algorithm: for Introduce nodes we consider two cases to produce the signatures of the new bag (the introduced vertex belongs in $X$ or not); for Join nodes we take all pairs of signatures that agree on $X\cap B$ and simply add their utility functions; while for Forget nodes, if we are forgetting vertex $u$ we need to discard signatures where $u\in X$ but the utility of $X$ is not strictly higher than in $\PP$, while for other signatures we adjust the utilities of the neighbors of $u$ accordingly.
			
			What remains is to estimate the running time of the algorithm. This is dominated by the number of possible signatures, which is at most $2^{\tw}(2\Delta w_{\max}+1)^{\tw}$. Here, the first factor is for storing a subset of $B$, and the second is for storing the utility of each vertex, which must necessarily be in the range $[-\Delta w_{\max}, \Delta w_{\max}]$.
		\end{proof}
	\end{toappendix}
	
	\begin{theorem}\label{thm:tw+d:FPT} \ifthenelse{\boolean{short}}{\textup{($\star$)}}{} \textsc{Core Stability Verification} can be
		computed in time $2^{O(\tw\Delta)}(n+\log w_{\max})^{O(1)}$.  \end{theorem}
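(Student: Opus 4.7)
The plan is to perform dynamic programming on a nice tree decomposition of width $\tw$, designing a state whose size avoids any dependence on $w_{\max}$. The bottleneck of the XP algorithm of \cref{thm:tw:XP} was that the partial utility of a bag vertex $u$ from already-processed vertices lies in $[-\Delta w_{\max},\Delta w_{\max}]$, forcing the state to be polynomial in $w_{\max}$. The key observation is that this partial utility is entirely determined by which of $u$'s neighbors are in the potential blocking coalition $X$, and $u$ has at most $\Delta$ neighbors in $G$. I therefore track, for each $u$ in the current bag $B$, a ternary label for each of $u$'s neighbors in $G$, with values \emph{in $X$}, \emph{not in $X$}, or \emph{pending} (meaning not yet introduced). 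Combined with a single bit for $u$'s own membership in $X$, this gives at most $2\cdot 3^\Delta$ values per bag vertex and hence $2^{O(\tw\Delta)}$ states per bag.

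The transitions on a rooted nice tree decomposition proceed as follows. For a leaf bag the unique state has all slots empty. At an \textbf{Introduce} node for a vertex $v$, branch on whether $v\in X$, initialize all of $v$'s neighbor labels to \emph{pending}, and for every neighbor $u\in B\setminus\{v\}$ of $v$ rewrite $u$'s label for $v$ (and $v$'s label for $u$) from \emph{pending} to the chosen membership. At a \textbf{Forget} node for $v$, the connectivity property of tree decompositions guarantees that every neighbor of $v$ in $G$ has already co-appeared with $v$ in some processed bag, so none of $v$'s labels is \emph{pending}; if $v\in X$, compute $\ut(X,v)=\sum_{u\in N_G(v)}w(uv)\cdot[u\in X]$ directly from $v$'s labels and discard the state unless $\ut(X,v)>\ut_{\PP}(v)$; then remove $v$'s slot. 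At a \textbf{Join} node, two child states are combined only if they agree on every membership bit and on every \emph{decided} (non-pending) neighbor label; for a neighbor $w\notin B$ of a bag vertex $u$, connectivity implies that $w$ appears in bags of at most one of the two child subtrees, so at most one child has a decided label for $w$, and that label (or \emph{pending} if neither) is inherited. At the root, the input partition fails core stability iff the DP produces a feasible state whose $X$ is nonempty.

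The main technical point is the \textbf{Join} step and, specifically, the propagation rule for non-bag neighbors $w$ of $u\in B$. Its well-definedness follows from the standard fact that the set of bags containing any fixed vertex induces a connected subtree of the decomposition: since $u\in B$ but $w\notin B$, the bags of $w$ form a connected subtree that lies entirely on one side of the join, so at most one child state contains information about $w$ and the merge is unambiguous. The remaining correctness check is routine, matching the semantics of the state to genuine subsets $X\subseteq V(G)$ that satisfy $\ut(X,v)>\ut_{\PP}(v)$ for every forgotten $v\in X$. For the running time, each bag hosts $2^{O(\tw\Delta)}$ states, each transition can be processed in $2^{O(\tw\Delta)}$ time (the join uses the consistency rules to avoid a product blow-up), arithmetic on utilities involves integers of bit-length $O(\log(\Delta w_{\max}))$, and a nice tree decomposition has $O(\tw\cdot n)$ nodes (and can be computed in FPT time in $\tw$), yielding the claimed overall bound of $2^{O(\tw\Delta)}(n+\log w_{\max})^{O(1)}$.
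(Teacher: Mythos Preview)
Your proposal is correct and rests on the same key observation as the paper: the partial utility of a bag vertex $u$ is determined by which of its at most $\Delta$ neighbors lie in $X$, so the state space per bag vertex is $2^{O(\Delta)}$ rather than $O(\Delta w_{\max})$. The paper's proof is terser---it simply reuses the DP of \cref{thm:tw:XP} and notes that the utility value stored for each bag vertex takes at most $2^{\Delta}$ distinct values---whereas you encode the same information more explicitly via ternary neighbor labels (in~$X$ / not in~$X$ / pending) and spell out the Introduce, Forget, and Join transitions; both routes yield the same $2^{O(\tw\Delta)}$ bound.
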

	\begin{toappendix}
		\begin{proof}[Proof\ifthenelse{\boolean{short}}{ of \cref{thm:tw+d:FPT}}{}]
			
			This is obtained using the same algorithm as in Theorem \ref{thm:tw:XP} using the following observation: in the analysis of the algorithm of Theorem \ref{thm:tw:XP} we have a DP table which stores for each vertex $u$ of $X\cap B$ the utility that $u$ gains in the blocking coalition $X$. The range of possible values was then bounded by $[-\Delta w_{\max}, \Delta w_{\max}]$. However, we can also observe that the set of possible utilities of a vertex $u$ with degree $\Delta$ has size at most $2^{\Delta}$, since the utility of $u$ is obtained by the subset of its neighbors which are contained in $X$. As a result, our algorithm may consider only the $2^{\Delta}$ possible utility values for each vertex of the bag, giving $2^{\Delta\tw}$ possible distinct solution signatures. The rest of the analysis is identical to that of Theorem \ref{thm:tw:XP}.
		\end{proof}
	\end{toappendix}

\section{Core Stability}

In this section we study the complexity of  \textsc{Core Stability} (\textsc{CS}). We first show that \textsc{CS} remains $\Sigma_2^p$-complete even on graphs of bounded vertex cover number (\cref{thm:csfSigma2}). 

The second part of this section is dedicated to extending our understanding of
the complexity of \textsc{CS} parameterized by $\tw+\Delta$. We
give an algorithm for \textsc{CS} running in time
$2^{2^{\mathcal{O}(\Delta\tw^2)}}n$ (\cref{thm:doubleExpAlgoCS}) and
improving on the previous algorithm based on Courcelle's Theorem by Peters
\cite{Peters16a}. In order to avoid having to formulate a tedious dynamic
programming algorithm, we instead obtain our algorithm via a reduction to
$\exists\forall$-SAT, which is known to be solvable in double-exponential time
(in treewidth) \cite{Chen04}.  We complement these results by giving an ETH
based lower bound of $2^{2^{o(\pw)}}$ on graphs of bounded degree
(\cref{thm:ETHlowerBoundCS}). This shows that the double-exponential dependence
of our algorithm on treewidth is in fact inevitable, and confirms a pattern
shown by other $\Sigma_2^p$-complete problems \cite{LampisM17}.

Before we proceed, we describe a gadget that  appears in
several reductions.

\subsection{An Auxiliary Gadget}\label{sec:auxiliaryGadget} 

For reductions to \textsc{CS} we use a gadget to control what core
stable partitions (if any exist) have to look like. For this we rely on an
example given in \cite[Example 1]{AzizBS13} which shows a small concrete graph
$H$ (on six vertices) that does not admit any core stable partition and is
minimal for this property.  The intuitive idea is that we \emph{attach}
copies of $H$ onto various vertices of our construction with the aim of forcing
the vertex onto which $H$ is attached to be in the same partition as some
vertex of $H$.  Because by assumption it is impossible to partition $H$ in a
stable way, the vertex on which a copy of $H$ is attached \emph{must} be placed
in the same coalition as some vertex of $H$. Building on this idea, we 
also attach a copy of $H$ on a set of vertices of $G$, forcing at least one of
them in the same partition as a vertex of $H$.

To make this idea formally precise, we will use the following minimally non-stable graph
$(H,w_\rho)$ where $\rho$ is some sufficiently small integer.  Let $H$ be $K_6$
with vertex set $\{h,h_1,\dots,h_5\}$. We set
$w_\rho(hh_1)=w_\rho(h_2h_3)=w_\rho(h_4h_5)=5$, $w_\rho(h_1h_2)=
w_\rho(h_3h_4)=w_\rho(h_5h)=4$ and
$w_\rho(h_1h_3)=w_\rho(h_3h_5)=w_\rho(h_5h_1)=3$. We let all remaining edges
have weight $\rho$.  Using the same argument as \cite{AzizBS13} we can show
that $(H,w_\rho)$ is not core stable if $\rho<-15$.  We further observe that
for the graph $(H\setminus\{h\},w_\rho|_{\{h_1,\dots,h_5\}})$ the partition
$(\{h_1,h_2,h_3\},\{h_4,h_5\})$ is core stable if $\rho<-15$. We denote this
partition of $(H\setminus\{h\},w_\rho|_{\{h_1,\dots,h_5\}})$ by
$\mathcal{P}_{(H,w_\rho)}$.

For a graph $(G,w)$, $S\subseteq V(G)$ an independent set in $G$ and $\rho,\xi$ integers, we let $(G',w')$ be the graph obtained in the following way. The graph $(G',w')$ consist of the disjoint union of $(G,w)$ and $(H,w_\rho)$ to which we add the edges $hs$ of weight $\xi$ for every $s\in S$ and  edges $h_is$ and $ss'$  of weight $\rho$ for every $i\in [5]$, $s\not=s'\in S$.
We say that $(G',w')$ is obtained from $G$ by \emph{$\xi$-attaching  $(H,w_\rho)$ at $S$}.  In the next lemma we show that attaching the gadget $(H,w_\rho)$ enforces certain properties for coalitions.

\begin{lemmarep}\label{lem:auxiliaryGadget}
	Let $(G,w)$ be a graph, $S\subseteq V(G)$ an independent set in $G$, $\rho$ an integer smaller than $- \max_{u\in V(G')}\sum_{uv\in E(G'),\atop{w'(uv)>0}}w(uv)$, $\xi\geq 9$ and $(G',w')$ the graph obtained from $(G,w)$ by $\xi$-attaching 
	$(H,w_\rho)$ at $S$. The graph $(G',w')$ has the following properties.
	\begin{enumerate}[left=5pt , label=$(\Pi\arabic*)$]
		\item\label{prop:gadgetP1}  For every connected core stable partition $\mathcal{P}$ of $(G',w')$  there is $s\in S$ and $P\in \mathcal{P}$ such that $\{s,h\}\in P$ and $s'\notin P$ for every $s'\in S$, $s'\not=s$.
		\item\label{prop:gadgetP2} For every connected core stable partition $\mathcal{P}$ of $(G',w')$ and every part $P\in \mathcal{P}$ with $P\cap \{h_1,\dots,h_5\}\not=\emptyset$ we have that $P\subseteq \{h_1,\dots,h_5\}$.
		\item\label{prop:gadgetP3} For every partition $\mathcal{P}$ of $(G',w')$ with $\mathcal{P}_{(H,w_\rho)}\subseteq \mathcal{P}$ and $\{h,s\}\subseteq P$ for some $s\in S$ and some $P\in \mathcal{P}$,
		every blocking coalition $X$ is disjoint from $V(H)$.
	\end{enumerate}
	Further, the properties are satisfied even if we add additional edges of weight $\rho$ to $(G',w')$.
\end{lemmarep}
\begin{proof}
	To argue the furthermore part, we assume that we arbitrarily added edges of weight $\rho$ to $(G',w')$. Note that we cannot add any edges between vertices of $V(H)$ as $H$ is complete.
	Let $\mathcal{P}$ be a connected core stable partition of $(G',w')$.
	First observe that no two vertices $u,v\in V(G')$ with $w(uv)=\rho$ can be contained in the same part of the partition $\mathcal{P}$ due to the assumption that $\rho<- \max_{u\in V(G')}\sum_{uv\in E(G'),\atop{w'(uv)>0}}w(uv)$ and hence the utility of $u$ and $v$ in $\mathcal{P}$ would be negative. Similarly, no two vertices $u,v\in V(G')$ with $w(uv)=\rho$ can both be contained in a blocking coalition.
	
	Let $P\in \mathcal{P}$ be the part containing $h$.
	We first argue that  any part $P'\in \mathcal{P}$, $P'\not=P$ (and hence $h\notin P'$) containing some $h_i$, $i\in [5]$ is a subset of $\{h_1,\dots,h_5\}$. If $P'$ is such a part, then $P'$ cannot contain any vertex from  $N^\rho(H)=\{v\in V(G):w(\tilde{h}v)=\rho \text{ for some } \tilde{h}\in V(H)\}$. Since removing $h$ and $N^\rho$ separates the vertices   $\{h_1,\dots,h_5\}$ from the rest of the graph and we assumed that $G[P']$ is connected, we conclude that $P'\subseteq \{h_1,\dots,h_5\}$.
	Since $(H,w_\rho)$ is not core stable this implies that the part $P$ must  contain some vertex $v\in V(G)$.  But since we assumed that $G[P]$ is connected  and the removal of $S$ and $\{v\in V(G):w(vh)=\rho\}$ separates the set of vertices $V(H)$ from the rest of the graph, $P$ must contain some vertex from $S$. Since $w(ss')=\rho$ for any $s,s'\in S$, $s\not=s'$, this implies that  property~\ref{prop:gadgetP1}  is true. Additionally, observe that $P\cap S\not=\emptyset$ imply that $P\cap \{h_1,\dots,h_5\}=\emptyset$  and hence property~\ref{prop:gadgetP2} follows from our observation that  $P'\subseteq \{h_1,\dots,h_5\}$ for every part $P'\in \mathcal{P}$, $P'\not=P$ containing some $h_i$, $i\in [5]$.
	
	To prove property~\ref{prop:gadgetP3} assume that $\mathcal{P}$ is a partition of $V(G')$ with $\{h_1,h_2,h_3\}, \{h_4,h_5\}\in \mathcal{P}$ and let $P\in \mathcal{P}$ be  a part and $s\in S$ be an element such that $\{h,s\}\subseteq P$. Assume $X$ is a blocking coalition.  Towards a contradiction, assume that $h\in X$. First observe that $h$ has utility $\xi$ in $\mathcal{P}$. Furthermore, $h$ is incident to two edges of positive weight in $H$ and these two weights sum up to $9\leq \xi$. 
	Hence, there has to be some vertex $u\in V(G)$ which is contained in $X$ and $w(hu)>0$. By construction, this implies that $u$ has to be a vertex in $S$. But then no other vertex $v\in S\cup \{h_1,h_5\}$ can be contained in $X$ as  $w(uv)=\rho$ and hence the utility of $h$ in $X$ can be at most $\xi$. Therefore, $h\notin X$. Now observe that $h_2$ cannot join $X$ because its utility in $\mathcal{P}$ is the sum of its positive weight edges. But then
	$h_1$ cannot join because the sum of its positive edges excluding edges incident to $h$ and $h_2$ is $6$ and $h_2$ has utility $7$ in $\mathcal{P}$. But then $h_3$ cannot join $X$ as its the sum of its positive edges excluding edges incident to $h_1$ and $h_2$ is $7$ while $h_3$ has utility $8$ in $\mathcal{P}$. Finally, now $h_4$ and $h_5$ cannot join $X$ as $\{h_4,h_5\}\subseteq X$ and hence they cannot improve their utility without further vertices from $H$. This concludes the proof of property~\ref{prop:gadgetP3}.  
\end{proof}
We can also use the gadget in a slightly different way. Details  are deferred to the Appendix.
\begin{toappendix}
	For a graph $(G,w)$, $S\subseteq V(G)$ an independent set in $G$, $\rho< -15$ and $\xi\geq 9$ we say that $(G',w')$ is obtained from $(G,w)$ by \emph{$\xi$-neighborhood attaching $(H,w_\rho)$ at $S$} if $(G',w')$ is obtained from $(G,w)$ by $\xi$-attaching $(H,w_\rho)$ at $S$ and adding edges $hv$ of weight $\rho$ for every $v\in N_G(S)\setminus S$.
	\begin{lemma}\label{lem:auxiliaryNeighborhoodGadget}
		Let $(G,w)$ be a graph, $S\subseteq V(G)$ an independent set, $\rho$ and integer smaller than $- \max_{u\in V(G')}\sum_{uv\in E(G'),\atop{w'(uv)>0}}w(uv)$, $\xi\geq 9$ and $(G',w')$ the graph obtained from $(G,w)$ by $\xi$-neighborhood attaching 
		$(H,w_\rho)$ at $S$. The graph $(G',w')$ has properties~\ref{prop:gadgetP2}, \ref{prop:gadgetP3} and the following property.
		\begin{enumerate}[left=5pt , label=$(\Pi'\arabic*)$]
			\item\label{prop:gadgetP1Alternative}  For every connected core stable partition $\mathcal{P}$ of $(G',w')$  there is $s\in S$ such that $\{h,s\}\in \mathcal{P}$.
		\end{enumerate}
	\end{lemma}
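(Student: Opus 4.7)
The plan is to derive properties~\ref{prop:gadgetP2} and~\ref{prop:gadgetP3} essentially for free from Lemma~\ref{lem:auxiliaryGadget}, and then to strengthen property~\ref{prop:gadgetP1} to~\ref{prop:gadgetP1Alternative} by exploiting the new $\rho$-weight edges between $h$ and $N_G(S)\setminus S$. By definition, $(G',w')$ is obtained from the $\xi$-attached graph of Lemma~\ref{lem:auxiliaryGadget} by adding the edges $hv$ of weight $\rho$ for every $v\in N_G(S)\setminus S$. The ``furthermore'' clause of Lemma~\ref{lem:auxiliaryGadget} says that both~\ref{prop:gadgetP2} and~\ref{prop:gadgetP3} survive the addition of arbitrary $\rho$-weight edges, so both are immediate.

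For~\ref{prop:gadgetP1Alternative}, let $\mathcal{P}$ be a connected core stable partition of $(G',w')$. I would first invoke property~\ref{prop:gadgetP1} to obtain $s\in S$ and a part $P\in\mathcal{P}$ with $\{h,s\}\subseteq P$ and $P\cap(S\setminus\{s\})=\emptyset$; property~\ref{prop:gadgetP2} then gives $P\cap\{h_1,\dots,h_5\}=\emptyset$. The main new step is to rule out $P\cap(N_G(S)\setminus S)\neq\emptyset$: since $P\cap V(H)=\{h\}$, the only positive edge at $h$ inside $P$ is $hs$ of weight $\xi$, so any $u\in P\cap(N_G(S)\setminus S)$ would contribute the new edge $hu$ of weight $\rho$ and give $\ut_\mathcal{P}(h)\leq \xi+\rho$. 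Because $\rho$ is strictly smaller than minus the total of all positive weights at $h$ (which itself already includes $\xi$), we obtain $\ut_\mathcal{P}(h)<0$, and so the singleton $\{h\}$ is a blocking coalition, contradicting core stability of $\mathcal{P}$.

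It remains to exclude any further vertex of $V(G)$ from $P$ using connectivity. Any hypothetical extra $v\in P\cap V(G)$ must lie in $V(G)\setminus(S\cup N_G(S))$ by the preceding paragraphs; in $G'$, such a $v$ has no edge to $h$ (whose $V(G)$-neighbours are exactly $S\cup(N_G(S)\setminus S)$ by construction) and no edge to $s$ (since $N_G(s)\subseteq N_G(S)\setminus S$, as $S$ is independent in $G$). The same conclusion applies to any further vertex of $P\cap V(G)$, so no such $v$ can be linked to $\{h,s\}$ inside $G'[P]$, forcing $G'[P]$ to split off a piece inside $V(G)\setminus(S\cup N_G(S))$ and contradicting the connectedness of $\mathcal{P}$. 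Hence $P=\{h,s\}$, which is exactly~\ref{prop:gadgetP1Alternative}. The step I expect to be the most delicate to double-check is the singleton-blocking argument, which relies on Definition~\ref{def:core} allowing single-vertex blockers and on~\ref{prop:gadgetP2} guaranteeing that no positive edge inside $V(H)$ accidentally rescues $h$'s utility.
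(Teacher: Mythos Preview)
Your proof is correct and follows essentially the same approach as the paper's. Both arguments invoke the ``furthermore'' clause of Lemma~\ref{lem:auxiliaryGadget} to inherit~\ref{prop:gadgetP1},~\ref{prop:gadgetP2},~\ref{prop:gadgetP3}, then use the new $\rho$-edges between $h$ and $N_G(S)\setminus S$ together with connectivity of $G'[P]$ to force $P=\{h,s\}$; the only difference is that you first eliminate $N_G(S)\setminus S$ via the singleton-blocking argument and then invoke connectivity, whereas the paper invokes connectivity first to find a neighbour of $s$ in $P$ and then derives the $\rho$-edge contradiction.
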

	\begin{proof}
		Since $(G',w')$ was obtained from $(G,w)$ be $\xi$-attaching $(H,w_\rho)$ at $S$ and adding some edges of weight $\rho$ we get that \ref{prop:gadgetP1},\ref{prop:gadgetP2} and \ref{prop:gadgetP3} are satisfied. Assume $\mathcal{P}$ is a connected core stable partition and $s\in S$ is an element and $P\in \mathcal{P}$ a part such that $\{s,h\}\subseteq P$. Towards a contradiction assume that there is $u\in P$ such that $u\not= s$, $u\not= h$. By property \ref{prop:gadgetP1} and \ref{prop:gadgetP2} we know that $u\in V(G)\setminus S$. As $h$ has no neighbors in  $V(G)\setminus S$ and $G[P]$ is connected, this implies that $P$ must contain some vertex $v\in V(G)\setminus S$ which is adjacent to $s$. But this cannot happen as $w(hv)=\rho$. Therefore, $P=\{h,s\}$ proving property \ref{prop:gadgetP1Alternative}. 
	\end{proof}
	
	Observe that if we attach multiple copies of $(H,w_\rho)$ to some graph $G$, then the resulting graph does not depend on the order in which we attached copies of $(H,w_\rho)$ as long as the sets, at which we attached the copies, are disjoint.
\end{toappendix}
\subsection{Core stability on graphs of bounded vertex cover number}
In this section we prove the following result.
\begin{theorem}\label{thm:csfSigma2}
	\textsc{Core Stability}  is $\Sigma_2^p$-complete on graphs of bounded vertex cover number.
\end{theorem}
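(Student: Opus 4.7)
The plan is as follows. Membership in $\Sigma_2^p$ is routine: a partition $\PP$ is a polynomial-size existential witness, and checking that $\PP$ is core stable is exactly the coNP statement ``no blocking coalition exists''. So the interesting direction is $\Sigma_2^p$-hardness under the extra constraint of constant vertex cover number.

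I would reduce from a canonical $\Sigma_2^p$-complete problem of the form $\exists X\,\forall Y\,\psi(X,Y)$, choosing a convenient normal form for $\psi$ (for example a DNF, so that ``$\psi$ is falsified'' becomes a DNF-style threshold condition that can be encoded by a sum-of-weights comparison in the spirit of the Partition reduction of \cref{thm:csv:vc:weak}). The construction will build an ASHG $(G,w)$ whose only cover vertices are a constant-size set of \emph{hubs} together with the constantly many vertices of a few copies of the gadget $(H,w_\rho)$ from \cref{sec:auxiliaryGadget}; all remaining vertices form an independent set of leaves attached to the hubs, and both the $X$- and $Y$-variables live on these leaves. Despite the vertex cover being $O(1)$, the number of distinct connected partitions of such a graph is still exponential in the number of leaves, so there is room to encode an arbitrary $X$-assignment.

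The first step is to pin down the hub structure. I would attach one or more copies of $(H,w_\rho)$ to the hubs using \cref{lem:auxiliaryGadget}, ensuring via property~\ref{prop:gadgetP1} that the hub-level partition in any connected core stable solution is unique and predictable, and via property~\ref{prop:gadgetP3} that all blocking coalitions can be assumed disjoint from $V(H)$. This localizes the interesting behavior to the leaves. Next, for each $x_i\in X$, I introduce a leaf $v_i$ adjacent to two specific hubs with weights tuned so that $v_i$ has exactly two symmetric attractive options, each amounting to joining one of two prescribed hub-coalitions and thereby encoding $x_i=\mathrm{true}$ or $x_i=\mathrm{false}$; staying a singleton, or joining any other coalition, is strictly worse. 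Similarly, for each $y_j\in Y$ I introduce a leaf $u_j$ whose weights are set so that $u_j$ is a singleton in any core stable partition, but can potentially be recruited into a blocking coalition. Finally, using Partition-style exponentially large weights, I tune the interaction between the $u_j$'s, the hubs and the $v_i$'s so that a subset $T\subseteq\{u_j\}_j$ together with a specified set of hubs forms a blocking coalition exactly when the $Y$-assignment encoded by $T$, combined with the $X$-assignment encoded by the current partition of the $v_i$'s, falsifies $\psi$. Consequently a core stable partition exists iff $\exists X\,\forall Y\,\psi(X,Y)$ holds, while the vertex cover remains constant (the hubs plus the vertices of the $(H,w_\rho)$-copies).

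The hardest part will be the simultaneous weight engineering. Three regimes must coexist without interference: the strongly negative $\rho$-weights that make \cref{lem:auxiliaryGadget} applicable; moderate weights on the $X$-gadgets that force each $v_i$ to pick one of the two prescribed options without ever creating a spurious blocking coalition on its own; and the large Partition-style weights whose sums encode the clauses of $\neg\psi$. Separating these three scales, in the style of the $\epsilon$/$a_i$/$s$ trick of \cref{thm:csv:vc:weak}, and ruling out unintended blocking coalitions that mix $v_i$-leaves, $u_j$-leaves and hub vertices in unexpected ways, is where the bulk of the technical care will go; this mixing analysis is the main obstacle I anticipate.
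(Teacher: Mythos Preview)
Your architecture is exactly the paper's: a constant number of hub vertices, a few copies of the gadget $(H,w_\rho)$ attached to the hubs via \cref{lem:auxiliaryGadget} to rigidify the hub-level partition and to exclude gadget vertices from any blocking coalition, an independent set of leaves encoding both the existential and the universal part, and weight engineering in the style of \cref{thm:csv:vc:weak}. So at the level of ``what kind of construction is this'' you and the paper agree.

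The one substantive difference is the choice of source problem, and here the paper makes a move that sidesteps precisely the obstacle you flag as the hardest part. You propose to start from a generic $\exists X\,\forall Y\,\psi$ with $\psi$ in DNF and then encode ``$\psi$ is falsified'' as a sum-of-weights condition. But falsifying a DNF means falsifying \emph{every} term, i.e.\ a conjunction of disjunctive constraints; packaging that into a single threshold test checkable by $O(1)$ hubs is not a Partition-style trick, and your proposal does not say how you would do it. The paper instead reduces from \textsc{$\exists\forall$-Partition}: given integer sets $A,B$, is there $A'\subseteq A$ such that for all $B'\subseteq B$ one has $\sum_{c\in A'\cup B'}c\neq s/2$? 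Here the universal condition is \emph{already} one numerical (in)equality, so two hubs $x,\widehat{x}$ suffice: one hub enforces $\sum\le s/2$, the other $\sum\ge s/2$, and a blocking coalition exists iff both can be satisfied simultaneously. The $A$-leaves play your $v_i$ role (each chooses the $x$- or the $\widehat{x}$-coalition, encoding $A'$), and the $B$-leaves play your $u_j$ role (singletons recruitable into a blocking coalition, encoding $B'$). With this source problem the ``simultaneous weight engineering'' reduces to the same $\epsilon/\delta/s$ scale separation as in \cref{thm:csv:vc:weak}, and the vertex cover is an explicit constant (twelve). In short, your plan is right but underdetermined; the missing ingredient is to pick a numerical $\Sigma_2^p$-complete problem rather than a Boolean one, after which the construction you sketch goes through almost verbatim.
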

To obtain this result we use a variation of the constructions used to prove \cref{thm:csv:vc:weak} and \cref{thm:csv:vi} however reducing from an appropriate variant of \textsc{Partition}.
%\todo[inline]{ML: Is this really correct regarding the polynomial bound on the weights? I would think we need exponentially large weights for this result...N: Yes, you are right.}

\begin{proof}
	\begin{figure}
		\centering
		\includegraphics[scale=0.8]{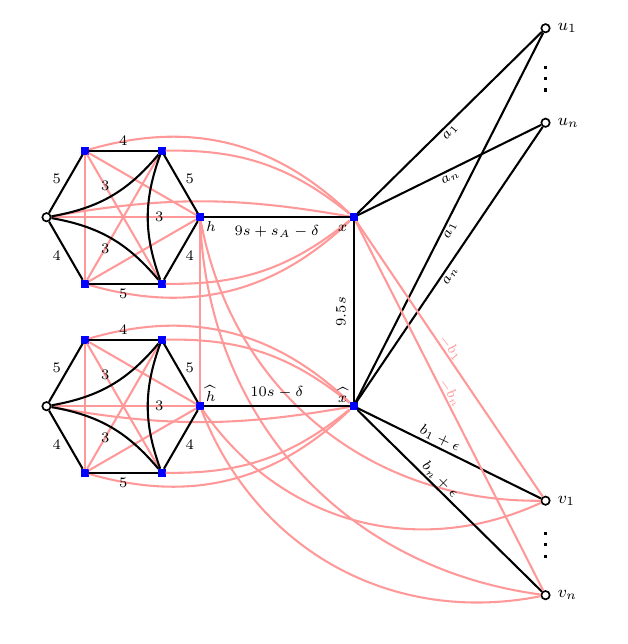}
		\caption{The graph $(G,w)$ constructed in the proof of Theorem~\ref{thm:csfSigma2}. Here all edges without label have weight $\rho$. Vertex cover vertices are depicted by blue squares.}
		\label{fig:EAPartition-cs}
	\end{figure}
	First observe that \textsc{CS} is in $\Sigma_2^p$ since for  any guessed coalition structure $\mathcal{P}$ of $(G,w)$ deciding whether $\mathcal{P}$ is core stable is in coNP. That is, given a  blocking coalition $X$ we can verify in polynomial time (by comparing utilities in $X$ and $\mathcal{P}$) whether $\mathcal{P}$ is not core stable.
	
	We show $\Sigma_2^p$-hardness by a reduction from \textsc{$\exists\forall$-partition}. Given two sets of positive integers $A=\{a_1, \ldots, a_n\}$ and $B=\{b_1, \ldots, b_n\}$, the \textsc{$\exists\forall$-partition} problem asks whether there exists a subset $A'\subseteq A$ such that for every subset $B'\subseteq B$ it holds that $\sum_{c\in A'\cup B'}c \not=s/2$ where $s=\sum_{c\in A\cup B} c$.  We let $s_A=\sum_{a\in A} a$ and $s_B=\sum_{b\in B} b$.
	Note that a variant of \textsc{$\exists\forall$-partition} was shown to be $\Sigma_2^p$-hard \cite[Lemma 6.2.]{berman1997complexity} where a target value $t$ is given additionally and the question is whether there exists $A'\subseteq A$ such that for every $B'\subseteq B$ it holds that $\sum_{c\in A'\cup B'}c \not=t$. We can reduce this variant to \textsc{$\exists\forall$-partition} by adding integer $2s-t$ to $A$ and $s+t$ to $B$ where $s=\sum_{c\in A\cup B} c$. 
	
	We construct an instance $(G,w)$ of \textsc{CS} as follows. First we let $\delta>0$ be an integer smaller than $\min_{i\in [n]}\{a_i,b_i\}$ and $\epsilon>0$ be an integer such that $\epsilon<\delta/n$. We can assume that such integers exist by picking a suitable positive integer and multiplying all elements in $A\cup B$ by it. We further set $\rho=-30s$. 
	We  introduce vertices $x,\widehat{x}$ and add edge $x\widehat{x}$ of weight $9.5s$. 
	We further introduce vertices $u_i,v_i$ for every $i\in [n]$ and add edges $u_ix$, $u_i\widehat{x}$ of weight $a_i$, $v_ix$ of weight $-b_i$ and $v_i\widehat{x}$ of weight $b_i+\epsilon$.  
	We now take two copies $(H,w_\rho)$ on vertex set $\{h,h_1,\dots,h_5\}$ 
	and  $(\widehat{H},\widehat{w}_\rho)$ on vertex set $\{\widehat{h},\widehat{h}_1,\dots,\widehat{h}_5\}$ 
	of the graph $(H,w_\rho)$ and $(9s+s_A-\delta)$-attach $(H,w_\rho,h)$ at $\{x\}$ and $(10s-\delta)$-attach $(\widehat{H},\widehat{w}_\rho,\widehat{h})$ at $\{\widehat{x}\}$. Additionally, we add an edge $h\widehat{h}$ of weight $\rho$ and $v_ih$, $v_i\widehat{h}$ of weight $\rho$ for every $i\in [n]$. 
	For an illustration of the construction 
	see Figure~\ref{fig:EAPartition-cs}. 
	Note that $G$ has vertex cover number $12$ which is witnessed by the vertex cover $\{x,\widehat{x},h_1,h_2,h_4,h_5,h,\widehat{h}_1,\widehat{h}_2,\widehat{h}_4,\widehat{h}_5,\widehat{h}\}$. In the following we argue that $(A,B)$ is a YES-instance of \textsc{$\exists\forall$-partition} if and only if $(G,w)$ admits a core stable partition.\\
	
	First assume that $(A,B)$ is a YES-instance of \textsc{$\exists\forall$-partition} and let $A'\subseteq A$ be a subset such that $\sum_{c\in A'\cup B'}c \not=s/2$ for every $B'\subseteq B$. We define a partition $\mathcal{P}$ of $V(G)$ as follows. The partition $\mathcal{P}$ contains partitions $\mathcal{P}_{(H,w_\rho)}$ and $\mathcal{P}_{(\widehat{H},\widehat{w}_\rho)}$ defined in the previous section. 
	Furthermore, $\mathcal{P}$ contains the sets $\{h,x\}\cup \{u_i:a_i\in A'\}$, $\{\widehat{h},\widehat{x}\}\cup \{u_i:a_i\notin A'\}$ and $\{v_i\}$ for every $i\in [n]$. We claim that $\mathcal{P}$ is core stable and argue that there is no blocking coalition.
	
	Towards a contradiction, assume that  $X$ is a blocking coalition of $\mathcal{P}$. 
	By Lemma~\ref{lem:auxiliaryGadget} \ref{prop:gadgetP3} we know that $X\subseteq \{x,\widehat{x}\}\cup\{u_i,v_i:i\in [n]\}$. We first argue that $\widehat{x}$ must be in $X$. To this end, observe that if any vertex $v_i\in X$, then $\widehat{x}\in X$ as $v_i$ has utility $0$ in $\mathcal{P}$ and the only edge incident to $v_i$ with positive weight is $\widehat{x}v_i$. Additionally, if any vertex $u_i\in X$, then $\widehat{x}\in X$ as $u_i$ has utility $a_i$ in $\mathcal{P}$ and the only edges incident to $u_i$ of positive weight are $xu_i$ and $\widehat{x}u_i$ which both have weight $a_i$. Lastly, if $x\in X$, then $\widehat{x}\in X$ as $x$ has utility at least $9s+s_A-\delta$ in $\mathcal{P}$ and the sum of all positive weights of edges incident to $x$ is $s$ if we exclude  $x\widehat{x}$ and $hx$. Note that $h$ does not join $X$. Since $X\not=\emptyset$, we conclude that $\widehat{x}\in X$. This implies that $x\in X$ as $x$ has utility at least $10s-\delta$ in $\mathcal{P}$ and the sum of all positive weights of edges incident to $x$ if we exclude the edges $x\widehat{x}$ and $\widehat{h}\widehat{x}$ is $s_A$.

	We set $B'=\{b_i: v_i\in X\}$ and claim that $\sum_{c\in A'\cup B'}c =s/2$. First note that $x$ has utility $9s+s_A-\delta+\sum_{a\in A'}a$ in $\mathcal{P}$ while $x$ has utility $9.5 s+\sum_{u_i\in X}a_i-\sum_{b\in B'}b$ in $X$. Since $X$ is a blocking coalition we get that
	$$9s+s_A-\delta+\sum_{a\in A'}a<9.5 s+\sum_{u_i\in X}a_i-\sum_{b\in B'}b\leq 9.5 s+s_A-\sum_{b\in B'}b.$$
	Hence, $s/2\geq\sum_{c\in A'\cup B'}c$ since $\delta<\min_{i\in [n]}\{a_i,b_i\}$. Furthermore, $\widehat{x}$ has utility $10s-\delta+\sum_{a\notin A'}a$ in $\mathcal{P}$ while $\widehat{x}$ has utility $9.5 s+\sum_{u_i\in X}a_i+\sum_{b\in B'}(b+\epsilon)$ in $X$. As $X$ is a blocking coalition and $\sum_{a\notin A'}a=s_A-\sum_{a\in A'}a$ we get that
	$$10s-\delta+s_A-\sum_{a\in A'}a<9.5 s+\sum_{u_i\in X}a_i+\sum_{b\in B'}(b+\epsilon)\leq 9.5 s+s_A+\sum_{b\in B'}b+n\epsilon.$$
	Hence, $s/2\leq\sum_{c\in A'\cup B'}c$ since $\delta<\min_{i\in [n]}\{a_i,b_i\}$ and $n\epsilon<\delta$. Combined we get that $\sum_{c\in A'\cup B'}c=s/2$ which contradicts the choice of $A'$.
	\\
	
	Now assume that $(G,w)$ is a YES-instance of \textsc{CS} and let $\mathcal{P}$ be a connected, core stable partition of $G$.  We use the following. We first argue that
	$\{v_i\}$ is a part in $\mathcal{P}$ for every $i\in [n]$.
	First observe that $v_i$ cannot be in the same part as $\widehat{x}$ as by Lemma~\ref{lem:auxiliaryGadget} \ref{prop:gadgetP1} this part also contains $\widehat{h}$ and $w(\widehat{h},v_i)=\rho$. Furthermore, every edge incident to $v_i$ excluding $\widehat{x}v_i$ has negative weight. Hence, if $\{v_i\}\notin \mathcal{P}$ then $\{v_i\}$ would be a blocking coalition.
	
	By Lemma~\ref{lem:auxiliaryGadget} \ref{prop:gadgetP1} there are parts $P,\widehat{P}$ of $\mathcal{P}$ such that $\{h,x\}\subseteq P$ and $\{\widehat{h},\widehat{x}\}\subseteq \widehat{P}$. Additionally, $P\not=\widehat{P}$ as $w(h\widehat{h})=\rho$.
	We set $A'=\{a_i: u_i\in P\}$. In the following we argue that $\sum_{c\in A'\cup B'}c\not=s/2$ for every $B'\subseteq B$. Towards a contradiction, assume that there is $B'\subseteq B$ such that $\sum_{c\in A'\cup B'}c=s/2$. We define $X$ to be the set of vertices $\{x,\widehat{x}\}\cup\{u_i:i\in [n]\}\cup\{v_i: b_i\in B'\}$ and claim that this is a blocking coalition. %We argue that every vertex in $X$ has strictly better utility in $X$ than in $\mathcal{P}$. 
	First note that the utility of every vertex $v_i$ is $0$ in $\mathcal{P}$ as $\{v_i\}\in \mathcal{P}$. On the other hand, the utility of $v_i\in X$ is $\epsilon>0$ since $x,\widehat{x}\in X$ but $h,\widehat{h}\notin X$. Similarly, the utility of every $u_i$ is at most $a_i$ in $\mathcal{P}$ as $x$ and $\widehat{x}$ are contained in different parts by \cref{lem:auxiliaryGadget} \ref{prop:gadgetP1} ($h$ and $\widehat{h}$ are in different part) while it is $2a_i$ in $X$ since $x,\widehat{x}\in X$. The utility of $x$ in $\mathcal{P}$ is $9s+s_A-\delta+\sum_{a\in A'}a$ which is equal to $9.5s+s_A-\delta-\sum_{b\in B'}b$. On the other hand, the utility of $x$ is $9.5s+s_A-\sum_{b\in B'}b>9.5s+s_A-\delta-\sum_{b\in B'}b$ in $X$. Finally, $\widehat{x}$ has utility at most $10s-\delta+\sum_{a\notin A'}a=10s-\delta+s_A-\sum_{a\in A'}a$ in $\mathcal{P}$. Furthermore, the utility of $\widehat{x}$ in $X$ is $9.5s+s_A+\sum_{b\in B'}(b+\epsilon)\ge 10s+s_A-\sum_{a\in A'}a$ by the assumption that $\sum_{c\in A'\cup B'}c=s/2$. We conclude that $X$ is a blocking coalition which contradicts the choice of $\mathcal{P}$.
\end{proof}

\subsection{Core stability parameterized by maximum degree and treewidth}% - improved complexity bounds}
We first prove the algorithmic result of the section.

\begin{theorem}\label{thm:doubleExpAlgoCS}
\textsc{Core Stability} can be solved in time $2^{2^{{O}(\Delta\tw)}}n^{O(1)}$.
\end{theorem}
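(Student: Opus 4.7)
The plan is to reduce \textsc{Core Stability} to an instance of $\exists\forall$-SAT whose primal graph has treewidth $O(\Delta \tw)$, and then invoke the algorithm of Chen~\cite{Chen04} (applied in a similar spirit in \cite{LampisMM18}), which decides such formulas in time $2^{2^{O(t)}} |\phi|^{O(1)}$, where $t$ is the treewidth of the primal graph of $\phi$.

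For the encoding, the existential part describes a (WLOG connected) partition $\PP$ of $V(G)$. We introduce a variable $p_{uv}$ for each edge $uv \in E(G)$, meaning ``$u$ and $v$ are in the same coalition of $\PP$'', together with auxiliary bag-local variables recording the partition that $\PP$ induces on each bag of the input tree decomposition (for a bag of size $\tw+1$, there are at most $\tw^{O(\tw)}$ such labels, fitting in $O(\tw \log \tw)$ bits). These auxiliary variables let us enforce that the $p$-variables describe a valid partition using only bag-local consistency clauses: within a bag, the $p$-values on internal edges must match the recorded partition, and across adjacent bags, the recorded partitions must agree after an appropriate relabeling on the common vertices. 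The universal part introduces a variable $y_v$ for each $v \in V(G)$, encoding a candidate blocking coalition $X = \{v : y_v = 1\}$. The body of the formula then asserts that the $p$-variables describe a valid partition and that $y$ does \emph{not} encode a blocking coalition, i.e.,
\[
\Bigl(\bigwedge_{v \in V(G)} \neg y_v\Bigr)\ \vee\ \bigvee_{v \in V(G)} \Bigl(y_v \wedge \ut(X,v) \le \ut_{\PP}(v)\Bigr).
\]
The key structural observation is that both $\ut_{\PP}(v) = \sum_{u\in N(v),\, p_{uv}=1} w(uv)$ and $\ut(X,v) = \sum_{u \in N(v),\, y_u=1} w(uv)$ depend only on the $O(\Delta)$ variables sitting at the closed neighborhood $N[v]$, so each $v$-disjunct can be expanded as a propositional formula of size $2^{O(\Delta)}$ over the local variables at $v$.

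For the treewidth bound on the formula's primal graph, each variable is naturally associated with a single vertex or edge of $G$ (or a single bag), and every clause is supported by variables sitting at vertices in some common $N[v]$. Starting from a tree decomposition of $G$ of width $\tw$, we construct a tree decomposition of the primal graph by augmenting each bag $B$ to include all variables at vertices in $N(B)$, yielding bags of size $O(\Delta \tw)$. Invoking Chen's algorithm on the resulting instance then gives the claimed $2^{2^{O(\Delta \tw)}} n^{O(1)}$ running time. The main obstacle is choosing the auxiliary partition-tracking variables and their consistency clauses carefully so that the resulting primal graph meets the $O(\Delta \tw)$ treewidth bound without blowing up due to non-local transitivity constraints on the $p$-variables; this is a treewidth-preserving reduction of the kind advocated in \cite{LampisMM18}, and the restriction to connected partitions makes the encoding substantially cleaner, as the entire coalition containing a vertex $v$ is reachable by $p=1$ edges in $G$.
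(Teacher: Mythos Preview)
Your high-level strategy---reduce to $\exists\forall$-SAT and invoke Chen's algorithm---is the same as the paper's, but the proposal has a genuine gap at exactly the point you flag as ``the main obstacle''. You claim the primal graph of your formula has treewidth $O(\Delta\tw)$, yet your own encoding introduces $\Theta(\tw\log\tw)$ auxiliary bits per bag to record the induced partition (this is unavoidable: there are $\tw^{\Theta(\tw)}$ partitions of a bag), and the consistency clauses between adjacent bags must mention all of these bits together. Hence the primal treewidth of $\phi$ is at least $\Omega(\tw\log\tw)$, which is \emph{not} $O(\Delta\tw)$ when $\Delta$ is bounded and $\tw$ is large (e.g.\ cubic expanders). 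Plugging this into Chen's $2^{2^{O(t)}}$ bound yields $2^{2^{O(\tw\log\tw+\Delta\tw)}}$, strictly worse than the theorem's statement. (A secondary issue: the term $\bigwedge_v \neg y_v$ is a single DNF conjunct on all $n$ universal variables, so literally your primal treewidth is $n-1$; this is fixable via incidence treewidth, but you do not address it.)

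The paper runs into the same obstruction---its formula has incidence treewidth $O(\tw^2+\Delta\tw)$, not $O(\Delta\tw)$---and resolves it \emph{not} by shrinking the treewidth but by strengthening the $\exists\forall$-SAT algorithm. The key extra ingredient you are missing is a refinement of Chen's result (the paper's Proposition~\ref{prop:chen-better}): if each bag of the decomposition contains at most $t_\exists$ existential and $t_\forall$ universal variables, then $\exists\forall$-SAT can be solved in time $2^{O(t_\exists+2^{t_\forall})}|\phi|^{O(1)}$. Since all partition-encoding variables are existential and only $O(\Delta\tw)$ universal variables ($y_v$ for $v\in N[B]$) sit in any bag, one gets $2^{O(\tw^2+\Delta\tw+2^{O(\Delta\tw)})}=2^{2^{O(\Delta\tw)}}$. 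In short, the double-exponential needs to be charged only to the universal part of the width, and establishing that asymmetric algorithm is the real work your sketch skips.
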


Before we prove Theorem~\ref{thm:doubleExpAlgoCS}, let us sketch our high-level
strategy. Given an instance of \textsc{Core Stability}, we want to produce an
equivalent instance $\phi$ of $\exists\forall$-SAT, such that $\phi$ has
treewidth roughly $\Delta\tw$, where $\Delta,\tw$ are the maximum degree and
treewidth of the original instance.  We could then use the known
(double-exponential) algorithm for $\exists\forall$-SAT (\cite{Chen04}) to
solve our problem.  Intuitively, we would then attempt to use the existential
part of $\phi$ to encode the ``there exists a partition'' part of the problem,
and the universal part to encode the ``all blocking coalitions fail'' part.

Fundamentally, this strategy is sound and works in a relatively straightforward
way for the universal part: we use a boolean variable for each vertex (to
encode whether it belongs in the blocking coalition) and to check that a
blocking coalition fails for a vertex $v$ we need to place a constraint on 
$v$ and all its (at most $\Delta$) neighbors. This means that a tree
decomposition of $\phi$ should be constructible from a tree decomposition of
the square of the original graph, which would have width at most $\Delta\tw$.

Where we run into some more difficulties, however, is in encoding the
existential part. Intuitively, this is because encoding the partition of the
vertices of a bag into coalitions requires a super-linear number of bits, hence
it is not sufficient to define a variable for each vertex. Indeed, to simplify
things, we define a variable for each \emph{pair} of vertices that appear
together in a bag, encoding whether they are together in a coalition. This
means that the treewidth of the formula $\phi$ we construct is in fact not
$O(\Delta\tw)$ but actually can only be upper-bounded by $O(\tw^2+\Delta\tw)$.

Nevertheless, we insist on obtaining an algorithm that is double-exponential
``only'' in $\Delta\tw$, and not in $\tw^2$.  In order to circumvent our
difficulty we observe that the term that is super-linear in treewidth only
depends on existentially quantified variables.  Thankfully, we manage to show,
via an argument that is more careful than that of \cite{Chen04}, that
$\exists\forall$-SAT has a complexity that only needs to be double-exponential
in the number of \emph{universally quantified} variables of each bag
(\cref{prop:chen-better}).  Using this, we are able to show that the second
exponent of the running time is ``only'' $O(\Delta\tw)$, which as we show later
is optimal, even when $\Delta=O(1)$, under the ETH.

Let us now give some more details. We first recall that $\exists\forall$-SAT is
a variant of the SAT problem where we aim to decide the satisfiability of a
given quantified Boolean formula (QBF) $\phi$  which is of the form $\exists
x_1 \dots \exists x_k \forall y_1\dots \forall y_\ell \psi$ where $\psi$ is a
DNF formula on variables $x_1,\dots,x_k,y_1,\dots,y_\ell$. Two common ways of
associating structure of satisfiability problems is to consider the primal or
incidence graph of the formula. The primal graph of a formula $\phi$ (in CNF or
DNF) is a graph on the set of variables of $\phi$ where two variables are
adjacent if they appear in the same clause. Similarly, the incidence graph is a
bipartite graph on the set of variables and clauses of $\phi$ where a variable
is adjacent to all clauses it appears in.  For convenience, we use a variant of
$\exists\forall$-SAT. We say that a QBF $\phi$ is in  $\ECNFADNF$ if $\phi$ can
be written as 

$$\phi=\exists x_1 \dots \exists x_k \bigwedge_{i=1}^{k'}d_i \hspace{8pt} 
\forall y_1\dots \forall y_\ell \bigvee_{i=1}^{\ell'}c_i$$

for some $k,k',\ell,\ell'\in \mathbb{N}$ where $d_i$ are disjunctive clause
over variables $x_1,\dots,x_k$ containing at most $3$ literals per clause and
$c_i$ are conjunctive clauses over variables $x_1,\dots,x_k,y_1,\dots,y_\ell$.
We present an algorithm for $\ECNFADNF$ in several steps. First, we give an
algorithm with a more careful running time than that of \cite{Chen04}.

\begin{proposition}\label{prop:chen-better} \ifthenelse{\boolean{short}}{\textup{($\star$)}}{}
There is an algorithm that takes as input an instance of $\exists\forall$-SAT
$\exists x\forall y \phi(x,y)$, where $x,y$ are tuples of boolean variables,
$\phi$ is in 3-DNF, and a tree decomposition of the primal graph of $\phi$
where each bag contains at most $t_\exists$ existentially quantified variables
and at most $t_\forall$ universally quantified variables and decides if the
input is satisfiable in time $2^{O(t_\exists+2^{t_\forall})}|\phi|^{O(1)}$.  

\end{proposition}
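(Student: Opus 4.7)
The plan is to construct the claimed algorithm as a dynamic programming procedure over the given tree decomposition, designed so that the usual double-exponential blow-up in tree-width is localised to the universally quantified variables. Since $\phi$ is in 3-DNF, the formula $\psi := \neg\phi$ is a 3-CNF with the same primal graph, so the input is equivalent to $\exists x\,\neg\exists y\,\psi(x,y)$. My task thus becomes deciding whether some assignment to $x$ makes the 3-CNF $\psi(x,\cdot)$ unsatisfiable over $y$, which I would do by a single DP that guesses $x$ gradually and refutes $y$ simultaneously.

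For each bag $B$ of a nice tree decomposition I maintain a set of \emph{valid entries}, each entry being a pair $(\sigma,f)$ where $\sigma\colon B\cap X\to\{0,1\}$ is a partial assignment to the existentials in $B$ and $f\colon \{0,1\}^{B\cap Y}\to\{0,1\}$ is a Boolean function on the universals in $B$. The intended semantics is: $(\sigma,f)$ is valid iff there exists an extension $\sigma^{-}$ of $\sigma$ to the existentials already forgotten below $B$ such that, for every partial assignment $\tau\colon B\cap Y\to\{0,1\}$, the value $f(\tau)$ equals $1$ exactly when some extension $\tau^{-}$ of $\tau$ to the forgotten universals satisfies every clause of $\psi$ already processed at or below $B$. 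Crucially, the witness $\sigma^{-}$ must be chosen once and then work uniformly for all $\tau$; this is what allows $\sigma$ to play the role of the outer $\exists$ and $\tau$ that of the inner $\forall$. The number of entries per bag is bounded by $2^{t_\exists}\cdot 2^{2^{t_\forall}} = 2^{O(t_\exists + 2^{t_\forall})}$, matching the target running time.

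Transitions follow the standard introduce, forget, and join template. At an introduce node for an existential variable I branch on its two values; at an introduce node for a universal variable I lift $f$ to a function on one more coordinate by duplication. Whenever a clause becomes active (all its variables lie in the bag) I replace $f$ pointwise by its conjunction with the bit $[\text{clause satisfied by }(\sigma,\tau)]$. At a forget node for an existential $v$ I project $v$ out of $\sigma$ and retain $(\sigma',f)$ if it is achievable from either $v\mapsto 0$ or $v\mapsto 1$ in the child; at a forget node for a universal $v$ I set $f(\tau):=f_{\text{child}}(\tau,v=0)\vee f_{\text{child}}(\tau,v=1)$, reflecting the existential quantification over the newly forgotten universal inside $\tau^{-}$. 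At the empty root bag I accept iff some valid entry $(\emptyset,f)$ has $f(\emptyset)=0$, i.e.\ $\psi(x,\cdot)$ is unsatisfiable for the corresponding $x$.

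The main obstacle is the join step, where I claim the valid entries are exactly the pairs $(\sigma,f_L\wedge f_R)$ with $(\sigma,f_L),(\sigma,f_R)$ valid at the two children. The nontrivial direction requires that the two children's existential witnesses $\sigma_L^{-},\sigma_R^{-}$ can be merged into one witness for the join, which is fine because the variables forgotten strictly below the two subtrees are disjoint; the analogous disjointness for universals shows that, for each fixed $\tau$, the maximisation over $\tau^{-}$ splits into independent maxima on the two sides, so the combined $f$ is the pointwise conjunction of $f_L$ and $f_R$. A naive pair iteration implements this in time $O\bigl((2^{t_\exists + 2^{t_\forall}})^{2}\bigr) = 2^{O(t_\exists + 2^{t_\forall})}$ per join node, which dominates the cost of all other transitions; summing over the $O(|\phi|)$ nodes of the decomposition gives the claimed $2^{O(t_\exists + 2^{t_\forall})}|\phi|^{O(1)}$ bound.
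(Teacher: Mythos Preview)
Your proof is correct and rests on the same combinatorial insight as the paper: for each partial assignment $\sigma$ to the existentials in a bag one need only record a Boolean function on the at most $t_\forall$ universals there, giving $2^{t_\exists}\cdot 2^{2^{t_\forall}}$ states per bag. The paper, however, does not run this DP directly. Instead it introduces fresh Boolean variables $z_{B,\sigma},w_{B,\sigma}$ (one pair per bag $B$ and per assignment $\sigma$ to the universals of $B$), writes CNF clauses that enforce exactly the leaf, forget, and join transitions you describe, and then invokes the standard $2^{O(t)}$ algorithm for CNF-SAT on the resulting formula of primal width $O(t_\exists+2^{t_\forall})$. Your direct DP is arguably cleaner and makes the semantics of the table explicit; the paper's reduction buys modularity, delegating correctness of the final pass to an off-the-shelf treewidth SAT solver rather than re-arguing each transition.
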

\begin{toappendix}
\begin{proof}[Proof\ifthenelse{\boolean{short}}{ of \cref{prop:chen-better}}{}]
	
	Our strategy is to reduce the problem to CNF satisfiability, which can be
	solved in time single-exponential in the primal treewidth. In particular, our
	goal is to produce a CNF formula $\psi$ with treewidth
	$O(t_\exists+2^{t_\forall})$, so that deciding if $\psi$ is satisfiable will be
	equivalent to the original problem. Since CNF satisfiability on instances with
	$n$ variables and primal treewidth $t$ can be solved in $2^{O(t)}n^{O(1)}$, and
	since our reduction will run in time $2^{O(t_\exists+t_\forall)}|\phi|^{O(1)}$,
	we will obtain an algorithm with the promised running time.
	
	More precisely, starting from $\phi$, which is a formula in 3-DNF, we would
	like to construct a CNF formula $\psi$, using the same existential variables
	$x$, as well as some new existential variables $z$, so that the following holds
	for each truth assignment to the variables $x$: there exists an assignment to
	$z$ such that $\psi(x,z)$ is true if and only if for all assignments to $y$ we
	have that $\phi(x,y)$ is true. In symbols: $\forall x \left( \exists z
	\psi(x,z) \Leftrightarrow \forall y \phi(x,y) \right)$. It is not hard to see
	that if we prove this equivalence for all assignments $x$, then $\psi$ is
	satisfiable (i.e. $\exists x \exists z \psi(x,z)$) if and only if $\exists x
	\forall y \phi(x,y)$ holds, so the problem reduces to deciding CNF
	satisfiability for $\psi$.
	
	The intuitive idea of the transformation of $\phi$ to $\psi$ is that we want to
	encode in a CNF formula the execution of an algorithm which, having fixed the
	assignment to $x$, decides if there exists an assignment to $y$ to make
	$\phi(x,y)$ false. This algorithm would work using the standard DP methodology
	for treewidth, meaning it would store in each bag, for each assignment to the
	variables of $y$ contained in this bag, a bit of information indicating whether
	this assignment can be extended in a way that renders the formula we have seen
	so far false.
	
	More precisely, the transformation is the following: for each bag $B$ of the
	tree decomposition of the primal graph of $\phi$, for each assignment $\sigma$
	to the (at most $t_\forall$) universal variables of $B$, we construct two new
	variables $z_{B,\sigma}$ and $w_{B,\sigma}$. In order to explain the intended
	meaning of these variables, fix an assignment to $x$. Then, $z_{B,\sigma}$ is
	meant to be set to true if and only if there is an assignment to the $y$
	variables contained in the sub-tree rooted at $B$, consistent with $\sigma$,
	such that every conjunctive clause of $\phi$ contained in this sub-tree is made
	false by the joint assignment to $x$ and $y$. Furthermore, $w_{B,\sigma}$ is
	meant to be true if all conjunctive clauses of $\phi$ whose variables are fully
	contained in $B$ are set to false by the joint assignment to $x$ and the
	assignment $\sigma$.
	
	Let us now describe the clauses of $\psi$. Assume, to simplify things, that we
	are given a nice tree decomposition of the primal graph of $\phi$, such that
	the leaves and the root of the decomposition are empty bags.  We construct the
	following clauses:
	
	\begin{enumerate}
		
		\item For each bag $B$, each assignment $\sigma_1$ to the existential variables
		of $B$, and each assignment $\sigma_2$ to the universal variables of $B$, we
		check if there exists a clause of $\phi$ which is fully contained in $B$ and
		which is set to true by the joint assignments $\sigma_1, \sigma_2$. If this is
		the case, we add to $\psi$ a clause which is falsified exactly when the
		existential variables of $B$ take assignment $\sigma_1$, and add to this clause
		the literal $\neg w_{B,\sigma_2}$. If this is not the case (that is, all
		clauses contained in $B$ are set to false by the joint assignment) we add to
		$\psi$ a clause falsified when the existential variables take assignment
		$\sigma_1$, and add to this clause the literal $w_{B,\sigma_2}$.
		
		\item For each leaf bag $B$, since $B$ is empty, there exists a unique
		(vacuous) assignment $\sigma_{\emptyset}$ to its universal variables. We add to
		$\psi$ the unit clause $(z_{B,\sigma_{\emptyset}})$.
		
		\item For each bag $B$ with a single child $B'$, for each assignment $\sigma$
		to the universal variables of $B$, if there is a unique assignment $\sigma'$ to
		the universal variables of $B'$, such that $\sigma,\sigma'$ agree on their
		common variables (this happens if the universal variables of $B'$ are all
		contained in $B$), we add to $\psi$ clauses that implement the constraint
		$z_{B,\sigma}\leftrightarrow (w_{B,\sigma}\land z_{B',\sigma'})$.  Otherwise,
		since the decomposition is nice, $B'$ contains a single universal variable not
		appearing in $B$ and there are two assignments $\sigma', \sigma''$ to universal
		variables of $B'$ consistent with $\sigma$. We then add to $\psi$ clauses
		implementing the constraint $z_{B,\sigma}\leftrightarrow (w_{B,\sigma}\land
		(z_{B',\sigma'}\lor z_{B',\sigma''}))$.
		
		\item Similarly, for each join bag $B$, with children $B', B''$, for each
		assignment $\sigma$ to the universal variables of $B$, we add to $\psi$ clauses
		that implement the constraint $z_{B,\sigma} \leftrightarrow (w_{B,\sigma}\land
		z_{B',\sigma}\land z_{B'',\sigma})$.
		
		\item In the root bag $B_r$, since $B_r$ is empty, again there exists a unique
		vacuous assignment $\sigma_{\emptyset}$. We add to $\psi$ the unit clause
		$(\neg z_{B,\sigma_{\emptyset}})$.
		
	\end{enumerate}
	
	This completes the construction. The intuition behind the construction is that,
	if we fix the assignment to the $x$ variables, then in order to satisfy $\psi$,
	for each $B,\sigma$, the variable $w_{B,\sigma}$ must correctly encode whether
	$B$ contains a clause that is set to true by the assignment to the $x$
	variables in conjunction with $\sigma$ (in particular, $w_{B,\sigma}$ is true
	if and only if all clauses contained in $B$ are false).  The variable
	$z_{B,\sigma}$ is meant to check if there exists some assignment to the $y$
	variables extending $\sigma$ that makes all clauses of $\phi$ contained in the
	sub-tree rooted at $y$ false; this is vacuously true for a leaf bag (as no
	clauses are contained in the sub-tree);  in other bags this is true if and only
	if all clauses contained in $B$ are false (encoded by $w_{B,\sigma}$) and an
	assignment consistent with $\sigma$ sets all clauses contained in the sub-trees
	rooted at children of $B$ to false. Finally, the given assignment to $x$ has
	the property that $\forall y. \phi(x,y)$ if and only if in the root it is not
	possible to find an assignment to the $y$ variables that makes all clauses
	false, hence the last clause of $\psi$.  Correctness of the construction can be
	argued by induction using the above intuition in a standard way, so we skip the
	details.
	
	The construction above can be executed in time
	$2^{t_\exists+t_\forall}|\phi|^{O(1)}$. In order to construct a tree
	decomposition of $\psi$ we keep the decomposition of $\phi$ and remove all
	universal variables. Then, in each bag $B$ we place all the at most $2\cdot
	2^{t_\forall}$ variables $z_{B,\sigma}, w_{B,\sigma}$ constructed for this bag.
	We then add to each bag $B$ all the variables contained in one of its children.
	Since the decomposition is nice, so each node has at most two children, this
	will at most triple the size of each bag, making all bags contain at most
	$3t_\exists + 6\cdot 2^{t_\forall}$ variables. Observe that now all clauses of
	$\psi$ are fully contained in some bag. We now have a CNF formula $\psi$ and a
	tree decomposition of its primal graph of width $O(t_\exists+2^{t_\forall})$
	such that deciding if $\psi$ is satisfiable resolves our original problem.
	Using a standard DP algorithm now completes the proof.  \end{proof}
\end{toappendix}

\begin{proposition}\label{proposition:existCNFforallDNF} \ifthenelse{\boolean{short}}{\textup{($\star$)}}{}
There is an algorithm that takes as input an $\ECNFADNF$-SAT instance $\phi$
and a tree decomposition of its incidence graph of width $t$ that contains at
most $t_\forall$ universally quantified variables in each bag and at most $2$
clauses in each bag, and decides $\phi$ in time $2^{O(t2^{t_\forall})}|\phi|$.

\end{proposition}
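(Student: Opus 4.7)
I plan to prove the proposition by direct dynamic programming on the given incidence tree decomposition, first converted in linear time to a standard nice form with introduce-variable, forget-variable, introduce-clause, forget-clause, and join bags, empty leaves, and empty root. For each bag $B$ with existential variables $X_B$ ($|X_B|\le t$), universal variables $Y_B$ ($|Y_B|\le t_\forall$), and at most two clauses (each either a disjunctive 3-CNF clause $d_i$ over $x$-variables or a conjunctive DNF clause $c_j$ over $x,y$-variables), the DP table is indexed by (i) an assignment $\sigma\colon X_B\to\{0,1\}$, (ii) the ``satisfied-so-far'' status of each CNF clause currently in $B$, and (iii) a function $g$ assigning to each universal assignment $\tau\colon Y_B\to\{0,1\}$ a constant-size state recording, for every DNF clause in $B$, whether it is still alive (no literal has been set to false) or dead, together with a ``DNF-witnessed'' bit for $\tau$ indicating whether some $c_j$ has already evaluated to true in the subtree rooted at $B$.

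The entry $T_B[\sigma,\cdot,g]=1$ is intended to mean: there is an extension of $\sigma$ to all existentials strictly below $B$ such that every forgotten CNF clause in the subtree is satisfied and, for every $\tau$ and every extension of $\tau$ to universals strictly below $B$, either $g(\tau)$ already flags the DNF as witnessed or some DNF clause that has been forgotten below evaluates to true. Transitions are standard: introducing or forgetting existentials branches or marginalizes on the value; introducing a universal $y$ refines each $\tau$ into its two extensions, whereas forgetting a universal aggregates the two sibling $g$-entries by logical AND (this is the place where $\forall y$ is implemented); introducing a clause initializes its status from the literals of $c$ already in $B$; forgetting a CNF clause retains only entries whose status is satisfied; forgetting a DNF clause updates the witness bit for each $\tau$ by OR with ``alive'' (by the incidence property, every variable of $c_j$ has appeared in some bag together with $c_j$, so ``alive at the moment of being forgotten'' is equivalent to ``evaluates to true''); and join bags multiply tables that agree on $\sigma$ and $g$. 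The answer is read off the empty root: YES if and only if the single entry has its witness bit set (vacuously over the empty universal domain).

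The per-bag state count is $2^{|X_B|}\cdot O(1)^{2^{|Y_B|}}=2^{O(t+2^{t_\forall})}$, which is bounded by $2^{O(t\cdot 2^{t_\forall})}$, and every transition---including joins---runs in polynomial time in this count. Since the nice decomposition has $O(|\phi|)$ bags, the total runtime is $2^{O(t\cdot 2^{t_\forall})}\cdot|\phi|$, as claimed. The main obstacle will be pinning down clean, mutually consistent semantics for the per-$\tau$ status so that the alive/dead information is sound precisely at the instant a DNF clause is forgotten; this is where the incidence-decomposition guarantee ``every variable-clause pair appears together in some bag'' is indispensable, and it is also the reason the bounded-clause-per-bag assumption is needed (it keeps the per-$\tau$ state constant-sized and thus the overall table within the claimed bound).
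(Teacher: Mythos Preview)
Your direct dynamic programming approach is a genuine alternative to the paper's proof, which instead reduces $\ECNFADNF$-SAT to $\exists\forall$-SAT in $3$-DNF (via new universal ``selector'' variables $y_{d_i},y_C$) and then invokes \cref{prop:chen-better}. A direct DP is arguably cleaner in that it avoids the chain of reductions and the $3$-DNF conversion; however, the per-$\tau$ state you describe is not rich enough, and this is a real gap rather than a matter of notation.

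Concretely, suppose a forget-universal node removes $y$ while a DNF clause $c$ containing $y$ is still in the bag (this is unavoidable in general: $c$ may contain variables introduced only above, so $c$ cannot be forgotten first). After the forget, the two universal branches $y{=}0$ and $y{=}1$ may give \emph{different} alive/dead statuses for $c$, yet your state stores a single tuple per $\tau$. Taking ``logical AND'' on the witnessed bit does not resolve what to do with the alive/dead component. No single-tuple convention works: e.g.\ with two clauses $c_1=(y\wedge x_1)$ and $c_2=(\neg y\wedge x_2)$ in the bag (with $x_1,x_2$ introduced later), the $y{=}1$ branch has $(c_1\text{ alive},c_2\text{ dead})$ and the $y{=}0$ branch has $(c_1\text{ dead},c_2\text{ alive})$. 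Any ``worst-case'' collapse to a single tuple declares both clauses dead and later concludes ``not witnessed'', whereas in truth each branch is witnessed once the corresponding $x_i$ is set to $1$ and the clause is forgotten.

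The fix is straightforward and stays within your running-time budget: for each $\tau$, store the \emph{set} of alive/dead configurations (over the at most two DNF clauses in the bag) that are realised by some extension of $\tau$ to already-forgotten universals which is still un-witnessed. The ``witnessed bit'' is then simply ``this set is empty''. Forget-universal takes the union of the two branch sets; forget-DNF-clause removes from each set those configurations in which the clause is alive (they become witnessed) and projects out that clause's coordinate; join combines the two sides by pairwise clause-wise AND over the Cartesian product of the two sets. Since there are at most two clauses per bag, each set has at most $2^{2^2}=16$ values, so the per-$\tau$ state is still $O(1)$-sized and your table bound $2^{|X_B|}\cdot O(1)^{2^{|Y_B|}}=2^{O(t+2^{t_\forall})}\le 2^{O(t\cdot 2^{t_\forall})}$ goes through unchanged. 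With this amendment the plan is sound.
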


\begin{toappendix}
\begin{proof}[Proof\ifthenelse{\boolean{short}}{ of \cref{proposition:existCNFforallDNF}}{}]
	
	We reduce $\ECNFADNF$-SAT to $\exists \forall$-SAT and then \ifthenelse{\boolean{short}}{make use of}{use}
	\cref{prop:chen-better}.  Let $\phi$ be a formula in $\ECNFADNF$. Hence, $\phi$ can be written as
	$\phi=\exists x_1 \dots \exists x_k \bigwedge_{i=1}^{k'}d_i \hspace{8pt}
	\forall y_1\dots \forall y_\ell \bigvee_{i=1}^{\ell'}c_i$ for some
	$k,k',\ell,\ell'\in \mathbb{N}$ where $d_i$ are disjunctive clauses containing
	at most $3$ literals and $c_i$ are conjunctive clauses. We construct an
	instance $\psi$ of $\exists\forall$-SAT as follows. The formula $\psi$ contains
	existentially quantified variable $x_1,\dots,x_k$ and universally quantified
	variables $y_1,\dots,y_\ell, y_{d_1},\dots, y_{d_{k'}}, y_C$.  Furthermore,
	$\psi$ contains one clause $(y_{d_i},\lambda)$ for every $i\in [k']$ and every
	literal $\lambda$ contained in the clause $d_i$. Additionally, $\psi$ contains
	the clause $y_C\land c_i$ for every $i\in [\ell']$. Finally, $\psi$ contains
	the clause $(\lnot y_{d_1}\land \dots \land \lnot y_{d_{k'}}\land \lnot y_C)$.
	We now argue that $\phi$ is satisfiable if and only if $\psi$ is satisfiable.\\
	
	First assume that $\phi$ is satisfiable and let $\alpha_X:\{x_1,\dots,x_k\}\rightarrow \{0,1\}$ be an assignment which satisfies $\bigvee_{i=1}^{k'}d_i$ and for every assignment $\alpha_Y:\{y_1,\dots,y_\ell\}\rightarrow \{0,1\}$ the formula $\bigwedge_{i=1}^{\ell'}c_i$ is satisfied. Let $\alpha_Y:\{y_1,\dots,y_\ell, y_{d_1},\dots, y_{d_{k'}}, y_C\}\rightarrow \{0,1\}$ be any assignment. 
	
	First assume that $\alpha_Y(y_{d_i})=1$ for some $i\in [k']$. Since the disjunctive clause $d_i$ is satisfied under the assignment $\alpha_X$ by assumption, there is a literal $\lambda$ contained in $d_i$ such that $\lambda$ evaluates to $1$ under assignment $\alpha_X$. Hence, the clause $(y_{d_i},\lambda)$ is satisfied under the assignments $\alpha_X$ and $\alpha_Y$.
	
	Now assume that $\alpha_Y(y_{d_i})=0$ for every $i\in [k']$ and $\alpha_Y(y_C)=0$. In this case the clause $(\lnot y_{d_1}\land \dots \land \lnot y_{d_{k'}}\land \lnot y_C)$ is satisfied under the assignments $\alpha_X$ and $\alpha_Y$.
	
	Finally, assume that $\alpha_Y(y_{d_i})=0$ for every $i\in [k']$ and $\alpha_Y(y_C)=1$. Observe that by assumption $\bigwedge_{i=1}^{\ell'}c_i$ is satisfied under assignments $\alpha_X$ and the restriction of $\alpha_Y$ to the set $\{y_1,\dots,y_\ell\}$. Let $i\in [\ell']$ be the index of the satisfied clause. In this case, the clause $y_C\land c_i$ must be satisfied under the assignment $\alpha_X$ and $\alpha_Y$. Hence, we have argued that $\psi$ is satisfiable.\\
	
	On the other hand, assume that $\psi$ is satisfiable and let $\alpha:\{x_1,\dots,x_k\}\rightarrow \{0,1\}$ is an assignment such that for every assignment $\alpha_Y:\{y_1,\dots,y_\ell, y_{d_1},\dots, y_{d_{k'}}, y_C\}\rightarrow \{0,1\}$ either one of clauses $(y_{d_i},\lambda)$, $i\in [k']$ and $\lambda$ a literal contained in $d_i$ or one of the clauses $y_C\land c_i$, $i\in [\ell']$ or the clause $(\lnot y_{d_1}\land \dots \land \lnot y_{d_{k'}}\land \lnot y_C)$ is satisfied under assignment $\alpha_X$ and $\alpha_Y$. To argue that $\psi$ is satisfiable, let $\alpha_Y:\{y_1,\dots,y_\ell\}\rightarrow \{0,1\}$ be any assignment. We have to argue that every clause $d_i$, $i\in [k']$ is satisfied and at least one clause $d_i$, $i\in [\ell']$ is satisfied.
	
	Fix some $i\in [k']$ and define an assignment $\tilde{\alpha}_Y:\{y_1,\dots,y_\ell, y_{d_1},\dots, y_{d_{k'}}, y_C\}\rightarrow \{0,1\}$ by $\tilde{\alpha}(y_j)=\alpha_Y(y_j)$ for $j\in [\ell]$, $\tilde{\alpha}_Y(y_{d_i})=1$, $\tilde{\alpha}_Y(y_{d_j})=0$ for $j\in [k']$, $j\not=i$ and $\tilde{\alpha}_Y(y_C)=0$. By construction, the only clauses of $\psi$ that can be satisfied under assignments $\alpha_X$ and $\tilde{\alpha}_Y$ are the clauses $(d_i\land \lambda)$ for every literal $\lambda$ in $d_i$. Hence, one literal of $d_i$ evaluates to $1$ under assignment $\alpha_X$ and therefore $d_i$ is satisfied under assignment $\alpha_X$.
	
	Now consider the assignment $\tilde{\alpha}_Y:\{y_1,\dots,y_\ell, y_{d_1},\dots, y_{d_{k'}}, y_C\}\rightarrow \{0,1\}$ where $\tilde{\alpha}(y_j)=\alpha_Y(y_j)$ for $j\in [\ell]$, $\tilde{\alpha}_Y(y_{d_j})=0$ for every $j\in [k']$ and $\tilde{\alpha}_Y(y_C)=1$. Hence, the only clauses of $\psi$ that can be satisfied under assignments $\alpha_X$ and $\tilde{\alpha}_Y$ are the clauses $(y_C\land c_i)$, $i\in [\ell']$. Let $i\in [\ell']$ be the index for which $(y_C\land c_i)$ is satisfied. But in this case the clause $c_i$ is satisfied concluding the argument that $\phi$ is satisfiable.\\
	
	%    We now bound the incidence treewidth of $\psi$ in terms of the incidence treewidth of $\phi$. Let $G^\phi$ be the incidence graph of $\phi$, $G^\psi$ the incidence graph of $\psi$ and $(T,\beta^\phi)$ a tree decomposition of $G^\phi$. Here, we denote the bag of a node $t\in T$ by $\beta^\phi(t)$. We construct a tree decomposition $(T,\beta^\psi)$ of $G^\psi$ as follows. We set 
	%    \begin{align*}
		%        \beta^\psi(t)=&\Big(\beta^\phi(t)\cap \{x_1,\dots,x_k,y_1,\dots,y_\ell\}\Big)\cup \{y_{d}:d\in \beta^\phi(t)\}\cup \{y_C\}
		%        \\&\cup  \Big\{(y_{d}\land\lambda):d\in \beta^\phi(t),\lambda \text{ a literal in }d\Big\}\cup \Big\{(y_C\land c): c\in \beta^\phi(t)\Big\}
		%        \\&\cup \Big\{(\lnot y_{d_1}\land \dots \land \lnot y_{d_{k'}}\land \lnot y_C)\Big\}.
		%    \end{align*} 
	%    It is straightforward to verify that $(T,\beta^\psi)$ is a tree decomposition of $G^\psi$. Furthermore, the width of $(T,\beta^\psi)$ is $4\tw(G^\psi)+2$ as for every $y_d$ the clause $d$ contains at most three literals. As constructing $\psi$ takes linear time, combining the reduction with the $2^{2^{\tw(G^\psi)}}$ algorithm from \cite{Chen04} and the algorithm from \cite{LampisMM18} for converting between primal and incidence treewidth, yields the statement. 
	
	We now consider the incidence graph of $\psi$ and construct a tree
	decomposition of that graph starting from a decomposition of the incidence
	graph of $\phi$. First, we maintain all common variables, and for each
	conjunctive clause $c$ in a bag, we replace it with the clause $(y_C\land c)$
	which has replaced it in $\psi$; we add $y_C$ and the clause $(\neg
	y_{d_1}\land \ldots\land y_{d_{k'}\land \neg y_C})$ to all bags; for each
	disjunctive clause $d$ appearing in a bag we add the three new clauses
	$(y_d\land \lambda)$, where $\lambda$ is a literal of $d$, to the bag. Observe
	that we have a valid decomposition of the new incidence graph, with width
	$t'=O(t)$, such that each bag contains at most $7$ clauses, and at most
	$t_\forall+1$ universal variables. 
	
	We are now almost done, except that we need to obtain a decomposition of the
	primal graph of $\psi$ to invoke \cref{prop:chen-better} and also to convert
	$\psi$ into 3-DNF. This, however, can be done using an algorithm of
	\cite{LampisMM18}. In particular, we convert $\psi$ to 3-DNF in the standard
	way: as long as there exists a clause that contains at least $4$ literals, say
	$(\lambda_1\land \lambda_2\land \ldots\land \lambda_r)$, we introduce a new
	universally quantified variable $z$ and replace it with the two clauses
	$(\lambda_1\land \lambda_2\land z)$ and $(\neg z\land
	\lambda_3\land\ldots\land\lambda_r)$. The difference, however, is that we
	select the two literals to place in the new clause in a way that keeps the
	treewidth of the new formula under control. In particular, we find the lowest
	bag $B$ in the decomposition that contains the current clause such that the
	sub-tree rooted at $B$ contains two variables of the clause and set
	$\lambda_1,\lambda_2$ to be the literals involving these two variables.  It is
	now not hard to see that if we execute this exhaustively, the new decomposition
	we construct will have $O(1)$ clauses per bag, $t_\forall+O(1)$ universally
	quantified variables, and width $O(t)$. Now, replace each clause in the
	decomposition by the (at most $3$) variables it contains, and we get a
	decomposition of the primal graph, on which we can invoke
	\cref{prop:chen-better}.  \end{proof}
\end{toappendix}

\begin{proof}[Proof of Theorem~\ref{thm:doubleExpAlgoCS}]
%We obtain an algorithm through a reduction to $\ECNFADNF$-SAT.  
We prove Theorem~\ref{thm:doubleExpAlgoCS} by a reduction to $\ECNFADNF$-SAT. 
Let $(G,w)$ be an instance of  \textsc{CS} and $(T,\beta)$ be a rooted tree decomposition of $G$. Here, we denote the bag of a node $t\in T$ by $\beta(t)$. 
First we let $(G',w')$ be the graph obtained from $(G,w)$ by adding edges $uv$ of weight $0$ for every pair of vertices $u,v$ appearing in a bag together. It is straightforward to see that $(G,w)$ is a YES-instance of \textsc{CS} if and only if $(G',w')$ is a YES-instance of \textsc{CS}. Additionally, $G'$ is chordal and $(T,\beta)$ is a tree decomposition of $G'$.

%We construct an instance $\phi$ of $\exists\forall$-SAT.
We construct an instance $\phi$ of $\ECNFADNF$-SAT. 
We introduce a variable $x_e$ for every $e\in E(G')$ and a variable $y_u$ for every vertex $u\in V(G')$. 
An assignment $\alpha_X:\{x_e:e\in E(G')\}\rightarrow \{0,1\}$ represents a subset of $E(G')$ and an assignment $\alpha_Y: \{y_u:u\in V(G')\}\rightarrow \{0,1\}$ represents a subset of  $V(G')$. Intuitively,  a partition of $V(G')$ corresponds to a set of edges, i.e. by including all edges that are incident to vertices from the same part. % the set of edges represented by an assignment $\alpha_X:\{x_e:e\in E(G')\}\rightarrow \{0,1\}$. 
For every $t\in V(T)$ we define a formula $\phi_t$, which we use to enforce sufficient criteria for the set of edges represented by an assignment $\alpha_X:\{x_e:e\in E(G')\}\rightarrow \{0,1\}$ to correspond to a partition of $V(G')$. Let
\begin{align*}
	\phi_t=&\bigwedge_{u_1,u_2,u_3\in \beta(t)}\big(x_{u_1u_2}\land x_{u_2u_3}\big)\rightarrow x_{u_1u_3}=
	\bigwedge_{u_1,u_2,u_3\in \beta(t)}\Big(\lnot x_{u_1u_2}\lor \lnot x_{u_2u_3}\lor x_{u_1u_3}\Big).
\end{align*}
For a fixed partition $\mathcal{P}$ represented by some assignment $\alpha_X:\{x_e:e\in E(G')\}\rightarrow \{0,1\}$ our formula needs to ensure that no blocking coalitions exist. This is realized by guaranteeing that for each assignment $\alpha_Y: \{y_u:u\in V(G')\}\rightarrow \{0,1\}$ the set corresponding to $\alpha_Y$ is not blocking. Intuitively, the formula $\phi_u$ defined below ensures that the utility of vertex $u$ in $\mathcal{P}$ is at least as large as the utility of $u$ in the coalition represented by $\alpha_Y$. For $u\in V(G')$ let
$$\phi_u=\hspace{-5pt}\bigvee_{N,\tilde{N}\subseteq N_{G'}(u),\atop{\sum_{v\in N}w(uv)\leq\sum_{v\in \tilde{N}}w(uv)}}\hspace{-10pt}\bigg(\bigwedge_{v\in N}y_{v}\land \bigwedge_{v\in N_{G'}(u)\setminus N}\lnot y_{v} \land \bigwedge_{v\in \tilde{N}}x_{uv}\land \bigwedge_{v\in N_{G'}(u)\setminus\tilde{N}}\lnot x_{uv}\bigg).$$
We now define the formula $\phi$ to be 
$$\phi=\exists x_{e_1}\dots \exists x_{e_m}\bigwedge_{t\in V(T)}\phi_t \hspace{10pt}\forall y_{v_1} \dots \forall y_{v_n} \big(\lnot y_1\land \dots \land \lnot y_n \big)\lor \bigvee_{u\in V(G')}\phi_u.$$ 
Observe that $\phi$ is in $\ECNFADNF$.
In the following we prove that $(G',w')$ is a YES-instance of \textsc{CS} if and only if $\phi$ is satisfiable.\\

First assume that $\mathcal{P}$ is a core stable partition of $V(G')$. We define an assignment $\alpha_X:\{x_e:e\in E(G')\}\rightarrow \{0,1\}$ by setting $\alpha_X(x_e)=1$ if $e$ is incident to two vertices residing in the same part of $\mathcal{P}$ and $\alpha_X(x_e)=0$ otherwise. This assignment satisfies $\phi_t$ for every $t\in V(T)$ as for any three vertices $u_1,u_2,u_3\in \beta(t)$ it holds that if $\alpha_X(x_{u_1u_2})=1$ and $\alpha_X(x_{u_2u_3})=1$, then $u_1,u_2$ and $u_3$ must reside in the same part of $\mathcal{P}$ and hence $\alpha_X(x_{u_1u_3})=1$. Furthermore, consider any assignment $\alpha_Y: \{y_u:u\in V(G')\}\rightarrow \{0,1\}$ and let $X=\{u\in V(G'):\alpha_Y(y_u)=1\}$. We have to argue that the formula $\big(\lnot y_1\land \dots \land \lnot y_n \big)\lor \bigvee_{u\in V(G')}\phi_u$ is satisfied under the assignments $\alpha_X$ and $\alpha_Y$. In case that $X=\emptyset$, the clause  $\big(\lnot y_1\land \dots \land \lnot y_n \big)$ is satisfied. On the other hand, if $X\not=\emptyset$, then there must be a vertex $u\in X$ whose utility in $\mathcal{P}$ is at least as large as its utility in $X$ as the set $X$ cannot be a blocking coalition. Hence, for the sets $N=N_{G'}(u)\cap X$ and $\tilde{N}=N_{G'}(u)\cap P$, where $P\in \mathcal{P}$ is the part containing $u$, we have that $\sum_{v\in N}w(uv)\leq\sum_{v\in \tilde{N}}w(uv)$. Therefore, $\phi_u$ contains the clause $\big(\bigwedge_{v\in N}y_{v}\land \bigwedge_{v\in N_{G'}(u)\setminus N}\lnot y_{v} \land \bigwedge_{v\in \tilde{N}}x_{uv}\land \bigwedge_{v\in N_{G'}(u)\setminus\tilde{N}}\lnot x_{uv}\big)$ and this clause is satisfied under the assignment $\alpha_X$ and $\alpha_Y$ by choice of $N$ and $\tilde{N}$. This shows that $\phi$ is satisfiable.\\

On the other hand, assume that $\phi$ is satisfiable and let $\alpha_X:\{x_e:e\in E(G')\}\rightarrow \{0,1\}$ be an assignment such that $\bigwedge_{t\in V(T)}\phi_t$ as well as $\forall y_{v_1} \dots \forall y_{v_n} \big(\lnot y_1\land \dots \land \lnot y_n \big)\lor \bigvee_{u\in V(G')}\phi_u$ is satisfied under  $\alpha_X$. We let $E_X=\{e\in E(G'): \alpha_X(x_e)=1\}$ and define a partition $\mathcal{P}$ by letting every part of $\mathcal{P}$ correspond to the vertices of a connected component of the graph $(V(G'),E_X)$. To show that the partition $\mathcal{P}$ is core stable we use the following claim.
\begin{claim}\label{claim:consistencyPartition} \ifthenelse{\boolean{short}}{\textup{($\star$)}}{}
	For every  edge $uv\in E(G')$ it holds that $uv\in E_X$ if and only if $u$ and $v$ are contained in the same part of $\mathcal{P}$.
\end{claim}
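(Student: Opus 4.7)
The plan is to prove the two directions separately. The \emph{only if} direction is immediate from the construction of $\mathcal{P}$: if $uv \in E_X$, then $u$ and $v$ lie in the same connected component of $(V(G'), E_X)$ and therefore in the same part of $\mathcal{P}$.

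For the \emph{if} direction, I will use strong induction on the length $k$ of a shortest path from $u$ to $v$ in $(V(G'), E_X)$, relying on the key structural observation that $G'$ is chordal. Indeed, every bag $\beta(t)$ is a clique of $G'$ (since we explicitly added all co-bag pairs as edges of weight $0$), and $(T,\beta)$ remains a valid tree decomposition of $G'$; any graph admitting such a clique tree decomposition is chordal, and by the Helly property of subtrees in a tree every clique of $G'$ sits inside some single bag.

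The base case $k=1$ is trivial. For $k=2$, let $u, w_1, v$ be a shortest $E_X$-path; together with the hypothesized edge $uv \in E(G')$, the set $\{u, w_1, v\}$ forms a triangle in $G'$, which as a clique is contained in some bag $\beta(t)$. The clause $\lnot x_{uw_1} \lor \lnot x_{w_1v} \lor x_{uv}$ is among the conjuncts of $\phi_t$ for the triple $(u, w_1, v)$, and since $\alpha_X$ satisfies $\phi_t$ with $\alpha_X(x_{uw_1}) = \alpha_X(x_{w_1v}) = 1$, we must have $\alpha_X(x_{uv}) = 1$, i.e.\ $uv \in E_X$. For $k \geq 3$, the cycle $u = w_0, w_1, \ldots, w_k = v, u$ is a simple cycle of length $k+1 \geq 4$ in $G'$ (simplicity being forced by the minimality of $k$), and by chordality it admits a chord $w_iw_j$ with $|j-i| \geq 2$ and $\{i,j\} \neq \{0,k\}$. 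Since $w_i$ and $w_j$ lie on the original shortest path, their $E_X$-distance is at most $|j-i| < k$, so the inductive hypothesis yields $w_iw_j \in E_X$. Substituting this chord into the path then produces an $E_X$-path from $u$ to $v$ of length strictly less than $k$, contradicting the choice of $k$.

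The main obstacle I anticipate is the structural verification that $G'$ is chordal and that every clique of $G'$ lies in a single bag; these are standard facts about tree decompositions but they are the real content of the argument, as they are precisely what allows the transitivity clauses of $\phi_t$ to be invoked on the relevant triple. Once these observations are in place, the chord-shortening step in the induction is essentially automatic, and the only remaining subtlety is confirming that $\{i,j\} \neq \{0,k\}$ rules out the degenerate case in which the chord coincides with the target edge $uv$ itself.
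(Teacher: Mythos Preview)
Your proof is correct and follows essentially the same approach as the paper: both exploit that $G'$ is chordal (bags are cliques, so the tree decomposition is a clique tree) and that any triangle of $G'$ lies in a single bag, which makes the transitivity clause of some $\phi_t$ applicable. The only difference is in framing: the paper argues by contradiction, taking a shortest $E_X$-path and choosing the minimal ``jump'' $k\ge 2$ with $q_iq_{i+k}\in E(G')$, then using chordality to force $k=2$ and produce a single inconsistent triangle; you instead run a strong induction on the $E_X$-distance, applying the inductive hypothesis to a chord and shortcutting. These are dual presentations of the same descent, and your version is arguably a bit cleaner because the role of the clause is isolated in the $k=2$ base case.
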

\begin{toappendix}
	\begin{proof}[Proof\ifthenelse{\boolean{short}}{ of \cref{claim:consistencyPartition}}{}]
		Since the forward direction holds by definition of $\mathcal{P}$, assume towards a contradiction that $uv\in E(G')$ with $u$ and $v$ being contained in the same part of $\mathcal{P}$ but $uv\notin E_X$. By definition of $\mathcal{P}$, $u$ and $v$ being contained in the same part implies that there is a path from $u$ to $v$ in the graph $(V(G'),E_X)$. Let $Q=(q_1,\dots, q_\ell)$ be a shortest such path. We aim to find an inconsistent triangle, i.e. three vertices $v_1,v_2,v_3$ such that $v_1v_2,v_2v_3\in E_X$ but $v_1v_3\notin E_X$. Assume that $k\leq 2$ is the minimum number such that there is an index $i\in [\ell-k]$ for which $q_iq_{i+k}\in E(G')$. As $G'$ is chordal, we picked $k$ minimum $k=2$. Furthermore, $q_iq_{i+2}\notin E_X$ as otherwise, the path $(q_1,\dots, q_i,q_{i+2},\dots q_\ell)$ is a path in $(V(G'),E_X)$ from $u$ to $v$ which is shorter than $Q$. Hence, $q_i,q_{i+1},q_{i+2}$ forms an inconsistent triangle. We now claim that there is a node $t\in V(T)$ such that $q_i,q_{i+1},q_{i+2}\in \beta(t)$. As $q_iq_{i+1},q_{i+1}q_{i+2},q_iq_{i+2}\in E(G')$ we know that there are nodes $t_1,t_2,t_3\in V(T)$ such that  $\{q_i,q_{i+1}\}\subseteq \beta(b_1),\{q_{i+1},q_{i+2}\}\subseteq \beta(t_2)$ and $\{q_i,q_{i+2}\}\subseteq \beta(t_3)$. Let $t$ be the lowest common ancestor of $t_1,t_2$ and $t_3$ in $T$. As the nodes of $T$ containing a particular vertex of $G'$ in their bags form a subtree of $T$, we directly obtain that $q_i,q_{i+1},q_{i+2}\in \beta(t)$. Hence, $\big(x_{q_iq_{i+1}}\land x_{q_{i+1}q_{i+2}}\big)\rightarrow x_{q_iq_{i+3}}\big)$ is a clause in  $\bigwedge_{t\in V(T)}\phi_t$. Hence, $\big(x_{q_iq_{i+1}}\land x_{q_{i+1}q_{i+2}}\big)\rightarrow x_{q_iq_{i+2}}\big)$ must be satisfied under the assignment $\alpha_X$ which contradicts $q_iq_{i+2}$ not being contained in $E_X$ proving the claim.
	\end{proof}
\end{toappendix}
Towards a contradiction assume that $\mathcal{P}$ is not core stable and let $X$ be a blocking coalition. We define an assignment $\alpha_Y: \{y_u:u\in V(G')\}\rightarrow \{0,1\}$ by setting $\alpha_Y(y_u)=1$ if $u\in X$ and $\alpha_Y(y_u)=0$ otherwise. By assumption, the DNF formula $\big(\lnot y_1\land \dots \land \lnot y_n \big)\lor \bigvee_{u\in V(G')}\phi_u$ is satisfied under assignment $\alpha_X$ and $\alpha_Y$. As $X$ cannot be empty there is at least one $u\in V(G')$ such that $\alpha_Y(y_u)=1$ and hence the clause $\big(\lnot y_1\land \dots \land \lnot y_n \big)$ cannot be satisfied under the assignment $\alpha_Y$ which implies that some clause in $\bigvee_{u\in V(G')}\phi_u$ must be satisfied. Let $u\in V(G')$ and $N,\tilde{N}\subseteq N_{G'}(u)$ with $\sum_{v\in N}w(uv)\leq\sum_{v\in \tilde{N}}w(uv)$ such that the clause $\big(\bigwedge_{v\in N}y_{v}\land \bigwedge_{v\in N_{G'}(u)\setminus N}\lnot y_{v} \land \bigwedge_{v\in \tilde{N}}x_{uv}\land \bigwedge_{v\in N_{G'}(u)\setminus\tilde{N}}\lnot x_{uv}\big)$ is satisfied under assignments $\alpha_X$ and $\alpha_Y$. This implies that $N=N_{G'}(u)\cap X$ and $\tilde{N}=\{v\in V(G'):uv\in E_X\}$. Since by Claim~\ref{claim:consistencyPartition}, $\{v\in V(G'):uv\in E_X\}=N_{G'}(u)\cap P$, where $P$ is the part of $\mathcal{P}$ containing $u$, this implies that the utility of $u$ in $X$ is less or equal to the utility of $u$ in $\mathcal{P}$. As this contradict our assumption that $X$ is a blocking coalition it follows that $\mathcal{P}$ is core stable.

\begin{claim}\label{claim:bdTreewidth} \ifthenelse{\boolean{short}}{\textup{($\star$)}}{}
	The formula $\phi$ has incidence treewidth at most $\Delta(G')\tw(G')+\tw(G')^2+2$. Furthermore, we can construct a decomposition of that width which contains at most $2$ clauses per bag and at most $(\Delta(G')+1)\tw(G')$ universally quantified variables per bag.
\end{claim}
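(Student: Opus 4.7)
The plan is to modify the given width-$\tw(G')$ tree decomposition $(T,\beta)$ of $G'$ into a tree decomposition of the incidence graph of $\phi$. For each $t\in V(T)$ I would define a base bag
\[
\beta'(t) = \{y_v:v\in N_{G'}[\beta(t)]\}\cup\{x_{uv}:uv\in E(G'),\,\{u,v\}\cap\beta(t)\neq\emptyset\}\cup\{C_0\},
\]
where $C_0$ denotes the long clause $(\lnot y_1\land\dots\land\lnot y_n)$ viewed as a single vertex of the incidence graph. A direct counting argument bounds $|\beta'(t)|$ by $O(\Delta(G')\tw(G')+\tw(G')^2)$, and careful accounting matches the claimed $\Delta(G')\tw(G')+\tw(G')^2+2$ bound on the width.

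To verify the connectedness requirement, I would argue vertex-by-vertex. For $y_v$ the support inside $V(T)$ equals $T_v\cup\bigcup_{u\in N_{G'}(v)}T_u$, where $T_w=\{t:w\in\beta(t)\}$; because each $uv\in E(G')$ forces $T_u\cap T_v\neq\emptyset$, every $T_u$ meets the subtree $T_v$, and the union is therefore connected. A symmetric argument handles $x_{uv}$, whose support is $T_u\cup T_v$, and $C_0$ lies in every base bag so its support is trivially $V(T)$.

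The remaining clauses are attached as fresh leaf bags. A clause of $\phi_t$ involves only three $x$-variables among $\beta(t)$ and they are all present in $\beta'(t)$, so it can be hung as a size-$4$ leaf below $t$. A clause of $\phi_u$ involves the $2|N_{G'}(u)|$ variables $\{y_v,x_{uv}:v\in N_{G'}(u)\}$, all of which are contained in $\beta'(t_u)$ for any $t_u\in T_u$ (using $u\in\beta(t_u)$ together with $N_{G'}(u)\subseteq N_{G'}[\beta(t_u)]$); so it can be hung as a leaf of size at most $2\Delta(G')+1$ below $t_u$. Each bag then contains at most two clauses, namely $C_0$ plus possibly one leaf clause, and the number of universally quantified variables in any bag is at most $|N_{G'}[\beta(t)]|\le (\Delta(G')+1)(\tw(G')+1)$, matching the promised count on universal variables up to small additive slack.

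The main obstacle I expect is the long clause $C_0$, which has $n$ neighbours in the incidence graph and naively would force some bag to contain every $y$-variable, destroying the width bound. The key observation is that duplicating $C_0$ itself in every base bag costs only a single extra unit per bag: since every $y_v$ is already contained in some base bag (namely any $t\in T_v$), the incidence edges $(C_0,y_v)$ are all automatically covered, while $C_0$'s own support becomes the whole of $V(T)$ and is therefore trivially a connected subtree. The rest is a bookkeeping exercise to tighten the constants in the two bag-size estimates above.
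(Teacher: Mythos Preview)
Your overall strategy---keep the original tree shape, put the long clause $C_0$ in every bag, and attach each remaining clause locally at a node whose base bag already contains all its variables---is exactly what the paper does (the paper subdivides the edge to the parent rather than hanging leaves, but that is cosmetic).

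The one substantive divergence is your choice of which $x$-variables to put in a base bag. You take $\{x_{uv}: \{u,v\}\cap\beta(t)\neq\emptyset\}$, whereas the paper takes the smaller set $\{x_{uv}: u,v\in\beta(t)\}$. Your choice makes the ``hang $\phi_u$ at any $t_u\in T_u$'' step work cleanly, but it costs you the exact width bound: the number of edges of $G'$ touching $\beta(t)$ is up to about $\Delta(G')\cdot|\beta(t)|$, so your bag contains roughly $\Delta(G')\tw(G')$ many $y$-variables \emph{plus} roughly $\Delta(G')\tw(G')$ many $x$-variables, i.e.\ order $2\Delta(G')\tw(G')$ in total. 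Since $\Delta(G')\ge\tw(G')$ (each bag of $G'$ is a clique), this in general exceeds $\Delta(G')\tw(G')+\tw(G')^2+2$, so your sentence ``careful accounting matches the claimed bound'' is not correct as written. To recover the stated width you must restrict to $x_{uv}$ with \emph{both} endpoints in $\beta(t)$, giving only $\binom{|\beta(t)|}{2}\le\tw(G')^2$ many $x$-variables; but then ``hang $\phi_u$ at any $t_u\in T_u$'' no longer suffices, because you now need a node whose bag contains \emph{all} of $N_{G'}[u]$, not just $u$, and you have to argue that such a node exists (this is where the paper uses $\lca(u)$). Either fix the accounting or explain why the larger bound still yields the target $2^{2^{O(\Delta\tw)}}$ running time (it does, since both quantities are $O(\Delta\tw)$).

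The remark that the universal-variable count matches ``up to small additive slack'' has the same flavour: your bound $(\Delta(G')+1)(\tw(G')+1)$ exceeds the stated $(\Delta(G')+1)\tw(G')$ by $\Delta(G')+1$, which is not small in general, though again harmless for the asymptotics used downstream.
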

\begin{toappendix}
	\begin{proof}[Proof\ifthenelse{\boolean{short}}{ of \cref{claim:bdTreewidth}}{}]
		Let $G^\phi$ be the incidence graph of $\phi$. For $t\in V(T)$ we denote the subtree of $T$ rooted at $t$ by $T_t$. For every $u\in V(G')$ we let  $\lca(u)$ be the node $t$ such that $u\in \beta(t)$ and $T_t$ contains every node $t'\in V(T)$ for which $u\in \beta(t')$. In other words $\lca(u)$ is the lowest common ancestor of all nodes whose bag contain $u$. We now construct a tree decomposition $(T^\phi,\beta^\phi)$ of $G^\phi$. We obtain $T_\phi$ from $T$ by the following steps. 
		\begin{enumerate}
			\item For every $t\in V(T)$ and every triple $(u_1,u_2,u_3)\in \beta(t)^3$ we introduce a new node $t_{(u_1,u_2,u_3)}$.
			\item For every vertex $u\in V(G)$ and every two sets $N,\tilde{N}\subseteq N_{G'}(u)$ for which $\sum_{v\in N}w(uv)\leq\sum_{v\in \tilde{N}}w(uv)$ we introduce a node $\lca(u)_{(N,\tilde{N})}$.
			\item For the root $r$ of $T$ we attach a path to $r$ containing all nodes $r_{(u_1,u_2,u_3)}$ and all nodes $t_{(N,\tilde{N})}$ in an arbitrary order. 
			\item For every node $t\in V(T)$ apart from the root $r$, we subdivide the edge from $t$ to its ancestor by the nodes $t_{(u_1,u_2,u_3)}$ and the nodes $t_{(N,\tilde{N})}$.
		\end{enumerate}
		We define $\beta^\phi$ in the following way. We set $\beta^\phi(t)=\{y_u:u\in N_{G'}(\beta(t))\cup \beta(t)\}\cup \{x_{uv}:u,v\in \beta(t)\}\cup \{c_\lnot\}$ for every node $t\in V(T)$ where $c_\lnot$ is the clause $(\lnot y_1\land \dots \land \lnot y_n)$. Further, we set $\beta^\phi(t_{(u_1,u_2,u_3)})=\{c_{(u_1,u_2,u_3)}\}\cup \beta ^\phi(t)$ where $c_{(u_1,u_2,u_3)}$ is the clause $(\lnot x_{u_1u_2}\lor \lnot x_{u_2u_3}\lor x_{u_1u_3})$. Similarly, we set $\beta^\phi(t_{(N,\tilde{N})})=\{c_{(N,\tilde{N})}\}\cup \beta ^\phi(t)$ where $c_{(N,\tilde{N})}$ is the clause $(\bigwedge_{v\in N}y_{v}\land \bigwedge_{v\in N_{G'}(u)\setminus N}\lnot y_{v} \land \bigwedge_{v\in \tilde{N}}x_{uv}\land \bigwedge_{v\in N_{G'}(u)\setminus\tilde{N}}\lnot x_{uv})$.
		
		We now argue that the pair $(T^\phi,\beta^\phi)$ is a tree decomposition of $G^\phi$.
		First observe that that every vertex of $G^\phi$ is contained in the bag of some node of $T^\phi$. 
		Next we argue that the two vertices incident to any edge appear in a bag together in $(T^\phi,\beta^\phi)$. Observe that any edge incident to the clause $(\lnot y_1\land \dots \land \lnot y_n)$ must be contained in some bag as $(\lnot y_1\land \dots \land \lnot y_n)$ is contained in every bag. Furthermore, any bag $\beta^\phi(t_{(u_1,u_2,u_3)}$ contains both $y_{u_1},y_{u_2},y_{u_3}$ and $(\lnot x_{u_1u_2}\lor \lnot x_{u_2u_3}\lor x_{u_1u_3})$ and hence every edge incident to  $(\lnot x_{u_1u_2}\lor \lnot x_{u_2u_3}\lor x_{u_1u_3})$ is contained in some bag. Lastly, any bag $\beta(\lca(u)_{N,\tilde{N}})$ contains $y_u$, $y_v$ for every neighbor $v\in N_{G'}(u)$ and the clause  $(\bigwedge_{v\in N}y_{v}\land \bigwedge_{v\in N_{G'}(u)\setminus N}\lnot y_{v} \land \bigwedge_{v\in \tilde{N}}x_{uv}\land \bigwedge_{v\in N_{G'}(u)\setminus\tilde{N}}\lnot x_{uv})$ which implies that every edge incident to $(\bigwedge_{v\in N}y_{v}\land \bigwedge_{v\in N_{G'}(u)\setminus N}\lnot y_{v} \land \bigwedge_{v\in \tilde{N}}x_{uv}\land \bigwedge_{v\in N_{G'}(u)\setminus\tilde{N}}\lnot x_{uv})$ is contained in some bag.\\
		
		Finally, we argue that the subtree induced by nodes containing a fixed vertex $v$ in their bag, is connected for every vertex $v\in V(G^\phi)$. First note that for any clause $c$ of $\phi$ the set $\{t\in V(T^\phi):c\in \beta(t)\}$ trivially induces a subtree of $T$ as any clause is either contained in the bags of all nodes of $T^\phi$ or in the bag of only one node. 
		
		Next, fix $e=uv\in E(G')$ and consider the set $S_e=\{t\in V(T^\phi):x_e\in \beta^\phi(t)\}$. Observe, that for any node $t\in V(T)$ by construction $t\in S_e$ if and only if $u,v\in \beta(t)$. Hence, the set $S_e\cap V(T)$ induces a subtree on $T$ as it is the intersection of two sets $\{t\in V(T):u\in \beta(t)\}$ and $\{t\in V(T): v\in \beta(t)\}$ which induce a subtree on $T$. As the graph induced by $S_e$ on $T$ can be obtained from the subtree of $T$ induced by $S_e\cap V(T)$ by subdividing every edge and attaching a path at the root, we obtain that $S_e$ must induce a tree. 
		
		Finally, fix $u\in V(G)$ and consider the set $S_u=\{t\in V(T):y_u\in \beta^\phi(t)\}$. By construction, a node $t\in V(T)$ is contained in $S_u$ if and only if either $u$ or some neighbor $v\in N_{G'}(u)$ is contained in $\beta(t)$. Hence, the graph induces by $S_u\cap V(T)$ is the union of the trees $G'[\{t\in V(T):v\in \beta(t)\}]$ for all neighbors $v\in N_{G'}(u)\cup\{u\}$. As  $\{t\in V(T):v\in \beta(t)\}$ must intersect the set $\{t\in V(T):u\in \beta(t)\}$  (as $(T,\beta)$ is a tree decomposition of $G'$ and $uv\in E(G')$) the graph induced by $S_u\cap V(T)$ on $T$ is a subtree of $T$. Similarly to the previous case, this implies that $S_u$ induces a subtree of $T^\phi$. \\
		
		To argue about the width of $(T^\phi,\beta^\phi)$ first observe that every bag contains at most two clauses. Excluding clauses, every bag of $(T^\phi,\beta^\phi)$ consists of the set $\{y_u:u\in N_{G'}(\beta(t))\cup \beta(t)\}$ of size $\Delta(G')\tw(G')+\tw(G')$ and $\{x_{uv}:u,v\in \beta(t)\}$ of size $\tw(G')\cdot (\tw(G')-1) $ and hence $(T^\phi,\beta^\phi)$ has width $\Delta(G')\tw(G')+\tw(G')^2+2$. 
	\end{proof}
\end{toappendix}
%$\phi$ contains $\mathcal{O}(\tw(G')^2 n)$ variables ($T$ can be assumed to have linear in $n$ many nodes). 
As $(T,\beta)$, $G'$ and the formula $\phi$ can be computed in time $\mathcal{O}(\tw(G')^2 n)$, combining the reduction with the algorithm from Proposition~\ref{proposition:existCNFforallDNF} yields a %$2^{2^{\mathcal{O}(\Delta(G')\tw(G')+\tw(G')^2+2)}}n\cdot \mathcal{O}(\tw(G')^2 n) \in 
$2^{2^{\mathcal{O}(\Delta\tw)}}n^{O(1)}$ time algorithm.
\end{proof}

Finally, we prove our ETH based lower bound.
\begin{theorem}\label{thm:ETHlowerBoundCS} \ifthenelse{\boolean{short}}{\textup{($\star$)}}{}
Unless the ETH fails, there is no algorithm for \textsc{Core stability} running in time $2^{2^{o(\pw)}}n$ even if $G$ has bounded degree and weights are constant.
\end{theorem}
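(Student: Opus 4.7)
The plan is to reduce from a $\Sigma_2^p$-complete base problem following the double-exponential lower bound blueprint of \cite{LampisM17,MarxM16}. A convenient starting point is a restricted form of $\exists\forall$-SAT with $n$ existential and $n$ universal variables (or equivalently a quantified tiling problem on a strip of width $n$), which under the ETH does not admit a $2^{2^{o(n)}}$-time algorithm. The goal is then to produce a \textsc{CS} instance whose pathwidth is $O(n)$, whose maximum degree is $O(1)$ and whose weights come from a fixed constant-size set, at which point any $2^{2^{o(\pw)}}n$-time algorithm would decide the base problem in $2^{2^{o(n)}}$ time, violating the ETH.

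First, I would build a "selector" gadget of pathwidth $O(n)$ by leveraging the auxiliary gadget $(H,w_\rho)$ from \cref{sec:auxiliaryGadget}, arranged so that any core stable partition must commit to an assignment $\sigma\in\{0,1\}^n$ to the existential variables. Concretely, for each existential variable $x_i$ we use a small constant-size bit gadget containing two "choice" vertices; attaching a copy of $(H,w_\rho)$ at these vertices via property \ref{prop:gadgetP1} of \cref{lem:auxiliaryGadget} forces exactly one of the two into a coalition with the $h$-vertex, thereby recording a truth value. Laying these $n$ gadgets out linearly on a path gives pathwidth linear in $n$ without violating the degree or weight constraints.

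Second, and this is the core of the construction, I would design the coNP side so that potential blocking coalitions are forced to encode assignments to $2^n$ universal variables $y_1,\dots,y_{2^n}$. Because a blocking coalition is an arbitrary subset of vertices, and within a bag of width $O(n)$ there are $2^{O(n)}$ possible restrictions of the coalition to that bag, a linear-pathwidth instance can in principle describe $2^{2^{\Omega(n)}}$ different blocking coalitions. To realize this I would index each universal variable $y_j$ by its $n$-bit address and thread $n$ "bit vertices" through a path-decomposition so that each bag accesses only $O(n)$ of them; consistency of the address across distant parts of the construction is enforced by propagation gadgets whose only stable option is to copy the bits along the path. Clause gadgets of bounded size then fail to be core stable (i.e.\ admit a small blocking coalition consistent with the bit vertices) precisely when the joint assignment $(\sigma,\tau)$ falsifies a clause of the formula, so that $\mathcal{P}$ is core stable if and only if no counter-assignment $\tau$ exists.

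The main obstacle is controlling the pathwidth while simultaneously (i) forcing the blocking coalition to encode a \emph{single} consistent $\tau$ across far-apart clause gadgets and (ii) preventing spurious blocking coalitions that do not correspond to any valid address-decoding. With bounded weights we cannot use large penalties to rule out inconsistent coalitions, so the penalties must be structural; here \cref{lem:auxiliaryGadget} is the key leverage, since attaching copies of $(H,w_\rho)$ restricts both the shape of core stable partitions (via \ref{prop:gadgetP1},\ref{prop:gadgetP2}) and the set of vertices that a blocking coalition may touch (via \ref{prop:gadgetP3}), allowing us to localize the combinatorial choices that each blocking coalition encodes. Once the propagation and clause gadgets are assembled, the correctness argument is a routine case analysis, and the ETH lower bound follows immediately: an algorithm running in time $2^{2^{o(\pw)}}n$ on the constructed instance of pathwidth $O(n)$, bounded degree, and constant weights would solve the base $\exists\forall$-SAT instance in time $2^{2^{o(n)}}$, contradicting the ETH.
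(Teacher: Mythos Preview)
Your starting point is wrong, and this breaks the entire argument. An $\exists\forall$-SAT instance with $n$ existential and $n$ universal variables (and polynomially many clauses) can trivially be solved in time $2^{O(n)}$ by enumerating all assignments; under the ETH one gets a $2^{\Omega(n)}$ lower bound, \emph{not} $2^{2^{\Omega(n)}}$. Hence reducing such an instance to a \textsc{CS} instance of pathwidth $\Theta(n)$ yields nothing: a hypothetical $2^{2^{o(\pw)}}n$ algorithm would run in time $2^{2^{o(n)}}$, which is far larger than the $2^{O(n)}$ upper bound and contradicts nothing. Your later talk of ``$2^n$ universal variables indexed by $n$-bit addresses'' does not repair this, because the base instance you declared has only $n$ universal variables.

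The paper's approach is essentially the opposite of yours. It reduces from an ordinary NP problem, namely $(3,3)$-SAT on $n$ variables (ETH lower bound $2^{\Omega(n)}$), to a \textsc{CS} instance of pathwidth $O(\log n)$. The extra exponential is supplied not by the source problem but by the $\Sigma_2^p$ structure of \textsc{CS} itself: the core stable partition encodes the existential SAT assignment, while a blocking coalition encodes \emph{in binary} the index $j\in[2^m]$, $m=O(\log n)$, of a single allegedly unsatisfied clause. Concretely, for each bit position $k\in[m]$ there are two long cycles $U^k,V^k$, and a blocking coalition must pick exactly one of them, so the $m$ choices name a clause; additional per-appearance gadgets then force the coalition to witness that this clause is falsified by the partition's assignment. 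This logarithmic compression of the parameter is the whole point of the \cite{LampisM17,MarxM16} blueprint you cite; starting from a $\Sigma_2^p$-hard base problem forfeits exactly that compression.
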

We give an overview of our construction. Given an instance $\phi$ of $(3,3)$-SAT (each variable appears at most 3 times) we construct a graph $(G,w)$ and show that $\phi$ is satisfyable if and only if $(G,w)$ is core stable. First we tweak the auxiliary gadget from \cref{sec:auxiliaryGadget} slightly such that if we attach $(H, w_\rho)$ at a set of vertices $S$ then $\{h,s\}\in \mathcal{P}$ for some $s\in S$ for every core stable partition $\mathcal{P}$ (details are given in the Appendix). 

We now describe the construction and certain partitions of $V(G)$ which we refer to as  candidate partition.
Every variable $x_i$, $i\in [n]$ of $\phi$ is represented by two vertices $y_i$ and $\lnot y_i$ and we attach an auxiliary gadget at $\{y_i,\lnot y_i\}$. Any candidate partition $\mathcal{P}$ has to contain either $\{h,y_i\}$ or $\{h,\lnot y_i\}$ (but not both) where $h$ is a vertex of the attached gadget. For any other vertex $v$  without an attached auxiliary gadget (excluding vertices of auxiliary gadgets)  $\{v\}$ has to be in any candidate partition. For any vertex $v$ with attached copy $(H,w_\rho)$ the set $\{v,h\}$ has to be in any candidate partition. Using the properties of the auxiliary gadget, we obtain that any core stable partition has to be a candidate partition. 
By construction, there is a correspondence between  assignments and candidate partitions, i.e. $\alpha(x_i)=1$ if and only if $\{h,y_i\}\in \mathcal{P}$ for candidate partition $\mathcal{P}$ and the corresponding assignment $\alpha$. 

Any blocking coalition of a candidate partition allows us to find a clause which is not satisfied under the assignment which corresponds to the candidate partition and vise versa. This is realized as follows. We take $2m+1$ cycles $U^1,\dots,U^m,V^1,\dots,V^m, Z$ of length approximately $3n$ where $2^m\leq 3n$ is the number of clauses of $\phi$. By choosing suitable edge weights, we enforce that any blocking coalition of any candidate partition contains $Z$ and either $U^k$ or $V^k$ (but not both) for every $k\in [m]$. For now, we call any set containing $Z$, $U^k$ or $V^k$ (but not both) for every $k\in [m]$ (and some other vertices we neglect here) a candidate blocking coalition. We number the clauses of $\phi$ in such a way that each candidate blocking coalition corresponds to a clause, i.e. the $j$-th clause corresponds to the candidate blocking coalition $X$ in which $U^k\in X$ if and only if the $k$-th bit of $j$ in binary is $0$. 

Each vertex in $Z$ corresponds to the appearance of a variable.  Assume $z\in Z$ corresponds to the appearance of variable $x_i$ in clause $c_j$.  We connected $z$ to either $y_i$ or $\lnot y_i$ dependent on whether $x_i$ appears negated  in $c_j$. We connect $z$  to either $U^k$ or $V^k$ for every $k\in [m]$ using a special gadget dependent on whether the $k$-th bit of $j$ in binary is $0$ or $1$. The gadget enforces that $z$ obtains a $+1$ towards its total utility in $X$  if and only if the candidate blocking coalition $X$ does not encodes the clause $c_j$. 
By  choice of edge weights, we ensure that vertex $z$ can only be convinced to join a blocking coalition if it either gets $+1$ from its clause selection gadget or  $+1$ from $y_i$ (or $\lnot y_i$, resp.). On the other hand,  $y_i$ ($\lnot y_i$, resp.) can only be convinced to join a blocking coalition if it appears as a singleton in the the partition we are trying to block and hence the corresponding literal is false. In conclusion, for any candidate partition $\mathcal{P}$ there is a blocking coalition $X$ if and only if for the clause corresponding to $X$ each literal is false. Hence, $(G,w)$ is core stable if and only if $\phi$ is satisfiable. 
\begin{toappendix}
\begin{proof}[Proof of \cref{thm:ETHlowerBoundCS}]
	We give a reduction from $(3,3)$-SAT which is the variant of $3$-SAT in which every variable is restricted to appear at most $3$ times. Let $\phi$ be an instance of $(3,3)$-SAT on variables $x_1,\dots,x_n$ (and hence $\phi$ contains at most $3n$ clauses). Without loss of generality, we can assume that no clause contains both $x_i$ and $\lnot x_i$ (because then $\phi$ is trivially satisfiable) and  that $x_i$ as well as $\lnot x_i$ appears in $\phi$ for every $i\in [n]$ (as otherwise we can set the literal that appears to be true). For each $i\in [n]$ we let $b_i\in \{2,3\}$ be the number of appearances of $x_i$ in $\phi$. We denote the clauses of $\phi$ by $c_1,\dots c_{2^m}$, $2^m\leq 3n$.  Here we assumed for easier notation that the number of clauses is a power of $2$.
	
	In the following we construct an instance $(G,w)$ of  \textsc{CS} of bounded degree $20$ and path-width in $\mathcal{O}(\log n)$. The graph $(G,w)$ can be constructed in polynomial time.
	We show that $(G,w)$ admits a core stable partition if and only if $\phi$ is satisfyable. Hence, if we could decide \textsc{CS} in time $2^{2^{o(\pw(G))}}$, then we can solve $(3,3)$-SAT in time $2^{o(n)}$ contradicting the ETH.
	
	We first describe the construction of $(G,w)$. For this, we set $\rho=-20-2m$. %For a more structured presentation of the construction we sometimes $\xi$-neighborhood attach a copy of $(H,w_\rho)$ at a set $S$ without fully having described the neighborhood of $S$. 
	
	\paragraph*{Variable gadget} For each $i \in [n]$ we  introduce vertices $y_i, \lnot y_i, z_{i,1},\dots,z_{i,b_i}$. For each $j\in [b_i]$ we add edge $y_iz_{i,j}$ of weight $1$ if the $j$st clause in which $x_i$ appears contains the literal $x_i$ and we add edge $\lnot y_iz_{i,j}$ of weight $1$ if the $j$st clause in which $x_i$ appears contains the literal $\lnot x_i$. 
	
	For every $i\in [n]$, $j\in [b_i-1]$ we add the edge $z_{i,j}z_{i,j+1}$ of weight $5$. For every $i\in [n-1]$ we add the edge $z_{i,b_i}z_{i+1,1}$ of weight $5$. Hence all  $z_{i,j}$'s are contained in a path in which every edge has weight $5$. Gor an illustration of this gadget see \cref{fig:sub3}.

	\paragraph*{Clause containment gadget}
	For each $i\in [n]$, $j\in [b_i]$, $k\in [0,m+1]$ we introduce a vertex  $s_{i,j}^k$. Additionally, for each $i\in [n]$, $j\in [b_i]$, $k\in [0,m]$ we add a vertex  $\overline{s}_{i,j}^k$. 
	For each $i\in [n]$, $j\in [b_i]$, $k\in [0,m]$ we add edge $s_{i,j}^k\overline{s}_{i,j}^{k}$ of weight 10, $\overline{s}_{i,j}^ks_{i,j}^{k+1}$ of weight $9$ and we add edge $s_{i,j}^{ m+1}z_{i,j}$ of weight $1$. 
	Furthermore, for every $i\in [n]$, $j\in [b_i]$, $k\in [0,m]$ we introduce  vertices  $t_{i,j}^k, u_{i,j}^k$, $v_{i,j}^k$
	and  add edge $s_{i,j}^kt_{i,j}^k$ of weight $-1$. % and $\{u_{i,j}^k,v_{i,j}^k\}$ of weight $\rho$.
	
	For every $i\in [n]$, $j\in [b_i-1]$ $k\in [0,m]$ we add edges $u_{i,j}^ku_{i,j+1}^k$ and $v_{i,j}^kv_{i,j+1}^k$ of weight $5$. For every $i\in [n-1]$, $k\in [0,m]$ we add edges $u_{i,b_i}^ku_{i+1,1}^k$ and $v_{i,b_i}^kv_{i+1,1}^k$ of weight $5$. Hence for every fixed $k$ all  $u_{i,j}^k$'s are contained in a path in which every edge has weight $5$ and equally for the $v_{i,j}^k$'s. 
	
	Fix $i\in [n]$ and $j\in [b_i]$ and assume $c_\ell$ is the clause containing the $j$th appearance of $x_i$. Let $\sum_{k=0}^{m}a_{i,j}^k 2^k$ where $a_{i,j}^k\in \{0,1\}$ be the binary representation of $\ell$. For every $k\in [0,m]$ we add edge $t_{i,j}^ku_{i,j}^k$ of weight $2$ if $a_{i,j}^k=0$ and  edge $t_{i,j}^kv_{i,j}^k$ of weight $2$ if $a_{i,j}^k=1$. For an illustration see \cref{fig:sub1}.
	\begin{figure}
		\centering
		\begin{subfigure}{.48\textwidth}
			\centering
			\includegraphics[scale=0.8]{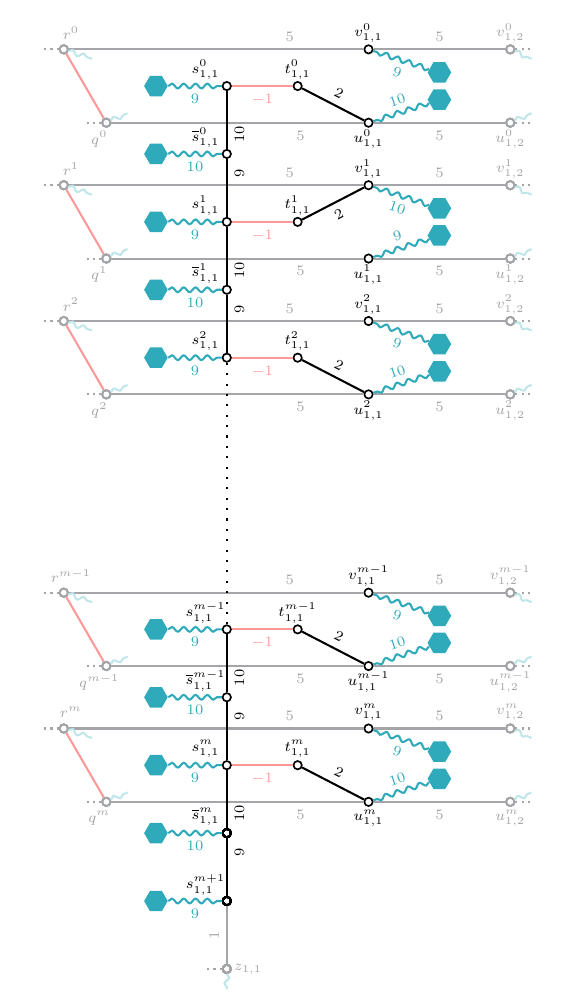}
			\caption{A clause containment gadget encoding that variable $x_1$ is contained in the second clause.}
			\label{fig:sub1}
		\end{subfigure}%
		\hfill
		\begin{subfigure}{.48\textwidth}
			\centering
			\includegraphics[scale=0.8]{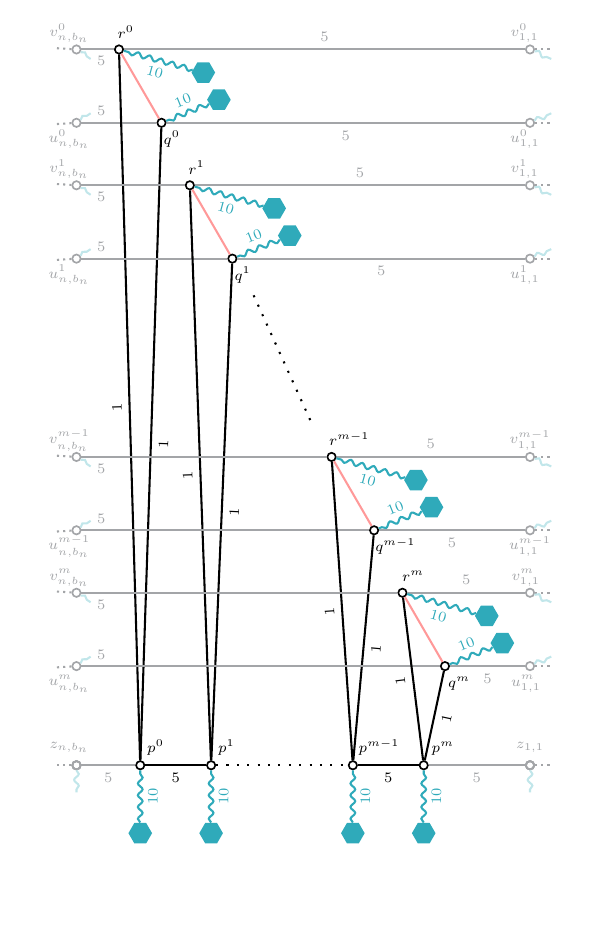}
			\caption{The clause selection gadget.}
			\label{fig:sub2}
		\end{subfigure}
		\begin{subfigure}{.9\textwidth}
			\centering
			\includegraphics[scale=0.8]{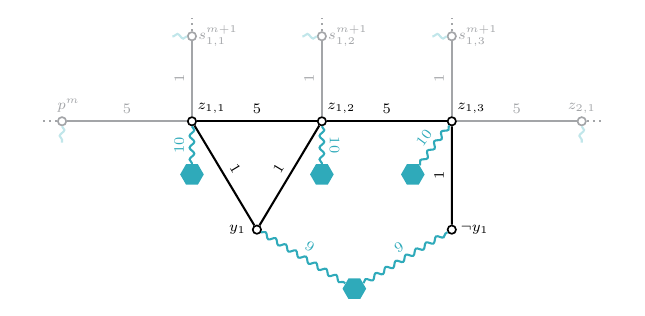}
			\caption{Example of a variable gadget of $x_1$ where $x_1$ appears twice in $\phi$ and $\lnot x_1$ appears once.}
			\label{fig:sub3}
		\end{subfigure}
		
		\caption{The gadgets from the construction in the proof of Theorem~\ref{thm:ETHlowerBoundCS}. Here auxiliary gadgets are depicted my blue hexagons, there attachment at set $S$ is signified by a squiggly edge and red edges without weight are the edges of weight $\rho$.}
		\label{fig:test}
	\end{figure}
	
	%\todo[inline]{ML: The compiler complains about multiply defined labels here. Indeed, fig:sub2 appears twice. I'd rather not touch this, because I don't know where we refer to this label. Noleen?}

	\paragraph*{Clause selection gadget}
	For every $k\in [0,m]$ we introduce three vertices $p^k, q^k, r^k$ and add edges $p^kq^k$ and $p^kr^k$ of weight $1$ and $q^kr^k$ of weight $\rho$.
	
	We add edges $z_{n,b_n}p^0$, $p^{m}z_{1,1}$ of weight $5$. Additionally, for every $k\in [0,m-1]$ we add edges $p^{k}p^{k+1}$ of weight $5$. Hence all $z_{i,j}$'s and all $p^k$'s are contained in a cycle. 
	
	For every $k\in [0,m]$ we add edges $u_{n,b_n}^kq^k$, $q^ku_{1,1}^k$ of weight $5$. Additionally, for every $k\in [0,m]$ we add edges $v_{n,b_n}^kr^k$, $r^kv_{1,1}^k$ of weight $5$. Hence for each $k\in [0,m]$ the $u_{i,j}^k$'s together with $q^k$ form a cycle and the $v_{i,j}^k$'s together with $r^k$ form a cycle. For an illustration see \cref{fig:sub2}.
	
	\paragraph*{Auxiliary gadgets} 
	For every $i\in [n]$ we take one copy $(H_i,w_\rho)$ of $(H,w_\rho)$ and let  $h^{i}$ be the vertex $h$ in the copy $H_i$. We $9$-neighbourhood attach $(H_i,w_\rho)$ at $\{y_i,\lnot y_i\}$.

	Furthermore,  for every vertex $u\in \{\overline{s}_{i,j}^k,u_{i,j}^k,v_{i,j}^k: i\in [n], j\in [b_i],k\in [0,m]\}$ we introduce one copy $(H_u,w_\rho)$ of $(H,w_\rho)$ and let  $h_u$ be the vertex $h$ in the copy $H_u$. We $10$-neighborhood attach $(H_u,w_\rho)$ at $\{u\}$ for every $u\in \{\overline{s}_{i,j}^k: i\in [n], j\in [b_i],k\in [0,m]\}$. Additionally, if $a_{i,j}^k=0$ we $10$-neighborhood attach $(H_u,w_\rho)$ at $\{u\}$ for every $u\in \{u_{i,j}^k: i\in [n], j\in [b_i],k\in [0,m]\}$ and we $9$-neighborhood attach $(H_u,w_\rho)$ at $\{u\}$ for every $u\in \{v_{i,j}^k: i\in [n], j\in [b_i],k\in [0,m]\}$. Otherwise, if $a_{i,j}^k=1$ we $9$-neighborhood attach $(H_u,w_\rho)$ at $\{u\}$ for every $u\in \{u_{i,j}^k: i\in [n], j\in [b_i],k\in [0,m]\}$ and we $10$-neighborhood attach $(H_u,w_\rho)$ at $\{u\}$ for every $u\in \{v_{i,j}^k: i\in [n], j\in [b_i],k\in [0,m]\}$.
	
	Additionally, for every vertex $u\in \{p^k,q_k,r_k,z_{i,j}: i\in [n], j\in [b_i],k\in [0,m]\}$ we introduce one copy $(H_u,w_\rho)$ of $(H,w_\rho)$ and let  $h_u$ be the vertex $h$ in the copy $H_u$. We $10$-neighborhood attach $(H_u,w_\rho)$ at $\{u\}$.
	
	Finally, for every vertex $u\in \{s_{i,j}^k: i\in [n], j\in [b_i],k\in [0, m+1]\}$ we introduce one copy $(H_u,w_\rho)$ of $(H,w_\rho)$ and let  $h_u$ be the vertex $h$ in the copy $H_u$. We $9$-neighborhood attach $(H_u,w_\rho)$ at $\{u\}$ for $u=s_{i,j}^k$. For an illustration of where the auxiliary gadgets are attached see \cref{fig:test}.\\
	
	This concludes the construction of graph $(G,w)$. First observe that the degree of $G$ is at most $20$. To see this, observe that in $G\Big[\{s_{i,j}^k: j\in [b_i],k\in [0,m+1]\}\cup \{p_k,q_k,r_k,\overline{s}_{i,j}^k,t_{i,j}^k,u_{i,j}^k,v_{i,j}^k,y_i,\lnot y_i,z_{i,j}:i\in [n],j\in [b_i],k\in [0,m]\}\Big]$ every vertex has degree at most $4$ except $y_i,\lnot y_i$ who have degree at most $2$. Attaching a copy of $(H,w_\rho)$ at a single vertex increases the degree of that vertex by $6$ (attaching $(H,w_\rho)$ at two vertices increases the degree by $7$) and the degree of all its neighbours by $1$, yielding a degree bound of $14$ for vertices in $\{s_{i,j}^k: j\in [b_i],k\in [0,m+1]\}\cup \{p_k,q_k,\overline{s}_{i,j}^k,r_k,t_{i,j}^k,u_{i,j}^k,v_{i,j}^k,y_i,\lnot y_i,z_{i,j}:i\in [n],j\in [b_i],k\in [0,m]\}$. On the other hand, the degree of vertices within any copy of $(H,w_\rho)$ which was attach is at most $6$ plus the number of neighbours of the vertex at which the copy of $(H,w_\rho)$ was attached. %vertices in $G\Big[\{s_{i,j}^k: j\in [b_i],k\in [0,m+1]\}\cup \{p_k,q_k,\overline{s}_{i,j}^k,r_k,t_{i,j}^k,u_{i,j}^k,v_{i,j}^k,y_i,\lnot y_i,z_{i,j}:i\in [n],j\in [b_i],k\in [0,m]\}\Big]$.
	Also observe, that $\rho<- \max_{u\in V(G)}\sum_{uv\in E(G),\atop{w(uv)>0}}w(uv)$ and hence we can use Lemma~\ref{lem:auxiliaryNeighborhoodGadget}.
	
	We first argue that $G$ has path-width in $\mathcal{O}(\log n)$. To this end, define 
	$$B_0=\{p^k,q^k,r^k:k\in [0,m]\}\cup \bigcup_{u\in \{p^k,q^k,r^k:k\in [0,m]\}}V(H_u) $$ 
	and for every $i\in [n]$  define 
	\begin{align*}
		B_i=&\{s_{i,j}^k: j\in [b_i],k\in [0,m+1]\}\cup \{\overline{s}_{i,j}^k,t_{i,j}^k,u_{i,j}^k,v_{i,j}^k, y_i,\lnot y_i, z_{i,j}: j\in [b_i],k\in [0,m]\}\\&\cup V(H_i)\cup \bigcup_{u\in \{s_{i,j}^k: j\in [b_i],k\in [0,m+1]\}  \atop{\cup\{\overline{s}_{i,j}^k, u_{i,j}^k,v_{i,j}^k, z_{i,j}: j\in [b_i],k\in [0,m]\}}}V(H_u).
	\end{align*}
	We observe that $B_0$ has size $21+21 m$ while $B_i$ has size at most $125+87 m$. By construction there is no edges between $B_i$ and $B_j$ if $|i-j|>1$. Hence we get a path-decomposition of $G$ of width at most $271+195 m$ by placing all vertices of $B_0,B_i,B_{i+1}$ in the bag of the $i$th vertex of the path. We now argue that $(G,w)$ is core stable if and only if $\phi$ is satisfiable.\\
	
	First assume that $(G,w)$ is core stable and let $\mathcal{P}$ be a core stable partition of $V(G)$. Without loss of generality we can assume that $\mathcal{P}$ is connected. 
	By Lemma~\ref{lem:auxiliaryNeighborhoodGadget} property~\ref{prop:gadgetP1Alternative}, we know that  $\{u,h_u\}\in\mathcal{P}$ for every $u\in \{s_{i,j}^k: j\in [b_i],k\in [0,m+1]\}\cup \{p^k,q^k,r^k,\overline{s}_{i,j}^k,u_{i,j}^k,v_{i,j}^k,z_{i,j}:i\in [n],j\in [b_i],k\in [0, m]\}$ and   either  $\{h^i,y_i\}\in \mathcal{P}$ or $\{h^i\lnot y_i\}\in \mathcal{P}$  for every $i\in [n]$. Additionally, by Lemma~\ref{lem:auxiliaryNeighborhoodGadget} property~\ref{prop:gadgetP2} we know that  
	$$P\subseteq \bigcup_{i\in [n]}V(H_i)\setminus\{h^i\}\cup \bigcup_{u\in \{s_{i,j}^k: j\in [b_i],k\in [0,m+1]\}\atop{\cup\{p^k,q^k,r^k,t_{i,j}^k,u_{i,j}^k,v_{i,j}^k,z_{i,j}:i\in [n],j\in [b_i],k\in [0,m]\}}}V(H_u)\setminus \{h_u\}$$ 
	if $P\in \mathcal{P}$ contains any vertex from this union of auxiliary gadgets. As  $G[\{t_{i,j}^k:i\in [n],j\in [b_i],k\in [0, m]\}]$ contains only singletons and we assumed that $\mathcal{P}$ is connected, we further know that $\{t_{i,j}^k\}\in \mathcal{P}$ for every $i\in [n]$, $j\in [b_i]$, $k\in [0,m]$.  
	
	We define an assignment $\alpha:\{x_1,\dots,x_n\}\rightarrow \{0,1\}$ in the following way. If $\{h^i,y_i\}\in \mathcal{P}$, then we set $\alpha(x_i)=1$ and if $\{h^i\lnot y_i\}\in \mathcal{P}$, then we set $\alpha(x_i)=0$. We claim that $\alpha$ is a satisfying assignment for $\phi$. Towards a contradiction assume that this is not the case and assume that $c_\ell$ is a clause which does not evaluate to $1$ under the assignment $\alpha$. Let $\ell=\sum_{k=0}^{m}a_k 2^k$ be the binary representation of $\ell$. Define a set $X$ as follows. 
	\begin{align*}
		X=&\Big\{p^k:k\in [0,m]\Big\} 
		\cup \Big\{q^k:k\in [0,m], a_k=0\Big\}\cup \Big\{r^k:k\in [0,m], a_k=1\Big\}\\
		\cup &\Big\{u_{i,j}^k:i\in [n],j\in [b_i],k\in [0,m],a_k=0\Big\}\\
		\cup &\Big\{v_{i,j}^k:i\in [n],j\in [b_i],k\in [0,m],a_k=1\Big\}\\
		\cup &\Big\{t_{i,j}^k:i\in [n],j\in [b_i], k\in [0,m], a_{i,j}^k=a_k\Big\}\\
		\cup & \Big\{s_{i,j}^k:i\in [n],j\in [b_i], k\in [0,m+1],k\geq \min\big(\{k'\in [0,m]:a_{i,j}^{k'}\not= a_{k'}\}\cup \{m+2\}\big)\Big\}\\
		\cup & \Big\{\overline{s}_{i,j}^k:i\in [n],j\in [b_i], k\in [0,m],k\geq \min\big(\{k'\in [0,m]:a_{i,j}^{k'}\not= a_{k'}\}\cup \{m+2\}\big)\Big\}\\
		\cup &\Big\{y_i:i\in [n], \{y_i,h^i\}\notin \mathcal{P}\Big\}\cup \Big\{\lnot y_i:i\in [n], \{\lnot y_i,h^i\}\notin \mathcal{P}\Big\}\cup \Big\{z_{i,j}:i\in [n],j\in [b_i]\Big\}.
	\end{align*}
	In the following we argue that $X$ is a blocking coalition of $\mathcal{P}$ which yields a contradiction. First observe that no two vertices in $X$ are connected by an edge of weight $\rho$. Furthermore, all edges in $G[X]$ have positive weight, apart from edges $s_{i,j}^kt_{i,j}^k$. Hence, it is sufficient to only consider positive weight edges and edges of the form $s_{i,j}^kt_{i,j}^k$.
	Let $i\in [n]$, $j\in [b_i]$, $k\in [0,m]$ be arbitrary but fixed. 
	
	First note that $p^k$ has utility $10$ in $\mathcal{P}$ as $\{p^k,h_{p^k}\}\in \mathcal{P}$. As both $p^{k-1}\in X$ ($z_{n,b_n}\in X$ if $k=0$) and $p^{k+1}\in X$ ($z^{1,1}\in X$ if $k=m$) and additionally either $q^k\in X$ or $r^k\in X$, we conclude that $p^k$ has improved its utility from $10$ to $11$ by joining $X$. 
	
	Next observe that both $q^k$ and $r^k$ have utility $10$ in $\mathcal{P}$. If $a_k=0$ then $q^k\in X$ ($r^k\notin X$) and so are $p^k$ as well as both $u_{n,b_n}^k$ and $u_{1,1}^k$. Hence, $q^k$ has utility $11>10$ in $X$. The same argument shows that $r^k$ has utility $11>10$ in $X$ in the case that $a_k=1$. 
	
	Furthermore,  $u_{i,j}^k$ has utility $10$ if $a_{i,j}^k=0$ and $9$ otherwise. Assume that $a_k=0$ and hence $u_{i,j}^k\in X$ (and $v_{i,j}^k\notin X$).  In the case that $a_{i,j}^k=0$, by construction $t_{i,j}^k\in X$ and $u_{i,j}^k t_{i,j}^k\in E(G)$ and hence $u_{i,j}^k$ has utility $11$ in $X$. On the other hand, if $a_{i,j}^k=1$, then $u_{i,j}^k$ has utility $10$ in $X$. In both cases $u_{i,j}^k$ improves its utility by joining $X$. An analogous argument shows that, if $a_k=1$ and hence $v_{i,j}^k\in X$, then $v_{i,j}^k$ improves its utility by joining $X$. 
	
	As $\{t_{i,j}^k\}\in \mathcal{P}$, $t_{i,j}^k$ has utility $0$ in $\mathcal{P}$. In case $a_{i,j}^k=a_k$ and $t_{i,j}^k\in X$ we need to consider two cases. If $a_{i,j}^k=a_k=0$ then $t_{i,j}^k$ is adjacent to $u_{i,j}^k$, $w(t_{i,j}^ku_{i,j}^k)=2$ and $u_{i,j}\in X$. Even if $t_{i,j}^k$'s other neighbour $s_{i,j}^k\in X$, $t_{i,j}^k$ still has utility $1$ in $X$. Equivalently, if $a_{i,j}^k=a_k=1$, then $t_{i,j}^k$ has utility at least $1$ in $X$. 
	
	Consider $\overline{s}_{i,j}^k$ and assume that $\overline{s}_{i,j}^k\in X$. By definition, this implies that $k\leq \min\big(\{k'\in [0,m]:a_{i,j}^{k'}\not= a_{k'}\}\cup \{m+2\}\big)$ and hence both $s_{i,j}^k\in X$ and $s_{i,j}^{k+1}\in X$. Hence, $\overline{s}_{i,j}^k$ has utility $19$ in $X$ while $\overline{s}_{i,j}^k$ has utility $10$ in $\mathcal{P}$.
	
	Next consider $s_{i,j}^{m+1}$ and assume that $s_{i,j}^{m+1}\in X$. Hence, $m+1\geq \min\big(\{k'\in [0,m]:a_{i,j}^{k'}\not= a_{k'}\}\cup \{m+2\}\big)$. As $m+1$ cannot be equal to $\min\big(\{k'\in [0,m]:a_{i,j}^{k'}\not= a_{k'}\}\cup \{m+2\}\big)$ we know that $\overline{s}_{i,j}^m\in X$. Hence the utility of $s_{i,j}^{m+1}$ in $X$ is $10$ (as $z_{i,j}$ is also in $X$) while the utility of $s_{i,j}^{m+1}$ is $9$ in $\mathcal{P}$. Next consider $s_{i,j}^k$, $k\leq m$
	and assume that $s_{i,j}^k\in X$ and hence
	$k\geq \min\{k'\in [0, m]:a_{i,j}^{k'}\not= a_{k'}\}$. First consider the case that $k= \min\{k'\in [0,m]:a_{i,j}^{k'}\not= a_{k'}\}$. As $a_{i,j}^k\not=a_k$, we know that $t_{i,j}^k\notin X$. As by construction $\overline{s}_{i,j}^{k}\in X$ and $w(s_{i,j}^k\overline{s}_{i,j}^{k})=10$ we know that $s_{i,j}^k$ has utility $10$ in $X$ while its utility is $9$ in $\mathcal{P}$. On the other hand, if $k> \min\{k'\in [0,m]:a_{i,j}^{k'}\not= a_{k'}\}$ then both $\overline{s}_{i,j}^{k-1}\in X$ and $\overline{s}_{i,j}^{k}\in X$ and hence $s_{i,j}^k$ has utility at least $18$ (even if $t_{i,j}^k\in X$). 
	
	Now consider $y_i$ and observe that if $y_i\in X$, then $y_i$ has utility $0$ in $\mathcal{P}$. As $y_i$ has at least one neighbor in the set $\{z_{i,j}:i\in [n],j\in [b_i]\}$ (by our assumption that both literals $x_i, \lnot x_i$ appear in $\phi$) and the corresponding edge has positive weight, $y_i$ improves its utility by joining $X$. Equivalently, for $\lnot y_i$. 
	
	Finally, consider the vertex $z_{i,j}$ and observe that its utility in $\mathcal{P}$ is $10$. First assume that $s_{i,j}^{m+1}\in X$. In this case $z_{i,j}$ has utility $11$ in $X$ as $\{p^1,p^m\} \cup \{z_{i,j}:i\in [n],j\in [b_i]\}\subseteq X$. On the other hand, if $s_{i,j}^{m+1}\notin X$ then $\min\big(\{k'\in [0,m]:a_{i,j}^{k'}\not= a_{k'}\}\cup \{m+2\}\big)=m+2$ by construction of $X$. Hence, $a_{i,j}^{k'}=a_{k'}$ for every $k'\in [0,m]$ which implies that $\sum_{k'=0}^m a_{i,j}^{k'}2^{k'}=\ell$. By construction this means that  $x_i$ appears in $c_\ell$. Assume that the literal $x_i$ is contained in $c_\ell$ (the case that the literal $\lnot x_i$ is contained in $c_\ell$ is analogous). As $x_i$ appears in $c_\ell$ and $c_\ell$ evaluates to $0$ under the assignment $\alpha$, we know that $\alpha(x_i)=0$. By definition of $\alpha$ this means that $\{h^i,\lnot y_i\}\in \mathcal{P}$ and hence $y_i\in X$. Therefore, $z_{i,j}$ has utility $11$ in $X$. As $z_{i,j}$ improves its utility in both cases we have argued that $X$ is a blocking coalition contradicting our assumption that $\mathcal{P}$ is core stable. Hence, $\alpha$ is a satisfying assignment for $\phi$. \\

	On the other hand, assume that $\phi$ is satisfiable and let $\alpha:\{x_1,\dots,x_n\}\rightarrow \{0,1\}$ be a satisfying assignment of $\phi$. We define a partition $\mathcal{P}$ of $V(G)$ as follows. For $i\in [n]$ we include sets $\{h^i,y_i\}$ as well as $\{\lnot y_i\}$ in $\mathcal{P}$ if $\alpha(x_i)=1$ and we include sets $\{h^i, \lnot y_i\}$ as well as $\{y_i\}$ in $\mathcal{P}$ if $\alpha(x_i)=0$. We further include the core stable partition of $H_i\setminus \{h^i\}$ defined in Section~\ref{sec:auxiliaryGadget} in $\mathcal{P}$.
	For each vertex $u\in \{s_{i,j}^k: j\in [b_i],k\in [0,m+1]\}\cup \{p_k,q_k,\overline{s}_{i,j}^k,r_k,u_{i,j}^k,v_{i,j}^k,z_{i,j}:i\in [n],j\in [b_i],k\in [0,m]\}$ we include the set $\{u,h_u\}$ in $\mathcal{P}$ and the core stable partition $\mathcal{P}_{(H_u,w_{\rho})}$ of $H_u\setminus \{h_u\}$ defined in Section~\ref{sec:auxiliaryGadget}. Finally, for every $u\in \{t_{i,j}^k:i\in [n],j\in [b_i],k\in [0, m]\}$ we include the singleton $\{u\}$ in $\mathcal{P}$. Towards a contradiction, assume that $\mathcal{P}$ is not a core stable partition and $X$ is a blocking coalition. In the following, we show that $X$ must have a particular structure, i.e. $X$ must be of the form as the blocking coalition in the previous direction of the proof. The property of the coalition $X$ being blocking can then be translated into an argument that $\alpha$ is not a satisfying assignment yielding the desired contradiction.
	
	We define $Z=\{p^k,z_{i,j}:i\in [n], j\in [b_i],k\in [0, m]\}\subseteq X$ and for every $k\in [0,m]$ we define $U^k=\{q^k,u_{i,j}^k:i\in [n],j\in [b_i]\}$, $V^k=\{r^k,v_{i,j}^k:i\in [n],j\in [b_i]\}$, $T^k_0=\{t_{i,j}^k:i\in [n], j\in [b_i],a_{i,j}^k=0\}$ and $T_1^k=\{t_{i,j}^k:i\in [n], j\in [b_i],a_{i,j}^k=1\}$. The following three claims describe precisely  which compositions of vertices are valid for $X$.
	\begin{claim}\label{claim:blockingCoalitionContainsCycles}
		For every $k\in [0,m]$ the following three statements hold.
		\begin{enumerate}[left=6pt , label=$(\roman*)$]
			\item If any $z\in Z$ is contained in $X$, then $Z\subseteq X$ and  either $U^k\cup T_0^k\subseteq X$ or $V^k\cup T_1^k\subseteq X$.
			\item If any $u\in U^k\cup T_0^k$ is contained in $X$,  then $U^k\cup T_0^k\subseteq X$ and $Z\subseteq X$.
			\item If any $v\in V^k\cup T_1^k$ is contained in $X$,  then $V^k\cup T_1^k\subseteq X$ and $Z\subseteq X$.
		\end{enumerate}
	\end{claim}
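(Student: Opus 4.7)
The strategy is to argue jointly, by computing for each core vertex the minimum $X$-utility required for strict improvement and by exploiting the $\rho$-edges to force the claimed structure. Before starting, I would apply property $(\Pi 3)$ to every attached copy of $(H,w_\rho)$ to conclude that $X$ is disjoint from all auxiliary gadget vertices. Hence, in computing the $X$-utility of any core vertex I only need to account for positive-weight edges to other core vertices, and each core vertex has a precisely known $\mathcal{P}$-utility: $10$ for $p^k$, $q^k$, $r^k$, $z_{i,j}$ (and for the $u_{i,j}^k$ or $v_{i,j}^k$ on the side selected by $a_{i,j}^k$), $9$ for the other side, and $0$ for the singletons $t_{i,j}^k$.

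For statement~$(ii)$, I would first analyse an arbitrary cycle vertex $u_{i,j}^k \in U^k \cap X$. Its positive-weight core-edges are the two $U^k$-cycle edges (weight $5$ each) and, when $a_{i,j}^k = 0$, the weight-$2$ edge to $t_{i,j}^k$; its $\mathcal{P}$-utility is $10$ when $a_{i,j}^k = 0$ and $9$ when $a_{i,j}^k = 1$. A direct arithmetic check shows that in either case strictly exceeding the $\mathcal{P}$-utility forces both $U^k$-cycle neighbors into $X$, and additionally when $a_{i,j}^k = 0$ it forces $t_{i,j}^k \in X$. The analogous check for $q^k$ (with $\mathcal{P}$-utility $10$ and positive core-edges consisting of two weight-$5$ edges to its $U^k$-neighbors plus the weight-$1$ edge to $p^k$) forces both of its $U^k$-cycle neighbors and $p^k$ into $X$. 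Iterating around the cycle $U^k$ yields $U^k \cup T_0^k \subseteq X$, and a $t \in T_0^k \cap X$ starting point reduces to this case since $t_{i,j}^k$'s only positive core-edge is its weight-$2$ edge to $u_{i,j}^k$. Once $p^k \in X$, the $Z$-cycle analysis (below) gives $Z \subseteq X$.

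Statement~$(iii)$ is proved symmetrically, exchanging $U^k \leftrightarrow V^k$ and $T_0^k \leftrightarrow T_1^k$. For statement~$(i)$, assume $z \in Z \cap X$. Each $z_{i,j}$ has $\mathcal{P}$-utility $10$ and only two weight-$5$ $Z$-cycle edges together with weight-$1$ edges (to $s_{i,j}^{m+1}$ and to one of $y_i, \lnot y_i$), so strict improvement forces both $Z$-cycle neighbors into $X$. For $p^k$ (also $\mathcal{P}$-utility $10$, with two weight-$5$ $Z$-cycle edges plus weight-$1$ edges to $q^k$ and $r^k$), the $\rho$-edge $q^k r^k$ precludes having both $q^k, r^k \in X$; consequently, strictly improving $p^k$ forces both of its $Z$-cycle neighbors into $X$ together with exactly one of $q^k, r^k$. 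Propagating around $Z$ gives $Z \subseteq X$, and for each $k$ the forced membership of $q^k$ or $r^k$ triggers $(ii)$ or $(iii)$, respectively. The exclusive disjunction between $U^k \cup T_0^k \subseteq X$ and $V^k \cup T_1^k \subseteq X$ is enforced by the same $\rho$-edge $q^k r^k$, since these two unions contain $q^k$ and $r^k$.

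The main obstacle is the case analysis: the weights in the construction were tuned so that at each vertex and each value of $a_{i,j}^k$, the strict-improvement condition forces \emph{both} cycle neighbors (not just one) into $X$, and the corresponding $t$-vertex exactly when $a_{i,j}^k$ selects the adjacent side. The work is mostly bookkeeping across the three cycles and the two values of $a_{i,j}^k$, but it is essential that the choices $w(\text{cycle edge}) = 5$, $w(\text{side edges}) \in \{1, 2\}$, and attachment weights in $\{9, 10\}$ make every inequality tight.
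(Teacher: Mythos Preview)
Your proposal is correct and follows essentially the same approach as the paper: apply property $(\Pi 3)$ to rule out gadget vertices, then use the tight arithmetic at each cycle vertex (utility $\ge 9$ in $\mathcal{P}$, only two weight-$5$ cycle edges plus at most one small-weight side edge) to force both cycle neighbors into $X$, propagate around each cycle, and handle the $T_0^k$, $T_1^k$ vertices via their unique positive edge. Your treatment is in fact slightly more explicit than the paper's in deriving $Z\subseteq X$ for part~(ii) (you spell out the step $q^k\in X \Rightarrow p^k\in X$, which the paper leaves implicit), but the underlying argument is identical.
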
 
	\begin{proof}
		First note that the vertices of $Z$ form a cycle $C_Z$ and for every $k\in [0,m]$ the vertices of $U^k$ form a cycle $C_U^k$ and the vertices of $V^k$ form a cycle $C_V^k$.  Since every $u\in \bigcup_{k\in [0,m]}U^k\cup \bigcup_{k\in [0,m]}V^k$ has utility at least $9$ in $\mathcal{P}$, if $u\in X$ then its utility has to be at least $10$ in $X$. Fix some $u\in \bigcup_{k\in [0,m]}U^k\cup \bigcup_{k\in [0,m]}V^k$ and assume $u\in X$. By Lemma~\ref{lem:auxiliaryNeighborhoodGadget} property \ref{prop:gadgetP3}, we know that $h_u\notin X$. The only remaining positive weight edges incident to $u$ are the two edges on the respective cycle $C_Z, C_U^k,C_V^k$ of weight $5$ and at most one additional edge of weight at most $2$. Hence $u$ can only have utility $10$ in $X$ if both neighbors of $u$ on the respective cycle $C_Z, C_U^k,C_V^k$ are contained in $X$. Inductively, this proves that if any $z\in Z$ is contained in $X$, then $Z\subseteq X$; if $u\in U^k$ is contained in $X$ for some $k\in [0,m]$, then $U^k\subseteq X$; and if $v\in V^k$ is contained in $X$ for some $k\in [0,m]$, then $V^k\subseteq X$.
		
		Towards showing (ii), we first observe the following. If $t_{i,j}^k\in X$ for some $i\in [n]$, $j\in [b_i]$, $k\in [0,m]$ and $a_{i,j}^k=0$, then $u_{i,j}^k$ is the only neighbor of $t_{i,j}^k$ with positive edge weight and hence $u_{i,j}^k\in X$. But if $u_{i,j}^k\in X$ then $U^k\subseteq X$.  We only have left to argue that if $U^k\subseteq X$ for some $k\in [0,m]$, then $T_0^k\subseteq X$. Hence, assume that $U^k\subseteq X$. By construction, the utility of  $u_{i,j}^k$ in $\mathcal{P}$ is $10$ for every $i\in [n]$, $j\in [b_i]$ for which $a_{i,j}=0$. Since $u_{i,j}^k$ gets utility $10$ by $U^k\subseteq X$ we know that $t_{i,j}^k\in X$ if $a_{i,j}^k=0$. Hence $T_0^k\subseteq X$. 
		We can prove (iii) with an analogous argument.
		
		Towards showing (i), observe that we only have left to argue that if $Z\subseteq X$, then either $U^k\cup T_0^k\subseteq X$ or $V^k\cup T_1^k\subseteq X$ for every $k\in [0,m]$. Observe that $p^k$ has utility $10$ in $\mathcal{P}$ for every $k\in [0,m]$. As $p^k$ gets utility at most $10$ by $Z\subseteq X$ we know that either $q^k\in X$ or $r^k\in X$. But then by our previous observation either $U^k\subseteq X$ or $V^k\subseteq X$. But then $U^k\cup T_0^k\subseteq X$ or $V^k\cup T_1^k\subseteq X$ by (ii) and (iii).
	\end{proof}
	\begin{claim}\label{claim:notAllInGadget}
		For every $i\in [n]$, $j\in [b_i]$, $k\in [0,m+1]$ the following two statements hold
		\begin{enumerate}[left=6pt , label=$(\roman*)$]
			\item If $k\not= m+1$,   $s_{i,j}^k\in X$ if and only if  $\overline{s}_{i,j}^k\in X$.
			\item If $s_{i,j}^k\in X$, then $s_{i,j}^{k'}\in X$ for every $k'>k$, $k'\leq m+1$ and $Z\subseteq X$.
			\item If $s_{i,j}^k\in X$, then either $k\leq m$ and $t_{i,j}^k\notin X$ or $k>0$ and $s_{i,j}^{k-1}\in X$.
		\end{enumerate}     
	\end{claim}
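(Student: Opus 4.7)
My plan is to exploit the fact that every $s$- and $\overline{s}$-vertex is made nearly tight in $\mathcal{P}$ by its attached auxiliary gadget, so that joining $X$ requires committing a specific chain of neighbours. I would begin by invoking Lemma~\ref{lem:auxiliaryNeighborhoodGadget} property~\ref{prop:gadgetP3} to eliminate all vertices of the attached copies of $(H,w_\rho)$ from $X$, and then list, for each vertex of interest, the relevant positive-weight neighbours outside those gadgets. Concretely, $s_{i,j}^k$ has utility exactly $9$ in $\mathcal{P}$, and its only positive-weight neighbours are $\overline{s}_{i,j}^k$ (weight $10$, for $k\le m$), $\overline{s}_{i,j}^{k-1}$ (weight $9$, for $k\ge 1$), and $z_{i,j}$ (weight $1$, only when $k=m+1$), while $t_{i,j}^k$ contributes $-1$; similarly, $\overline{s}_{i,j}^k$ has utility $10$ in $\mathcal{P}$ and only $s_{i,j}^k$ (weight $10$) and $s_{i,j}^{k+1}$ (weight $9$) as positive neighbours.

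With this setup, part (i) falls out by two direct utility counts. If $s_{i,j}^k\in X$ with $k\le m$ and $\overline{s}_{i,j}^k\notin X$, the positive contributions sum to at most $9$, which cannot strictly exceed the required threshold $9$; conversely, if $\overline{s}_{i,j}^k\in X$, then strictly exceeding $10$ forces both of its positive neighbours $s_{i,j}^k$ and $s_{i,j}^{k+1}$ to be present in $X$, yielding $s_{i,j}^k\in X$.

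Part (ii) would then be proved by chaining (i). Given $s_{i,j}^k\in X$ with $k\le m$, part (i) forces $\overline{s}_{i,j}^k\in X$, and the converse direction of (i) then forces $s_{i,j}^{k+1}\in X$; iterating, every $s_{i,j}^{k'}$ with $k\le k'\le m+1$ lies in $X$. For $s_{i,j}^{m+1}\in X$ the only positive neighbours are $\overline{s}_{i,j}^m$ and $z_{i,j}$ contributing $9$ and $1$, so both must lie in $X$ in order to exceed $9$; this places $z_{i,j}$ in $Z\cap X$, and Claim~\ref{claim:blockingCoalitionContainsCycles}~(i) then gives $Z\subseteq X$.

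For part (iii), I would split on $k$. If $k=m+1$, the analysis just used in (ii) already forces $\overline{s}_{i,j}^m\in X$, which by (i) gives $s_{i,j}^m=s_{i,j}^{k-1}\in X$. If instead $k\le m$ and $t_{i,j}^k\in X$, the $-1$ edge means the contribution from $\overline{s}_{i,j}^k$ alone is at most $9$, so strictly exceeding $9$ demands $\overline{s}_{i,j}^{k-1}\in X$ as well; this in particular requires $k\ge 1$, and by (i) forces $s_{i,j}^{k-1}\in X$. The main obstacle throughout is the careful bookkeeping of boundary cases ($k=0$, $k=m$, $k=m+1$) and making sure no positive-weight edge is overlooked; once the local neighbourhoods are enumerated as above, each utility inequality is essentially mechanical.
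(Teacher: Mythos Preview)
Your proposal is correct and follows essentially the same approach as the paper: invoke property~\ref{prop:gadgetP3} to exclude the gadget vertices, enumerate the remaining positive-weight neighbours of each $s_{i,j}^k$ and $\overline{s}_{i,j}^k$, and push through the forced inclusions by comparing utilities against the thresholds $9$ and $10$. The paper's proof performs the identical utility bookkeeping (with the same case split on $k=0$, $0<k\le m$, $k=m+1$ in part~(iii)); your write-up is, if anything, slightly cleaner in laying out the boundary cases up front.
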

	\begin{proof}
		To show (i), assume $s_{i,j}^k\in X$. Observe that $s_{i,j}^k\in X$ has utility $9$ in $\mathcal{P}$. Additionally note that $h_{s_{i,j}^k}\notin X$ by \cref{lem:auxiliaryNeighborhoodGadget} property \ref{prop:gadgetP3} and $s_{i,j}^k$ is incident to at most two additional edges of positive weight (in case $k=0$ it is only one edge) of which only one, the edge $s_{i,j}^k\overline{s}_{i,j}^k$, has weight larger than $9$. Therefore, $\overline{s}_{i,j}^k$ must be contained in $X$. On the other hand, if $\overline{s}_{i,j}^k\in X$, then $s_{i,j}^k\in X$. To see this note that $\overline{s}_{i,j}^k$ has utility $10$ in $\mathcal{P}$ and (excluding the edge $\overline{s}_{i,j}^kh_{\overline{s}_{i,j}^k}$) is incident to two edges, one of weight $9$ and one of weight $10$. As $h_{\overline{s}_{i,j}^k}\notin X$, this implies that both $s_{i,j}^k$ and $s_{i,j}^{k+1}$ have to be included in $X$.
		
		Towards showing (ii), assume that $s_{i,j}^k\in X$. First assume that $k\leq m$ and hence $\overline{s}_{i,j}^k\in X$ by (i).   As argued previously, this implies that $s_{i,j}^{k+1}\in X$. Inductively, this proves that $s_{i,j}^{k'}\in X$ for every $k'>k$, $k'\leq m+1$. %Observe that $s_{i,j}^k$ has utility $9$ in $\mathcal{P}$. Furthermore, $s_{i,j}^k$ is incident to only two edges of positive weight and one of them is of weight $9$ and therefore, the other neighbor of $s_{i,j}^k$ has to be in $X$.   
		Additionally,  $s_{i,j}^{m+1}\in X$ implies that $z_{i,j}\in X$. To see this, observe that the edges of positive weight incident to $s_{i,j}^{m+1}$ are $s_{i,j}^{m+1}h_{s_{i,j}^{m+1}}$, $s_{i,j}^{m+1}\overline{s}_{i,j}^{m}$ of weight $9$ and $s_{i,j}^{m+1}z_{i,j}$ of weight $1$. As $h_{s_{i,j}^{m+1}}\notin X$ by \ref{lem:auxiliaryNeighborhoodGadget} property \ref{prop:gadgetP3} we get $z_{i,j}\in X$. Using Claim~\ref{claim:blockingCoalitionContainsCycles}~(i), this proves (i).
		
		To argue (iii) holds, observe that if $s_{i,j}^0\in X$ and $t_{i,j}^0\in X$,  then the utility of $s_{i,j}^0$  in $X$ is at most $9$ as $w(s_{i,j}^0t_{i,j}^0)=-1$ and $w(s_{i,j}^0s_{i,j}^1)=10$. This contradicts that the utility of $s_{i,j}^0$ is $9$ in $\mathcal{P}$. Now assume $k>0$, $k\leq m$ and $s_{i,j}^k\in X$. 
		If both $t_{i,j}^k\in X$ and $s_{i,j}^{k-1}\notin X$, then the utility of $s_{i,j}^k$ in $X$ is at most $9$ as $s_{i,j}^{k-1}\notin X$ implies $\overline{s}_{i,j}^{k-1}\notin X$ by (i). As the utility of $s_{i,j}^k$ is $9$ in $\mathcal{P}$ this yields a contradiction. On the other hand, if $k=m+1$, $s_{i,j}^{m+1}\in X$ and $s_{i,j}^{m}\notin X$, then $s_{i,j}^{m+1}$ has utility $1$ in $X$ as $s_{i,j}^{m}\notin X$ implies $\overline{s}_{i,j}^{m}\notin X$ by (i). This yields a contradiction as $s_{i,j}^{m+1}$ has utility $10$ in $\mathcal{P}$.
		Therefore, we have shown that (ii) is true. 
	\end{proof}
	\begin{claim}\label{claim:satisfiedLiterals}
		For every $i\in [n]$, $j\in [b_i]$ the following three statements hold
		\begin{enumerate}[left=6pt , label=$(\roman*)$]
			\item If either $y_i\in X$ or $\lnot y_i\in X$, then $Z\subseteq X$.
			\item If $z_{i,j}\in X$, then either $s_{i,j}^{m+1}\in X$ or either $y_i\in X$ if the clause of $\phi$ containing the $j$th appearance of $x_i$ contains $x_i$ or $\lnot y_i\in X$ if the clause of $\phi$ containing the $j$th appearance of $x_i$ contains $\lnot x_i$.
			\item If $\alpha(x_i)=1$, then $y_i$ cannot be in $X$ and if $\alpha(x_i)=0$ then $\lnot y_i$ cannot be in $X$.
		\end{enumerate}     
	\end{claim}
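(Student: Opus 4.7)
The plan is to verify each of (i), (ii), (iii) via a direct utility computation at the relevant vertex, using the auxiliary-gadget properties from \cref{lem:auxiliaryNeighborhoodGadget} together with the structural statements already established in \cref{claim:blockingCoalitionContainsCycles}. In every case the key observation is that, since the companion vertices $h^i$ and $h_{z_{i,j}}$ of the attached copies of $(H,w_\rho)$ cannot belong to a blocking coalition by property~\ref{prop:gadgetP3}, the utility available to $y_i$, $\lnot y_i$, or $z_{i,j}$ in $X$ is severely limited and can be read off from their (few) remaining positive-weight neighbors.

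I would begin with part (iii), since it also underpins the following arguments. If $\alpha(x_i)=1$, then by construction $\{h^i,y_i\}\in \mathcal{P}$, so $\ut_{\mathcal{P}}(y_i) = w(y_ih^i) = 9$. By \cref{lem:auxiliaryNeighborhoodGadget}~\ref{prop:gadgetP3} no vertex of $V(H_i)$, and in particular $h^i$, can lie in $X$. The only other positive-weight edges at $y_i$ are the (at most three, and in fact at most $b_i\le 3$) weight-$1$ edges joining $y_i$ to the $z_{i,\cdot}$'s corresponding to positive appearances of $x_i$. Hence $\ut(X,y_i)\le 3 < 10 = \ut_{\mathcal{P}}(y_i)+1$, contradicting that $X$ is blocking at $y_i$. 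The symmetric bound rules out $\lnot y_i\in X$ when $\alpha(x_i)=0$.

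For part (i), assume $y_i\in X$ (the case $\lnot y_i\in X$ is symmetric). By (iii) we must have $\alpha(x_i)=0$, so $\{y_i\}\in\mathcal{P}$ and $\ut_{\mathcal{P}}(y_i)=0$. Since every edge at $y_i$ of positive weight other than $y_ih^i$ (which is excluded because $h^i\notin X$) goes to some $z_{i,j}$, the strict improvement $\ut(X,y_i)\ge 1$ forces at least one such $z_{i,j}$ to lie in $X$. Applying \cref{claim:blockingCoalitionContainsCycles}(i) to this $z_{i,j}\in Z$ then yields $Z\subseteq X$, as required.

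For part (ii), suppose $z_{i,j}\in X$. In $\mathcal{P}$ we have $\{z_{i,j},h_{z_{i,j}}\}\in \mathcal{P}$, hence $\ut_{\mathcal{P}}(z_{i,j})=10$, so $X$ must give $z_{i,j}$ utility at least $11$. Enumerating the positive-weight edges incident to $z_{i,j}$ we find: the edge $z_{i,j}h_{z_{i,j}}$ of weight $10$, the two cycle edges in $C_Z$ of weight $5$ each, the edge $z_{i,j}s_{i,j}^{m+1}$ of weight $1$, and the single weight-$1$ edge joining $z_{i,j}$ to the literal vertex associated with the $j$th appearance of $x_i$ (either $y_i$ or $\lnot y_i$, depending on the sign). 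Since $h_{z_{i,j}}\notin X$ by \cref{lem:auxiliaryNeighborhoodGadget}~\ref{prop:gadgetP3}, and since $z_{i,j}\in Z\cap X$ triggers $Z\subseteq X$ by \cref{claim:blockingCoalitionContainsCycles}(i) so that the two cycle edges contribute exactly $10$, the required additional unit of utility can only arise from $s_{i,j}^{m+1}\in X$ or from the appropriate literal vertex lying in $X$, which is precisely the conclusion. The main obstacle here is merely bookkeeping: carefully listing the (very few) positive-weight neighbors of each relevant vertex and invoking the gadget and cycle claims in the right order; there is no deep calculation to perform beyond comparing these small numbers to $\ut_{\mathcal{P}}$.
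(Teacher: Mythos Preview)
Your proof is correct and follows essentially the same approach as the paper: in each part you compare the utility of the relevant vertex in $\mathcal{P}$ to the maximum it could achieve in $X$ once $h^i$ (resp.\ $h_{z_{i,j}}$) is excluded via \cref{lem:auxiliaryNeighborhoodGadget}~\ref{prop:gadgetP3}, and then invoke \cref{claim:blockingCoalitionContainsCycles}(i). The only cosmetic difference is that you establish (iii) first and use it inside your argument for (i), whereas the paper proves (i) directly without appealing to (iii); both routes are valid and equally short.
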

	\begin{proof}
		To prove (i), recall that we assumed that $\phi$ contains both the literal $x_i$ as well as $\lnot x_i$ and hence both $y_i$ and $\lnot y_i$ must have a neighbor in $Z$ by construction. By Lemma~\ref{lem:auxiliaryNeighborhoodGadget} property~\ref{prop:gadgetP3}, we know that $h^i\notin X$. Since every other edge incident to either $y_i$ and $\lnot y_i$ of positive weight is also incident to some vertex in $Z$, we conclude that some $z\in Z$ must be contained in $X$. Hence, by Claim~\ref{claim:blockingCoalitionContainsCycles}~(i) $Z\subseteq X$.
		
		Towards showing (ii), first observe that $z_{i,j}$ has utility $10$ in $\mathcal{P}$. If $z_{i,j}\in X$, then $z_{i,j}$ needs to have utility at least $11$ in $X$. Assume that the clause of $\phi$ containing the $j$th appearance of $x_i$ contains $x_i$ (the case that the clause of $\phi$ containing the $j$th appearance of $x_i$ contains $\lnot x_i$ can be argued analogously). If both $s_{i,j}^{m+1}\notin X$ and $y_i\notin X$, then $z_{i,j}$ can have utility at most $10$ in $X$ which is not sufficient. Hence,  (ii) holds.

		Lastly, to argue (iii),  observe that if $\{y_i,h^i\}\in \mathcal{P}$, then $y_i$ has utility $9$ in $\mathcal{P}$ and can therefore not join $X$.  As $\{y_i,h^i\}\in \mathcal{P}$ if $\alpha(x_i)=1$, we obtain the first part of the statement. An analogous argument shows that $\lnot y_i$ cannot be in $X$ if $\alpha(x_i)=0$. 
	\end{proof}
	By Lemma~\ref{lem:auxiliaryNeighborhoodGadget} property~\ref{prop:gadgetP2} we know that $X\subseteq \{s_{i,j}^k: j\in [b_i],k\in [0,m+1]\}\cup \{p_k,q_k,r_k,\overline{s}_{i,j}^k,t_{i,j}^k,u_{i,j}^k,v_{i,j}^k,z_{i,j}:i\in [n],j\in [b_i],k\in [0,m]\}$. 
	Since $X$ cannot be empty, combining Claim~\ref{claim:blockingCoalitionContainsCycles}, Claim~\ref{claim:notAllInGadget}~(i) and Claim~\ref{claim:satisfiedLiterals}~(i) we obtain  that $Z\subseteq X$. Consequently, by Claim~\ref{claim:blockingCoalitionContainsCycles}~(i)  either $U^k\subseteq X$ or $V^k\subseteq X$ for every $k\in [0,m]$. Note that they cannot both be contained in $X$ as $w(q^kr^k)=\rho$. For every $k\in [0,m]$ we pick $a_k\in \{0,1\}$ as follows and define $\ell=\sum_{k=0}^{m}a_k 2^k$. We let $a_k$ be $0$ if $U^k\subseteq Z$ and we let $a_k$ be $1$ if $V^k\subseteq Z$. In the following, we argue that no literal contained in the clause $c_\ell$ is satisfied. To prove this we use the following claim. Recall that for fixed $i\in [n]$ and $j\in [b_i]$ and we defined $\sum_{k=0}^{m}a_{i,j}^k 2^k$ where $a_{i,j}^k\in \{0,1\}$ be the binary representation of $\ell$ where $c_\ell$ is the clause containing the $j$th appearance of $x_i$.
	\begin{claim}\label{claim:encodingClause}
		If  $\ell=\sum_{k=0}^{m}a_{i,j}^k 2^k$ for some $i\in [n]$, $j\in [b_i]$, then $s_{i,j}^{m+1}\notin X$.
	\end{claim}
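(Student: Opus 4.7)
The plan is to argue by contradiction, assuming $s_{i,j}^{m+1}\in X$, and to derive a descending chain $s_{i,j}^{m+1}, s_{i,j}^{m}, \dots, s_{i,j}^{0}\in X$ that ultimately contradicts \cref{claim:notAllInGadget}~(iii) at $k=0$. The crucial observation is that the hypothesis $\ell=\sum_{k=0}^{m}a_{i,j}^k 2^k$ means exactly $a_{i,j}^k = a_k$ for every $k\in[0,m]$. Combined with the definitions of $T_0^k, T_1^k$ and the earlier deduction (via \cref{claim:blockingCoalitionContainsCycles}) that $T_{a_k}^k\subseteq X$ for all $k\in[0,m]$, this yields $t_{i,j}^k\in T_{a_{i,j}^k}^k = T_{a_k}^k\subseteq X$ for every $k\in[0,m]$. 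In other words, the whole ``selector'' chain of $t$-vertices for the pair $(i,j)$ is forced into $X$.

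With this in hand, I would apply \cref{claim:notAllInGadget}~(iii) inductively from the top downwards. Starting from $s_{i,j}^{m+1}\in X$, the first disjunct of (iii) is unavailable since it requires $k\le m$, so the second disjunct forces $s_{i,j}^{m}\in X$. For each subsequent $k\in\{m, m-1,\dots, 1\}$, assuming $s_{i,j}^k\in X$, the first disjunct demands $t_{i,j}^k\notin X$, which contradicts the above; hence the second disjunct forces $s_{i,j}^{k-1}\in X$. Iterating, we arrive at $s_{i,j}^{0}\in X$. Now invoke (iii) one last time with $k=0$: the first disjunct needs $t_{i,j}^0\notin X$ (which fails), and the second disjunct needs $k>0$ (which also fails). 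This is the desired contradiction, so $s_{i,j}^{m+1}\notin X$.

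I do not expect any real obstacle here; the proof is essentially a bookkeeping exercise that composes two facts already established: the structural constraint \cref{claim:notAllInGadget}~(iii) on how $s$-vertices enter $X$, and the consequence of the encoding that $t_{i,j}^k\in X$ whenever $a_{i,j}^k$ matches $a_k$. The only mildly delicate point is making sure one invokes (iii) in the right direction, namely top-down from $s_{i,j}^{m+1}$, and that the base case $k=0$ is what produces the contradiction rather than an earlier step.
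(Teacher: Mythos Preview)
Your argument is correct and matches the paper's proof in all essential respects. The paper phrases the descent via a minimality argument (take the least $k$ with $s_{i,j}^k\in X$, conclude $t_{i,j}^k\notin X$ from \cref{claim:notAllInGadget}~(iii), and then derive $a_{i,j}^k\neq a_k$), whereas you first use $a_{i,j}^k=a_k$ together with \cref{claim:blockingCoalitionContainsCycles} to get $t_{i,j}^k\in X$ for all $k$ and then run the descent explicitly; these are contrapositive organizations of the same idea.
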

	\begin{proof}
		Towards a contradiction assume that $s_{i,j}^{m+1}\in X$. 
		Let $k\in [0,m+1]$ be the minimum index such that $s_{i,j}^k\in X$, i.e., either $k=0$ or $s_{i,j}^{k'}\notin X$ for every $k'<k$. Observe, that $k\leq m$ as if $s_{i,j}^{m+1}\in X$, then $s_{i,j}^m$ must also be in $X$ by \cref{claim:notAllInGadget}~(iii). By Claim~\ref{claim:notAllInGadget}~(iii), $k$ being the minimum index for which $s_{i,j}^k\in X$  means that $t_{i,j}^k\notin X$.
		Assume that $a_{i,j}^k=0$ (the case that $a_{i,j}^k=1$ works analogously) which implies that $t_{i,j}^k\in U^k$. Therefore, $U^k\not\subseteq X$ which implies that $a_k=1$ by definition. Since $a_{i,j}^k\not= a_k$ this contradicts that 
		$\ell=\sum_{k=0}^{m}a_{i,j}^k 2^k$. 
	\end{proof}
	Now assume that for some fixed $i\in [n]$, $j\in [b_i]$ the clause $c_\ell$ contains the $j$th appearance of variable $x_i$ and hence $\sum_{k=0}^{m}a_{i,j}^k 2^k=\ell$. We assume that $c_\ell$ contains the literal $x_i$ (the case that $c_\ell$ contains the literal $\lnot x_i$ works analogously). As $z_{i,j}\in X$ and $s_{i,j}^{m+1}\notin X$ by Claim~\ref{claim:encodingClause}, we get that $y_i$ must be contained in $X$ by Claim~\ref{claim:satisfiedLiterals}~(ii). Since $y_i\in X$ we get that $\alpha(x_i)=0$ and hence the literal $x_i$ contained in $c_\ell$ is not satisfied. Since no literal in $c_\ell$ can be satisfied, we obtain that $c_\ell$ is not satisfied contradicting the assumption that $\alpha$ was a satisfying assignment.

\end{proof}
\end{toappendix}

\section{$k$-Core Stability}

In this section we consider the complexity of finding and verifying $k$-core
stable partitions, when the size of the allowed blocking coalitions $k$ is a
parameter. Even though the two problems do become easier when $k$ is a fixed
constant (because we can check all possible blocking coalitions in polynomial
time), we show that it is likely that not much more can be gained from this
assumption: $k$-\textsc{CSV} is coW[1]-hard parameterized by $k$
(\cref{thm:csv:W-hard:k}), while $k$-\textsc{CS} is NP-complete even if $k\ge
3$ is a fixed constant (\cref{thm:3csf:3col}). On the positive side, we do show
that finding $2$-core stable partitions is in P, but it is worth noting that
the fact that we consider undirected graphs is crucial to obtain even this
small tractable case.

\begin{theorem}\label{thm:csv:W-hard:k} \ifthenelse{\boolean{short}}{\textup{($\star$)}}{}
	\textsc{$k$-Core Stability Verification}  is in XP when parameterized by $k$  whereas coW[1]-hard even on unweighted  graphs.
\end{theorem}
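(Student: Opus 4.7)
For the XP part, the plan is a direct enumeration argument. Given $(G,w,\mathcal{P})$, I first compute $\ut_{\mathcal{P}}(v)$ for every $v\in V(G)$ in one pass through the edges. Then I iterate over all subsets $X\subseteq V(G)$ with $|X|\le k$, of which there are $O(n^k)$, and for each candidate I compute $\ut(X,v)$ for every $v\in X$ in time $O(k^2)$ and check whether $\ut(X,v)>\ut_{\mathcal{P}}(v)$ for all such $v$. If some $X$ passes the test, $\mathcal{P}$ is not $k$-core stable; otherwise it is. This gives an $n^{O(k)}$-time algorithm.

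For the coW[1]-hardness, the plan is to present an fpt-reduction from $k$-\textsc{Clique} to the problem of deciding whether a given partition in an unweighted graph admits a blocking coalition of size at most $k$. Since $k$-\textsc{Clique} is W[1]-hard parameterized by $k$, this yields coW[1]-hardness of $k$-\textsc{CSV}. Given $(G,k)$ with $k\ge 3$, I will build an unweighted graph $G'$ by keeping all vertices and edges of $G$ and attaching to every $v\in V(G)$ a set of $k-2$ fresh ``dummy'' pendant vertices $v_1,\dots,v_{k-2}$ (connected by a single edge to $v$). The partition $\mathcal{P}$ to verify consists, for each $v\in V(G)$, of the coalition $\{v,v_1,\dots,v_{k-2}\}$. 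A quick count gives $\ut_{\mathcal{P}}(v)=k-2$ for every original vertex and $\ut_{\mathcal{P}}(v_i)=1$ for every dummy.

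The crucial observation, which is the main (and only mild) obstacle, is to rule out blocking coalitions that include dummy vertices: every dummy has a unique neighbor, so its utility in any coalition is at most $1$, and strict improvement over $\mathcal{P}$ is therefore impossible. Consequently every candidate blocking coalition $X$ with $|X|\le k$ is contained in $V(G)$, and each $v\in X$ must satisfy $|N_G(v)\cap X|\ge k-1$. Since $|X|\le k$ forces $|N_G(v)\cap X|\le |X|-1\le k-1$, equality must hold throughout, which means $|X|=k$ and $X$ is a clique of $G$. Conversely, any $k$-clique of $G$ is directly a blocking coalition of $\mathcal{P}$. The reduction is polynomial and preserves the parameter $k$, so it is a valid fpt-reduction, completing the argument.
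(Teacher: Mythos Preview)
Your proposal is correct and follows essentially the same approach as the paper: the XP argument is the identical brute-force enumeration, and the hardness reduction is the same pendant construction from $k$-\textsc{Clique} (attach $k-2$ leaves to each vertex, take the stars as the initial partition, and observe that dummies cannot join any blocking coalition so a blocking coalition of size at most $k$ must be a $k$-clique). Your write-up is in fact slightly more explicit about why $|X|=k$ and why every vertex of $X$ must be adjacent to all others, but the underlying idea is identical.
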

\begin{appendixproof}[Proof\ifthenelse{\boolean{short}}{of \cref{thm:csv:W-hard:k}}{}]
	The upper bound can be easily shown by brute force. That is, given a coalition structure $\PP$, for each coalition $X$ of size at most $k$, we check if each agent $v$ in $X$ has higher utility than in $\PP$. 
	The running time of brute force is $n^{O(k)}$.
	
	Then we show that \textsc{$k$-CSV} is W[1]-hard even on unweighted  graphs. We give a reduction from \textsc{$k$-Clique}.
	Given a graph $G$, we attach $k-2$ pendant vertices for each vertex in $V(G)$. Let $P_v$ be the set of pendant vertices for $v$.
	We set $\PP = \{P_v\cup \{v\}: v\in V(G)\}$ as a coalition structure to verify.
	
	In the following, we show that there exists a $k$-clique in $G$ if and only if there exists a blocking coalition for $\PP$.
	Let $C$ be a clique of size $k$ in $G$. For $C$, each vertex in $C$ has the utility $k-1$. Since $v\in V$ has the utility $k-2$ in $\PP$, $C$ is a blocking coalition for $\PP$.
	Conversely, let $X$ be a blocking coalition for $\PP$.
	Since vertices in $\bigcup_{v\in V(G)} P_v$ have maximum utility 1 in $\PP$, they do not join $X$.
	Thus, $X$ is a subset of $V(G)$.
	Since the utility of $v\in V(G)$ is $k-2$ in $\PP$ and $|X|=k$,  $X$ is a clique of size $k$.
\end{appendixproof}

\begin{theorem}\label{thm:csv:NP:k} \ifthenelse{\boolean{short}}{\textup{($\star$)}}{}
	Every graph admits a $2$-core stable partition and 
	\textsc{$2$-Core Stability}  can be solved in polynomial time.
\end{theorem}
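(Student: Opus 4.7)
The plan is to establish both claims at once by giving an explicit polynomial-time algorithm that always constructs a 2-core stable partition for any input $(G,w)$; this immediately settles the decision problem, since the answer is always YES. The algorithm I would use is simply a greedy matching: sort $E(G)$ in non-increasing order of weight, process the edges in this order, and accept an edge $uv$ as a coalition $\{u,v\}$ exactly when $w(uv)\ge 0$ and both $u$ and $v$ are currently unmatched; any vertex that is never matched becomes a singleton. The output partition $\PP$ consists of the matched pairs together with these singletons, and can clearly be produced in $O(m\log n)$ time.

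I would then verify that $\PP$ is 2-core stable. The singleton case is immediate: by construction every matched vertex earns a non-negative weight and singletons earn $0$, so $\ut_\PP(v)\ge 0$ for every $v$ and no coalition $\{v\}$ strictly improves over $\PP$. For size-two blocking coalitions, I would argue by contradiction. Suppose $\{u,v\}$ is blocking, so $w(uv) > \ut_\PP(u)\ge 0$ and $w(uv) > \ut_\PP(v)\ge 0$; in particular $w(uv)>0$, so the greedy did process the edge $uv$.

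The heart of the proof, and the only step that takes any care, is the monotonicity observation: once a vertex is matched during the greedy phase it stays matched, and at the moment $uv$ is processed any already-existing match of $u$ or $v$ must have weight at least $w(uv)$, simply because higher-weight edges were considered first. Concretely, if both $u$ and $v$ were unmatched when $uv$ was processed, the algorithm would have added $\{u,v\}$, forcing $\ut_\PP(u)=w(uv)$ and contradicting strictness. Otherwise at least one of them, say $u$, was already matched to some $u''$ with $w(uu'')\ge w(uv)$, and this match survives to $\PP$, giving $\ut_\PP(u)\ge w(uv)$, again contradicting the blocking condition. No size-two blocking coalition can therefore exist, which completes both the existence argument and the polynomial-time algorithm.
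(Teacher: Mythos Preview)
Your proposal is correct and follows essentially the same approach as the paper: both construct a greedy maximum-weight matching by processing edges in non-increasing weight order and pairing up still-unmatched endpoints, then argue that any hypothetical blocking pair $\{u,v\}$ would have been handled when its edge was reached, with one endpoint already matched to something at least as good. Your write-up is in fact slightly more careful than the paper's (you explicitly dispose of singleton blocking coalitions and spell out why $w(uv)>0$), but the underlying idea is identical.
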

\begin{appendixproof}[Proof\ifthenelse{\boolean{short}}{ of \cref{thm:csv:NP:k}}{}]
	Given
	a weighted graph $(G,w)$, start with the partition where every vertex is a singleton.
	Order the positive-weight edges in non-increasing order
	$e_1,e_2,\ldots,e_m$. For each $e_i$, do the following: if the endpoint of
	$e_i$ are currently singletons, merge them into a cluster of size 2;
	otherwise move to the next edge. The resulting partition $\mathcal{P}$ is $2$-core stable because if
	there was a blocking coalition of size 2, it would have to induce an edge
	$e_i=uv$. However,  when  $e_i$ is considered, at least one of $u,v$ was
	not a singleton. Therefore, the utility of that vertex  must  be  larger in $\mathcal{P}$ than in the
	coalition $\{u,v\}$ contradicting the assumption that $\{u,v\}$ is a blocking coalition. 
\end{appendixproof}

\begin{theorem}\label{thm:3csf:3col} \textup{($\star$)}
	For any fixed $k\ge 3$, \textsc{$k$-Core Stability}  is NP-complete on bounded degree graphs even if the weights are constant.
\end{theorem}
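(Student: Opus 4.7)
Membership in NP is immediate: given a partition $\mathcal{P}$ as the certificate, iterate over all $O(n^k)$ subsets $X\subseteq V(G)$ of size at most $k$ and verify in polynomial time that some $u\in X$ has $\ut(X,u)\le\ut_\mathcal{P}(u)$. Since $k$ is a fixed constant, this takes polynomial time.

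For NP-hardness, my plan is to reduce from $3$-\textsc{SAT} with a bounded number of occurrences per variable (which is NP-hard). Given such a formula $\phi$, I construct an ASHG $(G,w)$ of bounded degree with constant integer weights such that $\phi$ is satisfiable if and only if $(G,w)$ admits a $k$-core stable partition. For each variable $x_i$ I build a \emph{variable gadget} consisting of an anchor $a_i$, two literal ports $T_i, F_i$, and two private partners $b_i^T, b_i^F$, with $w(a_i T_i)=w(a_i F_i)=M$ (a small constant), $w(T_i F_i)=-L$ (much larger in absolute value), and $w(T_i b_i^T)=w(F_i b_i^F)=1$. A local analysis then shows that the only locally stable configurations correspond to the two assignments $x_i=\mathrm{true}$ (realised by $\{a_i,T_i\}, \{F_i,b_i^F\}, \{b_i^T\}$) and $x_i=\mathrm{false}$ (symmetric), with the ``selected'' literal having utility $M$ and the ``unselected'' literal utility $+1$. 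For each clause $c_j=\ell_j^1\lor\ell_j^2\lor\ell_j^3$, I add triangle edges of weight $+1$ among the three corresponding literal ports (the same port vertex being shared across clauses in which the literal occurs).

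The key observation is that the coalition $X=\{\ell_j^1,\ell_j^2,\ell_j^3\}$ gives every member a utility of $2$, so it is blocking exactly when every $\ell_j^i$ has current utility strictly less than $2$; by the gadget construction this happens precisely when all three literals are unselected, i.e., when the assignment falsifies $c_j$. Meanwhile any size-$2$ coalition formed by a pair of literals offers each of them a utility of only $1$, which is not a strict improvement over the unselected utility of $1$ and is far below the selected utility of $M$, so no size-$2$ blocking coalition arises. Hence, for $k=3$, $(G,w)$ admits a $3$-core stable partition iff $\phi$ is satisfiable.

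The hard part will be the extension to $k>3$: one must verify that no spurious blocking coalitions of sizes $4,\ldots,k$ emerge. I would handle this by choosing $M$ strictly larger than the total contribution a ``selected'' literal could receive from its private partner plus all its triangle neighbours within any coalition of size at most $k$; since the number of occurrences per variable is bounded, this threshold is a constant depending only on $k$ and the occurrence bound. With this choice, a selected literal can never strictly improve its utility outside its variable gadget, and the bounded-degree constraint ensures that anchors and private partners are likewise immune to non-trivial blocking coalitions, so only the clause triangles (and the size-$\le k$ coalitions containing one as a sub-coalition) can be blocking, and these trigger exactly when the associated clause is unsatisfied. Bounded degree of $G$ is preserved because each variable appears in a bounded number of clauses and every gadget involves only a constant number of vertices per variable and per clause, and all edge weights lie in the constant set $\{+1,-L,+M\}$.
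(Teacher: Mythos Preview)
Your NP-membership argument is fine, but the hardness reduction has a genuine gap in the forward direction (``satisfiable $\Rightarrow$ $k$-core stable''). Because you share a single port vertex among all clauses containing a given literal, the clause triangles can combine to create \emph{cross-clause} triangles in the literal graph that are blocking even when every clause is satisfied. Concretely, take
\[
c_1=(a\lor b\lor p),\qquad c_2=(b\lor c\lor q),\qquad c_3=(c\lor a\lor r),
\]
and the satisfying assignment that makes $p,q,r$ true and $a,b,c$ false. In your intended partition $a,b,c$ are all ``unselected'' with utility $1$. But the edges $ab$ (from $c_1$), $bc$ (from $c_2$), $ca$ (from $c_3$) form a triangle on $\{a,b,c\}$, so in the coalition $X=\{a,b,c\}$ each member has utility $2>1$. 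Thus $X$ is a blocking coalition of size $3$, and your instance has no $3$-core stable partition even though $\phi$ is satisfiable. Raising $M$ does not help: the issue is entirely among unselected literals, whose threshold is $1$, not $M$. The fix would require separating occurrences (one port per occurrence plus a consistency mechanism), but then the ``utility $2$ vs.\ $1$'' arithmetic you rely on for the clause triangle no longer goes through cleanly, and you also have to re-do the argument that every locally stable configuration encodes an assignment.

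For comparison, the paper sidesteps exactly this pitfall by reducing from bounded-degree $3$-\textsc{Coloring} rather than SAT. Each vertex $x$ of the coloring instance gets its own nine-vertex gadget whose internal weights force any $k$-core stable partition to leave precisely one of three designated vertices $u_1^x,u_2^x,u_3^x$ with utility at most~$4$ (this choice is the ``color''). For each edge $e=xy$, three fresh vertices $v_1^e,v_2^e,v_3^e$ are introduced, each pinned to utility~$9$ by an attached auxiliary gadget. The only candidate blocking coalitions of size at most $k$ that survive a case analysis are the size-$3$ sets $\{u_j^x,u_j^y,v_j^e\}$, and such a set is blocking exactly when both endpoints of $e$ received color~$j$. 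Because the edge gadget uses fresh vertices per edge (rather than shared literal ports), no analogue of your cross-clause triangle can arise.
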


\begin{appendixproof}
	\begin{figure}
		\centering
		\centerline{\includegraphics[scale=0.9]{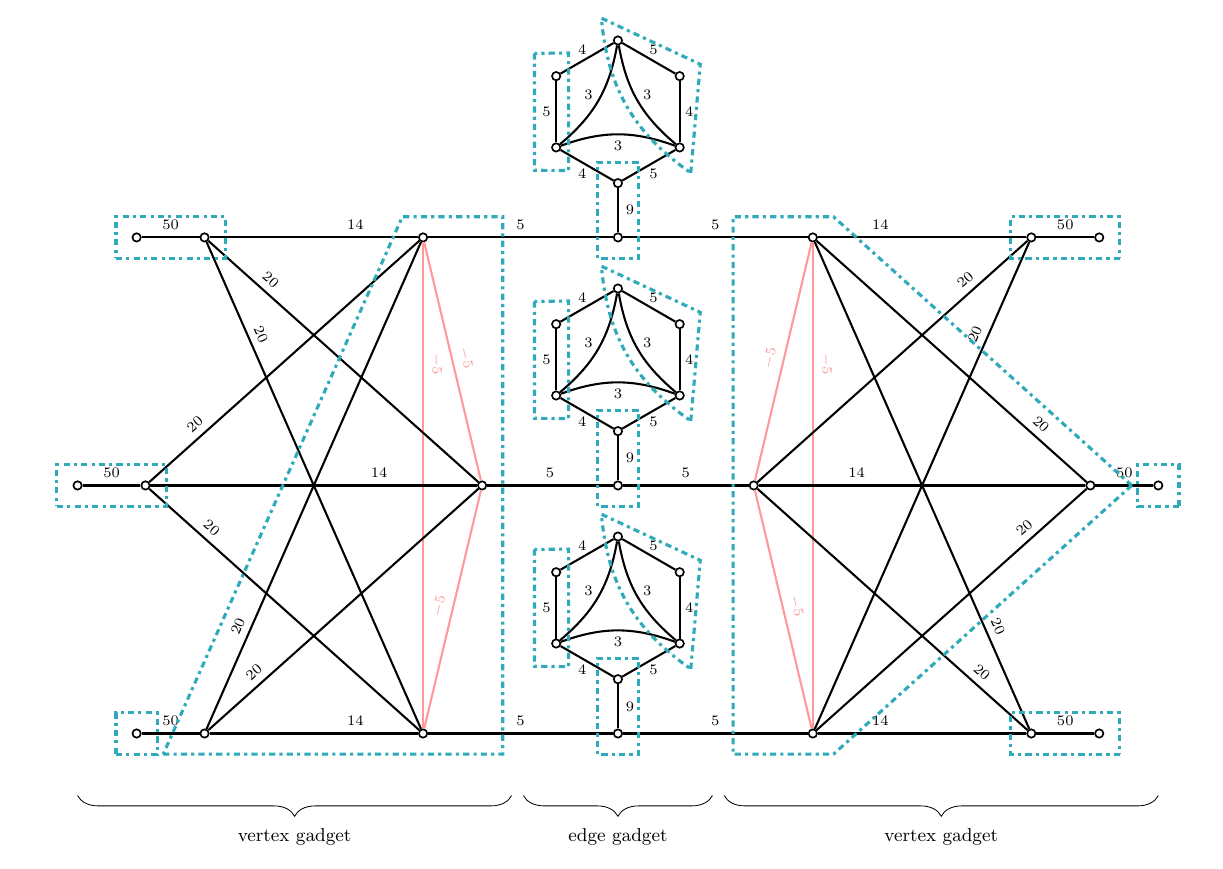}}
		\caption{The graph $\widehat{G}$ constructed in the proof of Theorem~\ref{thm:3csf:3col} for an instance of $3$-colouring $G$ consisting of a single edge. In the illustration edges of weight $\rho$ are omitted. The core stable partition indicated by dot-dashed boxes corresponds to coloring the two vertices of $G$ with two different colors.}
		\label{fig:3csf:3col}
	\end{figure}
	First observe that  $k$-\textsc{CS} is in NP by Theorem~\ref{thm:csv:W-hard:k}.
	
	We give a reduction from bounded degree 3-\textsc{Coloring}. Given a sub-cubic graph $G$ we construct an instance $(\widehat{G},\widehat{w})$ of $3$-\textsc{CS} as follows.  We let $\rho=-104$.
	\paragraph*{Vertex gadget}
	For every vertex $x\in V(G)$ we introduce vertices $u_1^x,\dots,u_9^x$ and the following edges. First we add edges $u_1^xu_2^x,u_2^xu_3^x,u_3^xu_1^x$ of weight $-5$, $u_i^xu_{i+3}^x$ of weight $14$ for every $i\in \{1,2,3\}$, $u_4^xu_8^x$, $u_4^xu_9^x$, $u_5^xu_7^x$, $u_5^xu_9^x$, $u_6^xu_7^x$, $u_6^xu_8^x$ of weight $20$ and finally $u_iu_{i+3}$ of weight $50$ for every $i\in \{4,5,6\}$. For every pair $i\not=j\in [9]$ for which there is no edge yet, we add an edge $u_i^xu_j^x$ of weight $\rho$. 
	\paragraph*{Edge gadget}
	%We further let $\{P_i^e,Q_i^e\}$ be the core stable partition of $H_i^e\setminus \{h_i^e\}$ defined in the previous paragraph. 
	For every edge $e=xy\in E(G)$ we introduce  vertices $v_1^e,v_2^e,v_3^e$ and add edges $v_1^ev_2^e$, $v_1^ev_3^e$ and $v_2^ev_3^e$ of weight $\rho$.   We add  edges  $v_i^eu_i^x$, $v_i^eu_i^y$ of weight $5$ for every $i\in [3]$. 
	Furthermore, for every $e\in E(G)$ we add $3$ copies $(H_1^e,w_1^e),(H_2^e,w_2^e),(H_3^e,w_3^e)$ of the graph $(H,w_\rho)$ and let $h_i^e$ be the vertex $h$ in the copy $H_i^e$ for every $i\in [3]$. We $9$-neighborhood attach $(H_i^e,w_i^e)$ at $\{v_i^e\}$. \\

	Note that since $G$ is sub-cubic the resulting graph has degree $14$. This concludes the construction of $(\widehat{G},\widehat{w})$. For an illustration of the construction see Figure~\ref{fig:3csf:3col}. We now argue that $G$ is $3$-colorable if and only if $(\widehat{G},\widehat{w})$ admits a $k$-core stable partition.\\

	First assume that $G$ is $3$-colorable and let $c:V(G)\rightarrow [3]$ be a proper $3$-coloring of $G$. Recall that $\mathcal{P}_{(H_i^e,w_i^e)}$ is a core stable partition of $(H_i^e\setminus \{h_i^e\},w_i^e|_{V(H_i^e\setminus\{h_i^e\})})$ defined in Section~\ref{sec:auxiliaryGadget}. Consider the following partition $\mathcal{P}$ of $V(\widehat{G})$. For every vertex $x\in V(G)$ the partition $\mathcal{P}$ contains sets $\{u_{6+c(x)}^x\}$, $\{u_{3+c(x)}^x, u_1^x,u_2^x,u_3^x\}$ and $\{u_{3+i}^x,u_{6+i}^x\}$ for $i\in [3]$, $i\not=c(x)$. For every edge $e\in E(G)$  the partition $\mathcal{P}$ contains set $\{h_i^e,v_i^e\}$ and $\mathcal{P}$ contains $\mathcal{P}_{(H_i^e,w_i^e)}$ for every $i\in [3]$. For an illustration of the partition $\mathcal{P}$ see Figure~\ref{fig:3csf:3col}.
	
	We now argue that $\mathcal{P}$ is $k$-core stable. Note that it is sufficient to argue that there is no blocking coalition $X$ such that $|X|\leq k$ and $\widehat{G}(X)$ is connected. Since every vertex $u\in V(\widehat{G})$ has positive utility in $\mathcal{P}$ we can further assume that $|X|\geq 2$. We proceed by contradiction and assume that $X$ is  a blocking coalition such that $2\leq |X|\leq k$. Without loss of generality we can assume that $\widehat{G}[X]$ is connected. %First observe that by choice of edges of weight $\rho$ we can  assume that $X$ induces a connected graph in $\widehat{G}^+:=(V(\widehat{G}), E^+, \widehat{w}|_{E^+} )$ where $E^+:=\{e\in E(\widehat{G}):\widehat{w}(e)\geq 0\}$. 
	
	We now argue that for every vertex $u\in V(\widehat{G})$ that $u\notin X$. We first argue that for every $x\in V(G)$ the set $\{u_1^x,u_2^x,u_3^x\}\not\subset X$. For this note that  two vertices from each set $\{u_1^x,u_2^x,u_3^x\}$ have utility 10 in $\mathcal{P}$. Since by construction only at most one vertex from $\{u_4^x,u_5^x,u_6^x\}$ can be in $X$, the utility of every vertex in $\{u_1^x,u_2^x,u_3^x\}$ can be at most 15. Additionally, by construction only one vertex from $\{v_1^e,v_2^e,v_3^e\}$ for every edge $e\in E(G)$ can be contained in $X$. This implies that the utility of at most one vertex in $\{u_1^x,u_2^x,u_3^x\}$  can be 15 in $X$ and the utility of the other two is at most 10. Hence, only at most one vertex of the two vertices in $\{u_1^x,u_2^x,u_3^x\}$ of utility 10 in $\mathcal{P}$ can improve its utility by joining $X$ and therefore $\{u_1^x,u_2^x,u_3^x\}\not\subset X$. Now, observe that for every $x\in V(G)$ the vertex $u_{6+i}^x$ for $i\in [3]$, $i\not=c(x)$ cannot be contained in $X$ as it has maximum utility in $\mathcal{P}$. 
	This implies that  vertex  $u=u_{3+i}^x$ for $x\in V(G)$, $i\in [3]$, $i\not=c(x)$ cannot be in $X$ as it has utility $50$ in $\mathcal{P}$ and $u$ could have utility at most 40 in $X$ as $\{u_1^x,u_2^x,u_3^x\}\not\subset X$ and  $u_{6+i}^x\notin X$. Furthermore, if $u=u_{3+c(x)}^x\in X$ for some $x\in V(G)$, then $u_{6+c(x)}^x\in X$ as this is the only neighbor of $u$ with an edge to $u$ of positive weight, which is not contained in the same part of $\mathcal{P}$ as $u$. But then either $u_{1}^x\in X$, $u_{2}^x\in X$ or $u_{3}^x\in X$ to obtain a utility of $u_{3+c(x)}^x$ larger than $54$. But this is a contradiction as $u_{6+c(x)}^x u_i^x$ for $i\in [3]$ has weight $\rho$. Hence $u_{3+c(x)}^x\notin X$ for every $x\in V(G)$. This implies that also $u_{6+c(x)}^x\notin X$. Next observe that
	for every $x\in V(G)$ the vertices $u_{i}^x$ for $i\in [3]$, $i\not=c(x)$ have utility $10$ in $\mathcal{P}$ and only one positive weighted edge of weight $5$ which is not incident to any vertex which still potentially could be contained in $X$. Hence, $u_{i}^x\notin X$ for any $x\in V(G)$, $i\in [3]$, $i\not=c(x)$. Additionally, %observe that for every edge $e\in E(G)$ and $i\in [3]$ the set $B$ cannot be a subset of  $V(H_i^e)\setminus\{h_i^e\}$ since $\{P_i^e,Q_i^x\}$ is core stable. Furthermore, if $B$ contains any vertex $h$ from $V(H_i^e)\setminus \{h_i^e\}$ it has to also contain $h_i^e$ because of the assumption that $\widehat{G}^+[B]$ is connected. But then $B$ cannot contain $w_i^e$. Therefore, $B\subseteq V(H_i^e)$. But $h_i^e$ cannot form a coalition with only one vertex from $V(H_i^e)\setminus\{h_i^e\}$ because its utility would not improve. Furthermore, out of the two positive weight neighbors of $h_i^v$ the one contained in $Q_i^d$ cannot improve its utility by forming a coalition with $h_i^e$ and the other neighbor. 
	by Lemma~\ref{lem:auxiliaryNeighborhoodGadget} property~\ref{prop:gadgetP3} $X$ cannot contain any vertex from $V(H_i^e)$ for every $e\in E(G)$, $i\in [3]$.   
	
	In conclusion, we have argued that $X\subseteq \{v_i^e: e\in E(G),i\in [3]\}\cup \{u_{c(x)}^x:x\in V(G)\}$. Observe that by our assumption that $X$ is connected  and because $|X|\geq 2$ we get that there must be $e=xy\in E(G)$, $i\in [3]$ such that $v_i^e\in X$. Because $v_i^e$ has utility $9$ in $\mathcal{P}$ this implies that $u_{c(x)}^x\in X$ and $u_{c(y)}^y\in X$. But then by construction of $\widehat{G}$ we have that $c(x)=c(y)$ which is a contradiction to $c$ being a proper coloring. Hence, $(\widehat{G},\widehat{w})$ is a YES-instance of $k$-\textsc{CS}.
	\\

	Now assume that $\widehat{G}$ is $k$-core stable and let $\mathcal{P}$ be a $k$-core stable partition of $\widehat{G}$. Without loss of generality we assume that $\widehat{G}[P]$ is connected for every set $P\in \mathcal{P}$. The following  claim will be the key step for defining a proper coloring of $G$.
	\begin{claim}\label{claim:colorVertexGadget}
		For every vertex $x\in V(G)$ there is $j_x\in [3]$ such that  vertex $u_{j_x}^x$ has utility at most $4$ in $\mathcal{P}$. %while vertices $u_i^x$ for $i\in [3]$, $i\not=j_x$ have utility $10$ in $\mathcal{P}$.
	\end{claim}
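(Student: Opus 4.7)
The plan is to suppose for contradiction that $\mathrm{ut}_{\mathcal{P}}(u_i^x) \geq 5$ for all $i \in [3]$ and to exhibit a blocking coalition of size $2$, contradicting the $k$-core stability of $\mathcal{P}$ (since $k \geq 3 \geq 2$). The key intuition is that, once we pin down the coalitions of the auxiliary gadgets, the only way for $u_i^x$ to achieve utility $\geq 5$ is to be paired with $u_{i+3}^x$; this in turn forces $u_{i+3}^x$ to have low utility, making $\{u_4^x, u_7^x\}$ (of weight $50$) a blocking coalition.

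The first and most delicate step will be to establish that, for every edge $e \in E(G)$ and every $i \in [3]$, the pair $\{h_i^e, v_i^e\}$ is a coalition of $\mathcal{P}$, so that $v_i^e \notin \mathcal{P}(u_i^x)$. This is the $k$-core-stability analogue of Lemma~\ref{lem:auxiliaryNeighborhoodGadget}~\ref{prop:gadgetP1Alternative}, and it relies on the fact that $(H, w_\rho)$ admits no $k$-core stable partition when $k \geq 3$; one checks by a finite case analysis that every partition of the six vertices of $H$ has a blocking coalition of size at most $3$. The hard part here, and what I expect to be the main obstacle of the whole argument, is adapting the proof of Lemma~\ref{lem:auxiliaryNeighborhoodGadget} to carefully verify that all blocking coalitions invoked there have size at most $k$.

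With this gadget forcing in hand, the only positive-weight neighbor of $u_i^x$ that can lie in $P_i^x := \mathcal{P}(u_i^x)$ is $u_{i+3}^x$ (weight $14$), since all the $v_i^e$ are now excluded. Hence the assumption $\mathrm{ut}_{\mathcal{P}}(u_i^x) \geq 5$ will force $u_{i+3}^x \in P_i^x$. Moreover no $u_j^x$ with $j \in [3] \setminus \{i\}$ can belong to $P_i^x$, because the edge $u_{i+3}^x u_j^x$ has weight $\rho$ and would drive $u_{i+3}^x$'s utility far below $0$. Thus $u_1^x, u_2^x, u_3^x$ must occupy three pairwise distinct coalitions $P_1^x, P_2^x, P_3^x$ with $\{u_i^x, u_{i+3}^x\} \subseteq P_i^x$ for each $i$.

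Finally I will show that $\{u_4^x, u_7^x\}$ is a blocking coalition. A short case analysis using the $\rho$-adjacencies inside the vertex gadget will give $P_1^x = \{u_1^x, u_4^x\}$: the vertices $u_2^x, u_3^x, u_5^x, u_6^x$ sit in $P_2^x \cup P_3^x$, while $u_7^x, u_8^x, u_9^x$ are each $\rho$-adjacent to $u_1^x$ (whose utility must stay nonnegative). Hence $\mathrm{ut}_{\mathcal{P}}(u_4^x) = 14$. Symmetrically, the coalition containing $u_7^x$ must avoid $u_1^x, \ldots, u_6^x$ (all of which sit in $P_1^x \cup P_2^x \cup P_3^x$) and cannot contain either of $u_8^x, u_9^x$ (pairwise $\rho$-adjacent to $u_7^x$); since $u_7^x$ has no positive-weight external neighbours, its coalition reduces to $\{u_7^x\}$ and $\mathrm{ut}_{\mathcal{P}}(u_7^x) = 0$. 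The pair $\{u_4^x, u_7^x\}$ would give both vertices utility $50$, a strict improvement over $14$ and $0$ respectively, so it is a blocking coalition of size $2 \leq k$ and yields the desired contradiction. We may therefore conclude that some $u_{j_x}^x$ has utility at most $4$.
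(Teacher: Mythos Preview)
Your proof is correct and takes a genuinely different route from the paper's.

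The paper argues by case analysis on the shape of the restricted partition $\mathcal{P}^x$ of $\{u_1^x,\ldots,u_9^x\}$: either some $B_j=\{u_{3+j}^x,u_1^x,u_2^x,u_3^x\}$ is a part (in which case $u_j^x$ has utility exactly $14-5-5=4$), or no $B_j$ is a part, in which case the paper argues that each $\{u_{3+j}^x,u_{6+j}^x\}$ must itself be a part (else this pair would be a blocking coalition of size~$2$), and consequently every $u_j^x$ has utility at most~$0$.

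You instead argue by contradiction: assuming $\mathrm{ut}_{\mathcal P}(u_i^x)\ge 5$ for all $i\in[3]$, you pin down the partition on $\{u_1^x,\ldots,u_9^x\}$ completely (each $P_i^x=\{u_i^x,u_{i+3}^x\}$ and each $u_{6+i}^x$ a singleton) and exhibit the blocking pair $\{u_4^x,u_7^x\}$. Both arguments rest on the same first step (the auxiliary-gadget forcing $\{h_i^e,v_i^e\}\in\mathcal P$, which as you rightly note needs the $k$-core version with $k\ge 3$), and the blocking pair you find is precisely the one the paper would invoke in its second case for $j=1$. Your route is arguably tighter: the paper's second case requires bounding the current utilities of both $u_{3+j}^x$ and $u_{6+j}^x$ to conclude that $\{u_{3+j}^x,u_{6+j}^x\}$ blocks, and this is stated rather loosely there, whereas you avoid that by nailing down the partition exactly. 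One small remark: in your step~4, the phrase ``would drive $u_{i+3}^x$'s utility far below~$0$'' tacitly assumes $u_7^x,u_8^x,u_9^x\notin P_i^x$ (otherwise it is $u_i^x$ whose utility goes negative); you supply this in step~6, so the logic is sound, just slightly out of order.
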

	\begin{proof}
		Let $x\in V(G)$ be arbitrary. First note that for every edge $e\in E(G)$ and every $i\in [3]$ the set $\{v_i^e,h_i^e\}\in \mathcal{P}$ by Lemma~\ref{lem:auxiliaryNeighborhoodGadget}. Combining this with the assumption that $\widehat{G}[P]$ is connected for every $p\in \mathcal{P}$  we obtain that every $P\in \mathcal{P}$ containing a vertex $u_i^x$, $i\in [9]$ must be contained in $\{u_i^x:i\in [9]\}$. Hence, we let $\mathcal{P}^x\subseteq \mathcal{P}$ be the partition of $\{u_i^x:i\in [9]\}$ obtained by restricting $\mathcal{P}$ to sets containing vertices $u_i^x$, $i\in [9]$. Note that $\mathcal{P}^x$ must be $k$-core stable since any blocking coalition of $\mathcal{P}^x$ is a blocking coalition of $\mathcal{P}$.
		
		For $j\in [3]$ we let $B_j$ be the set $\{u_{3+j}^x,u_1^x,u_2^x,u_3^x\}$. First observe that if $B_j\subseteq P$ for some $P\in \mathcal{P}^x$ then $B_j=P$ since otherwise $\widehat{G}[P]$ has to contain an edge of weight $\rho$ implying that there is a blocking coalition of size $1\leq k$. In case there is a $j\in [3]$ such that $B_j\in \mathcal{P}^x$ we choose $j_x$ to be $j$ and observe that this choice satisfies the requirements stated in the claim. 
		
		Now consider the case that there is no $j\in [3]$ for which $B_j\in \mathcal{P}^x$. First note that in this case for every $j\in [3]$ the set $\{u_{3+j}^x,u_{6+j}^x\}$ must be in $\mathcal{P}^x$. If this is not the case then $\{u_{6+j}^x\}\in \mathcal{P}^x$  can have utility at most $40$ in $\mathcal{P}^x$ by assumption that $B_j\notin \mathcal{P}^x$. But then $\{u_{3+j}^x,u_{6+j}^x\}$ is a blocking coalition of $\mathcal{P}^x$ of size at most $k$ which is a contradiction.   This implies that for every $j\in [3]$ the utility of $u_j^x$ can be at most $0$ in $\mathcal{P}^x$. Hence we can choose $j_x\in [3]$ arbitrarily. 
	\end{proof}
	We can now define a $3$-coloring $c:V(G)\rightarrow [3]$ of $G$. For every vertex $x\in V(G)$ we let $j_x\in [3]$ as in Claim~\ref{claim:colorVertexGadget} and  set $c(x):=j_x$. First note that by Lemma~\ref{lem:auxiliaryNeighborhoodGadget} $v_i^e$ has utility $9$ in $\mathcal{P}$ for every edge $e\in E(G)$, $i\in [3]$.
	Additionally, if there is an edge $e=xy\in E(G)$ for which $j:=j_x=c(x)=c(y)=j_y$ then both $u_{j}^x$ and $u_{j}^y$ have utility at most $4$ in $\mathcal{P}$ by Claim~\ref{claim:colorVertexGadget}. But then $\{u_{j}^x,u_{j}^y,v_{j}^e\}$ is a blocking coalition of $\mathcal{P}$ of size at most $k$. Hence the coloring $c$ is proper and $G$ is a YES-instance of $3$-\textsc{Coloring}.
\end{appendixproof}

\section{Conclusion}

The general tenor of our results indicates that core stability is an
algorithmically highly intractable notion: even for very restricted input
structures, obtaining efficient algorithms seems out of reach; and even for the
few cases where positive fixed-parameter tractability results can be obtained,
complexity lower bounds still push the parameter dependence to prohibitive
levels. Despite the above, we believe that a promising avenue for further
research may be the further investigation of $k$-core stability. Even though we
have shown that parameterizing the problem by $k$ alone does not help, it would
be interesting to ask whether parameterizing at the same time by both $k$ and a
structural parameter (such as treewidth) could help us evade the lower bounds
that apply to each case individually.
Finally, investigating the parameterized complexity of core stability in other variants of hedonic games such as fractional hedonic games~\cite{AzizBBHOP19,HanakaIO23,FanelliMM21} is another promising direction.

%%
%% Bibliography
%%

%% Please use bibtex, 

\bibliography{ref}

\appendix

\end{document}